\algnewcommand{\algorithmicgoto}{\textbf{go to}}%
\algnewcommand{\Goto}[1]{\algorithmicgoto~\ref{#1}}%
\newtheorem{definition}{Definition}
\newtheorem{theorem}{Theorem}%
\newtheorem{example}{Example}%
\newtheorem{remark}{Remark}%
\newtheorem{proof}{Proof}%
\newtheorem{proposition}{Proposition}%
\newenvironment{keywords}{
	\vspace{1em}\noindent\textbf{Keywords:}\ } %
{}
\begin{document}

\newgeometry{top=72pt,bottom=54pt,right=54pt,left=54pt}

\title{Fastest Mixing Reversible Markov Chain: Clique Lifted Graphs \& Subgraphs}
\author{Saber Jafarizadeh \\
	Member~IEEE \\
	Rakuten Institute of Technology, Rakuten Crimson House, Tokyo, Japan \\
	\texttt{saber.jafarizadeh@rakuten.com}}

\date{}
\maketitle
\thispagestyle{empty} %

\bibliographystyle{plain}

\begin{abstract}

Markov chains are one of the well-known tools for modeling and analyzing stochastic systems. 
At the same time, they are used for constructing random walks that can achieve a given stationary distribution.
This paper is concerned with determining the transition probabilities that optimize the mixing time of the reversible Markov chains towards a given equilibrium distribution.
This problem is referred to as the Fastest Mixing Reversible Markov Chain (FMRMC) problem.
It
is shown that for a given base graph and its clique lifted graph, the FMRMC problem over the clique lifted graph is reducible to the FMRMC problem over the base graph, while the optimal
mixing times
on both graphs are identical.
Based on this result and the solution of the semidefinite programming formulation of the FMRMC problem, the problem has been addressed over a wide variety of topologies with the same base graph.
Second, the general form of the FMRMC problem is addressed on stand-alone topologies as well as subgraphs of an arbitrary graph.
For subgraphs, it is shown that the optimal transition probabilities over edges of the subgraph can be determined independent of rest of the topology.

\end{abstract}

\begin{keywords}
Reversible Markov chains, Lifted graphs, Optimal mixing time, Semidefinite Programming
\end{keywords}

\section{Introduction}
\label{sec:Introduction}

Markov chains have been the focus of many studies due to their application in different fields including biology, chemistry, physics, statistics and computer science \cite{7492191,Seismic2015,7469382,Metropolis1953,Kirkpatrick1983}.
Within the context of Markov Chain Monte Carlo methods, reversible Markov chains \cite{AldousBook2} have been used for sampling from a given probability distribution \cite{Stutzbach2009,Gjoka2011}.
Also, in networking applications such as network storage \cite{Lin2007}, information dissemination \cite{Vukobratovic2010}, randomized search \cite{Zhong2006} and random membership subset management \cite{Zhong2008},
reversible Markov chains are used for constructing random walks on a graph that can achieve a given stationary distribution.

An important parameter of a Markov chain is its mixing rate and its optimization problem is known as the Fastest Mixing Markov Chain (FMMC) problem.
The main goal, in the FMMC problem is to design the transition probabilities of the Markov chain such that its convergence towards the (given) target equilibrium distribution is optimized, while the transition probabilities are restricted according to the topology of network.
In \cite{BoydFastestmixing2003}, it is shown that the asymptotic convergence rate of the Markov chain to the equilibrium distribution is determined by the Second Largest Eigenvalue Modulus ($SLEM$) of the transition probability matrix.
Different approaches have been adopted to solve the FMMC problem.
In \cite{Lawler1988,Diaconis1991}, the spectrum of Markov Chains has been bounded.
Authors in \cite{BoydFastestmixing2003} have formulated the FMMC problem as a Semidefinite Program (SDP), where
for the case of uniform equilibrium distribution, symmetry of the underlying topology has been exploited for solving the SDP formulation of the FMMC problem \cite{BoydFastestMixing2009,Boyd2005SymmetryAnalysisFMMC,Boyd2006FMMCPath,Jafarizadeh2011,SaberThesis2015,Allisona2011}.
Authors in \cite{Fill2013} have introduced stochastically monotone Markov kernels and by taking advantage of the monotonicity of mixing times, they have solved the birth and death problem.
\cite{Cihan2015} solves the FMMC problem for the special case of edge-transitive and distance-transitive graphs where the equilibrium distribution is proportional to the degree of vertices.

This paper studies the FMMC problem for reversible Markov chains
(referred to as the FMRMC problem)
on
undirected
graphs with given equilibrium distribution which is not necessarily uniform.
We have addressed the FMRMC problem in its general form, by rewriting the original optimization problem in terms of the symmetric Laplacian, and deriving its corresponding SDP problem.

Following are the main contributions of the this paper.

\hspace{-13pt} $\bullet$
We have shown that
for a given base graph and its clique lifted graph,
the FMRMC problem
(with any given equilibrium distribution)
over the clique lifted graph
is reducible to
the FMRMC problem over the base graph,
while the optimal $SLEM$ of the clique lifted graph (and thus its optimal convergence rate) is equal to that of the base graph.
This has enabled us to address the FMRMC problem over a wide variety of topologies which share a common base graph.
We have also shown that tight interlacing holds between the eigenvalues of the optimal transition probability matrices corresponding to a clique lifted graph and its base graph.

\hspace{-13pt} $\bullet$
For
families
of
clique lifted
subgraphs
with path subgraph as their base graph,
given that it is connected to one vertex in the rest of the topology,
we have provided the optimal transition probabilities of the FMRMC problem, in the subgraph
and we have shown that they are obtained locally, i.e. independent of the rest of the topology.
Similar results have been achieved for palm subgraph, even though it is not a clique lifted subgraph,
and interestingly,
it is shown that the optimal transition probabilities from every vertex in the star part of the subgraphs to the central vertex are all equal.

\hspace{-13pt} $\bullet$
By addressing the FMRMC problem over different types of topologies with path graph as their base graph and with arbitrary (not necessarily symmetric) equilibrium distribution, we have determined the optimal transition probabilities and provided the algorithm for calculating the optimal $SLEM$ over these topologies.
For topologies, other than Path (with $3$ vertices) and Star topologies, the optimal results are presented for the cases where the optimal answer is obtained from the interior of the domain.
Based on the results provided for star topology, the FMRMC problem for all equilibrium distributions has been addressed over both windmill and friendship graphs.
Since both of them are clique lifted of star graph.

\hspace{-13pt} $\bullet$
Furthermore, we have considered the FMRMC problem over symmetric tree, symmetric star and complete cored symmetric star topologies assuming that the symmetric pattern of the equilibrium distribution is maintained.
By exploiting their symmetry, we have determined the optimal transition probabilities and provided the algorithm for calculating the optimal $SLEM$ over these topologies.
The results presented for symmetric tree topology of depth two cover the whole domain.
Furthermore,
in the case of symmetric tree topology with arbitrary depth,
we have been provided the
optimal transition probabilities
for the case where the diagonal elements of transition probability matrix are zero.

\hspace{-13pt} $\bullet$
Via numerical examples, for every studied topology and subgraph, we have compared the optimal transition probabilities with Metropolis transition probabilities.
This comparison has been done in terms of convergence rate.

General description of the FMRMC problem has been introduced in Section \ref{sec:FMRMCsection}.
In section \ref{sec:FMRMCSymmetricLaplacian}, the FMMC problem has been formulated as an optimization problem with symmetric optimization variables (i.e. the symmetric Laplacian).
Section \ref{sec:FMRMCCliqueLiftGraphs} addresses the FMRMC problem over clique lifted graphs, where its semidefinite programming formulation has been presented in Section \ref{sec:FMRMCCliqueLiftGraphsSDP}.
The optimal answer of FMRMC problem over different types of subgraphs has been provided in Section \ref{sec:Branches},
The optimal results for individual topologies with arbitrary and symmetric equilibrium distributions, have been presented in Section \ref{sec:FMMCPathBasedResults} and Section  \ref{sec:FMMCSymmetricResults}, respectively.
Section \ref{sec:Conclusions} concludes the paper.

\section{Fastest Mixing Reversible Markov Chain}
\label{sec:FMRMCsection}

We consider
an undirected
connected graph $\mathcal{G}=\left( \mathcal{V}, \mathcal{E} \right)$ consisting of the vertex set $\mathcal{V} = \{ 1, \ldots, N \}$ and the edge set $\mathcal{E} \subseteq \mathcal{V} \times \mathcal{V}$ where each edge $\{i,j\} \in \mathcal{E}$ is an ordered pair of distinct nodes with $\{i,j\} \in \mathcal{E} \Leftrightarrow \{j,i\} \in \mathcal{E}$.
The adjacency matrix of $\mathcal{G}$ can be written as below,
\begin{equation}
	\label{eq:Eq20171227333}
	\begin{gathered}
		\boldsymbol{A}_{\mathcal{G}} \left( i, j \right)
		=
		\begin{cases}
			1 &\text{if} \;\; \{i,j\} \in \mathcal{E},
			\\
			0 &\text{Otherwise.}
		\end{cases}
	\end{gathered}
\end{equation}

Markov chain over the graph can be defined by associating the state of Markov chain with vertices in the graph, i.e. $X(t) \in \mathcal{V}$, for $t=0,1,...$.
Correspondingly, transition of Markov chain between two adjacent states (vertices), is modelled by the edges in the graph, where each edge $\{i,j\}$ is associated with a transition probability $\boldsymbol{P}_{i,j}$.
For the Markov chain to be valid, the transition probability of each edge should be nonnegative (i.e. $\boldsymbol{P}_{i,j} \geq 0$) and the sum of the probabilities of outgoing edges of each vertex (including the self-loop edge) should be one.
A fundamental property of the irreducible Markov chains is the unique equilibrium distribution $(\pi)$, that satisfies
$\pi_{j}  =  \sum\nolimits_{i} \pi_{i} \boldsymbol{P}_{i,j}$ for all $j$ \cite{AldousBook2}.
The convergence theorem \cite{AldousBook2} states that for any initial distribution $P\left(X(t) = j \right) \rightarrow \pi_{j}$ as $t \rightarrow \infty$, for all $j$, provided the chain is aperiodic and irreducible.
The Markov chain is called reversible if the detailed balance equations hold \cite{AldousBook2}, i.e.
\begin{equation}
	\label{eq:Eq20171216220}
	\begin{gathered}
		\pi_{i} \boldsymbol{P}_{i,j}  =  \pi_{j} \boldsymbol{P}_{j,i} \;\; \text{for all} \;\; i,j.
	\end{gathered}
\end{equation}
The asymptotic convergence rate of the Markov chain to the equilibrium distribution $(\pi)$ is determined by the Second Largest Eigenvalue Modulus (SLEM) of the transition probability matrix \cite{BoydFastestmixing2003}, defined as
\begin{equation}
	\label{eq:Eq20171216227}
	\begin{gathered}
		\mu \left( \boldsymbol{P} \right)
		=
		\max_{i=2, ..., N} \left| \lambda_{i} \left( \boldsymbol{P} \right) \right|
		=
		\max\{ \lambda_{2} \left( \boldsymbol{P} \right), -\lambda_{N} \left( \boldsymbol{P} \right) \}.
	\end{gathered}
\end{equation}
where
$-1 < \lambda_N ( \boldsymbol{P} ) \leq \cdots \leq  \lambda_2 ( \boldsymbol{P} )  < \lambda_1 ( \boldsymbol{P} ) = 1$
are the eigenvalues of $\boldsymbol{P}$ in nonincreasing order,
which are real and no more than one in magnitude \cite{AldousBook2}.
Smaller SLEM results in faster converges to the equilibrium distribution.
The quantity $\log\left( 1 / \mu \left( \boldsymbol{P} \right)  \right)$ is referred to as the mixing rate, and $\tau = 1 / \log\left( 1 / \mu \left( \boldsymbol{P} \right)  \right)$ as the mixing time \cite{BoydFastestmixing2003}.
The FMMC problem for reversible Markov chains can be interpreted as the following optimization problem.
\begin{equation}
	\label{eq:Eq201712161073}
	\begin{aligned}
		\min\limits_{\boldsymbol{P}}
		\;\;
		&\mu\left(\boldsymbol{P}\right),
		\\
		s.t.
		\quad
		&
		\boldsymbol{P} \geq 0,
		\;\;
		\boldsymbol{P} \boldsymbol{1} = \boldsymbol{1},
		\;\;
		\boldsymbol{\pi}^{T} \boldsymbol{P} = \boldsymbol{\pi}^{T},
		\;\;
		\boldsymbol{D} \boldsymbol{P} = \boldsymbol{P}^{T} \boldsymbol{D},
		\\
		&
		\boldsymbol{P}_{i,j} = 0 \; \text{for} \; \{i,j\} \notin \mathcal{E}.
	\end{aligned}
\end{equation}
where $\boldsymbol{P} \geq 0$ is element-wise,
$\boldsymbol{D} = diag\left( \pi_{1}, \pi_{2}, ..., \pi_{N} \right)$,
$\boldsymbol{1}$ is the vector of all one
and
$\boldsymbol{\pi} = [ \pi_{1}, \pi_{2}, ..., \pi_{N} ]^{T}$.
Note that the only constraint on $\pi_{i}$ is that $\pi_{i} > 0$ for $i=1, ..., N$ and there is no constraint for their summation to be equal to one.

\subsubsection{Metropolis Transition Probabilities}
\label{sec:Metropolis}
Another non-optimal choice for transition probabilities is the Metropolis transition probability
\begin{equation}
	\label{eq:Eq20191015-metropolis-transition-probability}
	\begin{gathered}
		\boldsymbol{P}_{i,j} =
		\begin{cases}
			\frac{ 1 }{ d_{i} } \min\{ \frac{ \boldsymbol{\pi}_{j} d_{i} }{ \boldsymbol{\pi}_{i} d_{j} } , 1 \} 
			\quad\; \text{if} \quad (i,j) \in \mathcal{E},
			\\
			1 - \sum_{j \in \mathcal{N}(i)} \boldsymbol{P}_{i,j} 
			\;\; \text{if} \quad i=j,
			\\
			0 \qquad\qquad\qquad\quad \;\; \text{otherwise},
		\end{cases}
	\end{gathered}
\end{equation}
where $\mathcal{N}(i)$ is the set of neighbors of $i$-th vertex.
The chain is guaranteed to be aperiodic if there is at least one vertex $i \in \mathcal{V}$ such that $\frac{  \boldsymbol{\pi}_{j} d_{i}  }{  \boldsymbol{\pi}_{i} d_{j}  } < 1 (\Rightarrow \boldsymbol{P}_{i,i} > 0)$ i.e. it is not the case that for all $i,j \in \mathcal{V}$: 
$ \frac{ \boldsymbol{\pi}_{i} }{ d_{i} } = \frac{ \boldsymbol{\pi}_{j} }{ d_{j} } =$ const.
This is guaranteed by setting the equilibrium distribution to $\boldsymbol{\pi}_{i} = d_{i} + 1$ for $i=1,...,N$.

\subsection{FMMC in terms of Symmetric Laplacian}
\label{sec:FMRMCSymmetricLaplacian}
In
\cite[Subsection~2.3]{JAFARIZADEH201913},
the FMRMC problem has been formulated as an optimization problem in terms of the symmetric Laplacian.
Following the same procedure, the FMMC problem
(\ref{eq:Eq201712161073})
can be formulated as the following optimization problem,
\begin{equation}
	\label{eq:Eq201712231257}
	\begin{aligned}
		\min\limits_{q_{i,j}}
		\;\;
		&
		\mu\left( \boldsymbol{I} - \boldsymbol{D}^{-1} \boldsymbol{L}\left( q \right) \right),
		\\
		s.t.
		\quad
		&
		q_{i,j} = q_{j,i} \geq 0,
		\;\;
		q_{i,j} = 0 \; \text{for} \; \{i,j\} \notin \mathcal{E},
		\\
		&
		- \sum\nolimits_{k \neq i} q_{ik} + \pi_{i} \geq 0 \;\; \text{for} \;\; i=1, ..., N.
	\end{aligned}
\end{equation}
where
$q_{i,j}$ are the weights defined as
$q_{i,j} = \pi_{i} \boldsymbol{P}_{i,j} = \pi_{j} \boldsymbol{P}_{j,i} = q_{j,i}$,
and
$\boldsymbol{L}(q) = \sum\nolimits_{ \{i,j\} \in \mathcal{E} } q_{ij} \left( \boldsymbol{e}_{i} - \boldsymbol{e}_{j} \right) \left( \boldsymbol{e}_{i} - \boldsymbol{e}_{j} \right)^{T}$
is the symmetric Laplacian matrix.
$\boldsymbol{e}_{i}$ for $i \in \mathcal{V}$ is a column vector with $i$-th element equal to $1$ and zero elsewhere.
Using the derivation in
(\ref{eq:Eq201712231257}),
the constraints $\boldsymbol{P} \boldsymbol{1} = \boldsymbol{1}$ and $\boldsymbol{\pi}^{T} \boldsymbol{P} = \boldsymbol{\pi}^{T}$ in
(\ref{eq:Eq201712161073}) 
are automatically satisfied.

\section{Fastest Mixing Reversible Markov Chain on Clique Lift of a Graph}
\label{sec:FMRMCCliqueLiftGraphs}

In this section, we address the FMRMC problem over clique lifted graphs.

\begin{definition}
	\label{LiftDefinition}
	Given
	graph $\mathcal{G}=\{ \mathcal{V}, \mathcal{E} \}$,
	graph $\widetilde{\mathcal{G}} = \{ \widetilde{\mathcal{V}}, \widetilde{\mathcal{E}} \}$
	is the lift of $\mathcal{G}$ if there exists a surjection $f: \widetilde{\mathcal{V}} \rightarrow \mathcal{V}$, such that
	$(i, j) \in \widetilde{\mathcal{E}}$
	implies that
	$( f(i), f(j) ) \in \mathcal{E}$.
	The graph $\mathcal{G}$ is referred to as the base graph and for every vertex $v \in \mathcal{V}$, the $f^{-1} \left( v \right)$ is called the fiber of $v$ \cite{Godsil2004}.
\end{definition}

\subsection{Clique Lift of a Graph}

Clique lift of a given graph $\mathcal{G}$ is obtained by replacing every node in $\mathcal{G}$ by a clique of size $m_{i} > 0$, such that if $\{i,j\} \in \mathcal{E}$ then all nodes in their corresponding cliques are connected to each other as well.
Clique Lift of a graph $\mathcal{G} = \left( \mathcal{V}, \mathcal{E} \right)$ is denoted by $\widetilde{\mathcal{G}}=\left( \widetilde{\mathcal{V}}, \widetilde{\mathcal{E}} \right)$,
where $\mathcal{G}$ is the base graph and
the fiber of $i$-th vertex in $\mathcal{G}$ is a complete graph (or clique) of size $m_{i}$.
Three different clique lift graphs of a path graph with three vertices are depicted in Fig. \ref{fig:LiftPath3}.
The adjacency matrix of the base graph $\boldsymbol{A}_{\mathcal{G}}$, given in (\ref{eq:Eq20171227333}), can also be written as
$\boldsymbol{A}_{\mathcal{G}}  =  \sum\nolimits_{i \in \mathcal{V}} \sum\nolimits_{j \in \mathcal{N}_{i}} \boldsymbol{E}_{i,j}  = \sum\nolimits_{\{i,j\} \in \mathcal{E}} \boldsymbol{E}_{i,j}$,
where $\mathcal{N}_{i}$ is the set of neighbors of node $i$ and $\boldsymbol{E}_{i,j}$ is a $|\mathcal{V}| \times |\mathcal{V}|$ matrix with element on $i$-th row and $j$-th column equal to $1$ and zero elsewhere.
The adjacency matrix of $\widetilde{\mathcal{G}}$ is a $\sum_{i=1}^{|\mathcal{V}|} m_{i} \times \sum_{i=1}^{|\mathcal{V}|} m_{i}$ matrix with $|\mathcal{V}|$ partitions, where the $(i,j)$-th partition is of size $m_{i} \times m_{j}$ for $i, j = 1, ..., |\mathcal{V}|$.
The adjacency matrix $\boldsymbol{A}_{\widetilde{\mathcal{G}}}$ can be written as below
\begin{subequations}
	\label{eq:Eq20180102493}
	\begin{gather}
		\boldsymbol{A}_{\widetilde{\mathcal{G}}} \left[ i, i \right]
		=
		\boldsymbol{J}_{m_{i}, m_{i}} - \boldsymbol{I}_{m_{i}},
		\label{eq:Eq20180102493a}
		\\
		\boldsymbol{A}_{\widetilde{\mathcal{G}}} \left[ i, j \right]
		=
		\begin{cases}
			\boldsymbol{J}_{m_{i}, m_{j}} \;\; \text{for} \;\; \{i, j\} \in \mathcal{E}
			\\
			\boldsymbol{0} \;\; \text{otherwise}
		\end{cases}
		\label{eq:Eq20180102493b}
	\end{gather}
\end{subequations}
where
in (\ref{eq:Eq20180102493a}), $\boldsymbol{A}_{\widetilde{\mathcal{G}}} \left[ i, i \right]$ for $i = 1, ..., |\mathcal{V}|$ denotes the diagonal partitions
and
in (\ref{eq:Eq20180102493b}), $\boldsymbol{A}_{\widetilde{\mathcal{G}}} \left[ i, j \right]$ for $i \neq j = 1, ..., |\mathcal{V}|$, denotes the off-diagonal partitions.
$\boldsymbol{J}_{m_{i}, m_{j}}$ is a $m_{i} \times m_{j}$ matrix with all elements equal to one,
$\boldsymbol{I}_{m_{i}}$ is the identity matrix of size $m_{i}$,
and $m_{i}$ is the size of clique replacing $i$-th node in $\mathcal{G}$.
In the clique lift graph $\widetilde{\mathcal{G}}$, we denote each vertex by $(i, \alpha)$,
which refers to $\alpha$-th vertex within $i$-th partition (or clique), for $i = 1, ..., |\mathcal{V}|$ and $\alpha = 1, ..., m_{i}$.

\subsection{FMRMC Problem on Clique Lift of a Graph}
\label{sec:FMRMCCliqueLiftGraphs-MainTheorem}

\begin{theorem}
	\label{CauchyInterlacingTheorem}
	Given the base graph $\mathcal{G}$
	(with equilibrium distribution $\pi_{i}$)
	and its clique lifted graph $\widetilde{\mathcal{G}}$
	(with equilibrium distribution $\widetilde{\pi}_{i, \alpha}$),
	the FMRMC problem
	(with any given equilibrium distribution)
	over the clique lifted graph $\widetilde{\mathcal{G}}$
	is reducible to
	the FMRMC problem over the base graph $\mathcal{G}$
	where
	the optimal $SLEM$ of the clique lifted graph $\widetilde{\mathcal{G}}$ is equal to that of the base graph $\mathcal{G}$ and
	the following relations hold between their corresponding optimal results,
	\begin{subequations}
		\label{eq:Eq201801061374}
		\begin{gather}
			\widetilde{q}(i,j)_{\alpha, \beta}
			=
			\left( q_{i,j} / \left( \pi_{i} \pi_{j} \right) \right)
			\widetilde{\pi}_{i,\alpha} \widetilde{\pi}_{j, \beta}
			\label{eq:Eq201801061374a}
			\\
			\widetilde{q}(i)_{\alpha_i, \beta_i}
			=
			( (\pi_i-\sum\nolimits_{j\in \mathcal{N}(i)} q_{i,j})\widetilde{\pi}_{i, \alpha_i} \widetilde{\pi}_{i, \beta_i} ) / \pi_i^2
			\label{eq:Eq201801061374b}
		\end{gather}
	\end{subequations}
	and
	the following
	holds between the corresponding equilibrium distributions of the base and lifted graphs.
	\begin{equation}
		\label{eq:Eq20180101663}
		\begin{gathered}
			\pi_i = \sum\nolimits_{\alpha=1}^{m_i} \widetilde{\pi}_{i, \alpha}
			\;\;
			\text{for}
			\;\;
			i \in \mathcal{V},
		\end{gathered}
	\end{equation}
	where (\ref{eq:Eq201801061374a}) holds for
	for all $\{i,j\} \in \mathcal{E}$
	and $\alpha=1, ..., m_i$, $\beta = 1, ..., m_j$
	and
	(\ref{eq:Eq201801061374b}) holds for $i \in \mathcal{V}$.
\end{theorem}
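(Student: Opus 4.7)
The plan is to bound the optimal SLEM on $\widetilde{\mathcal{G}}$ from both sides, showing $\mu^\star(\widetilde{\mathcal{G}})=\mu^\star(\mathcal{G})$, and to read off (\ref{eq:Eq201801061374a})--(\ref{eq:Eq201801061374b}) from the constructions used in each direction. The crucial observation up front is that the proposed ansatz makes every transition in the lift factor uniformly as
\[
\widetilde{\boldsymbol{P}}_{(i,\alpha),(j,\beta)}=\boldsymbol{P}_{i,j}\,\widetilde{\pi}_{j,\beta}/\pi_j,
\]
valid both for $\{i,j\}\in\mathcal{E}$ and for within-fiber pairs, once the self-loop weight $q_{i,i}=\pi_i-\sum_{j\in\mathcal{N}(i)}q_{i,j}$ has been absorbed via (\ref{eq:Eq201801061374b}) into a single rule on every $(i,\alpha)\to(i,\beta)$ transition. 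First I would verify feasibility of this ansatz in (\ref{eq:Eq201712231257}): nonnegativity is termwise in the base-graph conditions $q_{i,j}\ge 0$ and $\pi_i-\sum q_{i,j}\ge 0$, while the flow constraint, after factoring $\widetilde{\pi}_{i,\alpha}/\pi_i$ out of the sum over $\beta$ and collapsing $\sum_\beta\widetilde{\pi}_{j,\beta}=\pi_j$ via (\ref{eq:Eq20180101663}), reduces exactly to the saturated base-graph identity $(\pi_i-\sum_{j\in\mathcal{N}(i)}q_{i,j})+\sum_{j\in\mathcal{N}(i)} q_{i,j}=\pi_i$.

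For the \textbf{achievability direction} $\mu^\star(\widetilde{\mathcal{G}})\le \mu^\star(\mathcal{G})$, the factored form exhibits the $(i,j)$-block of $\widetilde{\boldsymbol{P}}$ as the rank-one matrix $\boldsymbol{P}_{i,j}\,\boldsymbol{1}_{m_i}\boldsymbol{v}_j^{T}$ with $\boldsymbol{v}_j\in\mathbb{R}^{m_j}$ the probability vector whose $\beta$-th entry is $\widetilde{\pi}_{j,\beta}/\pi_j$. A blockwise eigenvalue analysis then splits the spectrum cleanly: any $\boldsymbol{u}$ with $\langle\boldsymbol{v}_j,\boldsymbol{u}|_j\rangle=0$ on every fiber lies in $\ker\widetilde{\boldsymbol{P}}$ and accounts for $\sum m_i-|\mathcal{V}|$ zero eigenvalues, whereas any eigenvector with nonzero eigenvalue is forced to be constant on each fiber, $\boldsymbol{u}|_i=a_i\boldsymbol{1}_{m_i}$, and the reduced equation on $(a_i)$ is precisely $\boldsymbol{P}\boldsymbol{a}=\lambda\boldsymbol{a}$. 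Thus $\widetilde{\boldsymbol{P}}$ inherits the spectrum of $\boldsymbol{P}$ padded with zeros, and since the SLEM is nonnegative, $\mu(\widetilde{\boldsymbol{P}})=\mu(\boldsymbol{P})$.

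For the \textbf{converse direction} $\mu^\star(\mathcal{G})\le\mu^\star(\widetilde{\mathcal{G}})$, my approach is the Cauchy (Poincar\'e) interlacing suggested by the theorem's label. I would work with the symmetrization $\widetilde{\boldsymbol{S}}=\widetilde{\boldsymbol{D}}^{1/2}\widetilde{\boldsymbol{P}}\widetilde{\boldsymbol{D}}^{-1/2}$, which shares its spectrum with $\widetilde{\boldsymbol{P}}$ and has entries $\widetilde{q}(i,j)_{\alpha,\beta}/\sqrt{\widetilde{\pi}_{i,\alpha}\widetilde{\pi}_{j,\beta}}$, and project it onto the $|\mathcal{V}|$-dimensional subspace whose $i$-th orthonormal basis vector carries $\sqrt{\widetilde{\pi}_{i,\alpha}/\pi_i}$ on fiber $i$ and zero elsewhere. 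Assembling those vectors into $U$, a short calculation gives
\[
(U^T\widetilde{\boldsymbol{S}}U)_{i,j}=\frac{1}{\sqrt{\pi_i\pi_j}}\sum_{\alpha,\beta}\widetilde{q}(i,j)_{\alpha,\beta},
\]
which is the symmetrized kernel of the \emph{collapsed} chain on $\mathcal{G}$ with weights $\bar{q}_{i,j}:=\sum_{\alpha,\beta}\widetilde{q}(i,j)_{\alpha,\beta}$ and equilibrium (\ref{eq:Eq20180101663}); summing the lifted flow constraints over $\alpha$ shows $\bar{q}$ is feasible in (\ref{eq:Eq201712231257}), and the Perron eigenvalue is $1$ because $U\sqrt{\pi}=\sqrt{\widetilde{\pi}}$. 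Poincar\'e separation then delivers $\lambda_2(U^T\widetilde{\boldsymbol{S}}U)\le\lambda_2(\widetilde{\boldsymbol{S}})$ and $\lambda_{|\mathcal{V}|}(U^T\widetilde{\boldsymbol{S}}U)\ge\lambda_{\sum m_i}(\widetilde{\boldsymbol{S}})$, hence $\mu(U^T\widetilde{\boldsymbol{S}}U)\le\mu(\widetilde{\boldsymbol{P}})$, closing the loop.

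The main obstacle I anticipate is bookkeeping rather than conceptual: keeping the base indices $i$ and the doubled indices $(i,\alpha)$ cleanly separated, and checking carefully that the single within-fiber rule in the ansatz simultaneously encodes the self-loop $q_{i,i}$ and the $\binom{m_i}{2}$ clique edges so that the rank-one block structure is genuinely uniform (and, dually, that $\bar{q}_{i,i}$ truly accounts for the total within-fiber transition mass of the lifted chain). Once these identifications are verified, the two inequalities close to equality and the explicit correspondences (\ref{eq:Eq201801061374a})--(\ref{eq:Eq201801061374b}) are exactly the map realized by the achievability construction.
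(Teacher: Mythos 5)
Your proposal is correct, and at the strategic level it mirrors the paper's proof: one direction by exhibiting the ansatz (\ref{eq:Eq201801061374a})--(\ref{eq:Eq201801061374b}) that makes every block of $\widetilde{\boldsymbol{P}}$ rank one and forces its spectrum to equal that of $\boldsymbol{P}$ padded with zeros, the other direction by compressing the lifted chain fiber-by-fiber onto a base chain and invoking Cauchy interlacing. The execution differs in two substantive and, in my view, advantageous ways. First, for the interlacing direction the paper applies the compression $\boldsymbol{\Lambda}(\cdot)\boldsymbol{\Lambda}^{T}$ with \emph{uniform} weights $1/\sqrt{m_i}$ directly to the non-symmetric matrix $\widetilde{\boldsymbol{P}}$ and then invokes the Cauchy interlacing theorem, which strictly speaking requires a Hermitian argument; moreover, the resulting collapsed weights $q_{i,j}/\pi_i=(\sum_{\alpha,\beta}\widetilde{q}(i,j)_{\alpha,\beta}/\widetilde{\pi}_{i,\alpha})/m_i$ in (\ref{eq:Eq20180101678ww}) are not manifestly symmetric in $i,j$, so it is not immediate that the compressed object is a feasible reversible base chain. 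Your choice to symmetrize first via $\widetilde{\boldsymbol{D}}^{1/2}\widetilde{\boldsymbol{P}}\widetilde{\boldsymbol{D}}^{-1/2}$ and to compress with the $\pi$-weighted orthonormal vectors $\sqrt{\widetilde{\pi}_{i,\alpha}/\pi_i}$ fixes both issues at once: interlacing is legitimately applicable, the collapsed weights $\bar{q}_{i,j}=\sum_{\alpha,\beta}\widetilde{q}(i,j)_{\alpha,\beta}$ are symmetric and feasible by summing the lifted constraints over $\alpha$, and the Perron eigenvalue is preserved because $U\sqrt{\pi}=\sqrt{\widetilde{\pi}}$. Second, for the achievability direction the paper diagonalizes via a block DFT in each fiber and sets the spurious eigenvalues $s_i$ to zero, whereas you decompose $\mathbb{R}^{\sum m_i}$ into the fiber-constant subspace (on which $\widetilde{\boldsymbol{P}}$ acts as $\boldsymbol{P}$) and the fiber-wise $\boldsymbol{v}_j$-orthogonal complement (which lies in the kernel); this is more elementary and avoids introducing the DFT machinery. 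One small imprecision worth tightening: the lifted flow constraint does not reduce \emph{exactly} to the saturated base identity, because the within-fiber sum excludes $\zeta=\alpha$; the correct statement is that the slack equals $\widetilde{\pi}_{i,\alpha}^{2}\left(\pi_i-\sum_{j\in\mathcal{N}(i)}q_{i,j}\right)/\pi_i^{2}\ge 0$, i.e. the absorbed self-loop mass, which is nonnegative under the base feasibility --- so feasibility still holds, but you should state it as an inequality with identified slack rather than an identity.
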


\begin{proof}
	Considering $\widetilde{\pi}_{i, \alpha}$ as the equilibrium distribution of the $(i, \alpha)$-th vertex in the clique lift graph, the diagonal matrix $\widetilde{\boldsymbol{D}}$ is defined as
	$\widetilde{\boldsymbol{D}} = diag\left( \widetilde{\boldsymbol{D}}_{1}, \widetilde{\boldsymbol{D}}_{2}, ..., \widetilde{\boldsymbol{D}}_{|\mathcal{V}|} \right)$,
	where
	$\widetilde{\boldsymbol{D}}_{i} = diag\left( \widetilde{\pi}_{i, 1}, \widetilde{\pi}_{i, 2}, ..., \widetilde{\pi}_{i, m_{i}} \right)$
	is the diagonal matrix representing $i$-th partition (or clique).
	The symmetric Laplacian of the clique lift graph $\widetilde{\mathcal{G}}$ is a $\sum_{i=1}^{|\mathcal{V}|} m_{i} \times \sum_{i=1}^{|\mathcal{V}|} m_{i}$ matrix with $|\mathcal{V}|$ partitions, where the $(i,j)$-th partition is of size $m_{i} \times m_{j}$ for $i, j = 1, ..., |\mathcal{V}|$.
	The symmetric Laplacian matrix $\boldsymbol{L} \left( \widetilde{q} \right)$ can be written as below
	\begin{subequations}
		\label{eq:Eq20180103670}
		\begin{gather}
				\boldsymbol{L} \left( \widetilde{q} \right) \left[ i, i \right] \left( \alpha, \alpha \right)
				=
				\sum\nolimits_{\substack{\zeta=1 \\ \zeta \neq \alpha}}^{m_{i}} \widetilde{q}\left( i \right)_{\alpha, \zeta}
				+
				\sum\nolimits_{j \in \mathcal{N}(i)} \sum\nolimits_{\zeta=1}^{m_{j}} \widetilde{q}\left( i, j \right)_{\alpha, \zeta},
			\label{eq:Eq20180103670a}
			\\
			\boldsymbol{L} \left( \widetilde{q} \right) \left[ i, i \right] \left( \alpha, \beta \right)
			=
			- \widetilde{q}\left( i \right)_{\alpha, \beta},
			\label{eq:Eq20180103670b}
			\\
			\boldsymbol{L} \left( \widetilde{q} \right) \left[ i, j \right] \left( \alpha, \beta \right)
			=
			- \widetilde{q}\left( i, j \right)_{\alpha, \beta},
			\label{eq:Eq20180103670c}
		\end{gather}
	\end{subequations}
	where
	$\widetilde{q}\left( i, j \right)_{\alpha, \beta}$ is the corresponding value of $q$ in the clique lifted graph between $\alpha$-th and $\beta$-th vertices in the $i$-th and $j$-th cliques, respectively, which in turn correspond to $i$-th and $j$-th nodes in the base graph.
	Similarly,
	$\widetilde{q}\left( i \right)_{\alpha, \beta}$
	is the corresponding value of $q$ in the clique lifted graph between $\alpha$-th and $\beta$-th vertices in the $i$-th clique.
	In (\ref{eq:Eq20180103670a}) and (\ref{eq:Eq20180103670b}), $\alpha \neq \beta = 1, ..., m_{i}$ and $\boldsymbol{L} \left( \widetilde{q} \right) \left[ i, i \right]$ for $i \in \mathcal{V}$ denotes the diagonal partitions.
	In (\ref{eq:Eq20180103670c}), $\boldsymbol{L} \left( \widetilde{q} \right) \left[ i, j \right]$ for $\{i, j\} \in \mathcal{E}$ denotes the off-diagonal partitions and $\alpha = 1, ..., m_{i}$,  and  $\beta = 1, ..., m_{j}$.
	The transition probability matrix $( \widetilde{\boldsymbol{P}} )=\boldsymbol{I} - \widetilde{\boldsymbol{D}}^{-1}\boldsymbol{L}( \widetilde{q} )$ of the clique lift graph is as below,
	\begin{subequations}
		\label{eq:Eq20180103771}
		\begin{gather}
				\widetilde{\boldsymbol{P}} \left[ i, i \right] \left( \alpha, \alpha \right)
				=1
				-\sum\nolimits_{\substack{\zeta=1 \\ \zeta \neq \alpha}}^{m_{i}}\frac{ \widetilde{q}\left( i \right)_{\alpha, \zeta}}{\widetilde{\pi}_{i,\alpha}}
				-
				\sum\nolimits_{j \in \mathcal{N}(i)} \sum\nolimits_{\zeta=1}^{m_{j}} \frac{\widetilde{q}\left( i, j \right)_{\alpha, \zeta}}{\widetilde{\pi}_{i,\alpha}}
			\label{eq:Eq20180103771a}
			\\
			\widetilde{\boldsymbol{P}}  \left[ i, i \right] \left( \alpha, \beta \right)
			=
			\widetilde{q}\left( i \right)_{\alpha, \beta} / \widetilde{\pi}_{i,\alpha}
			\label{eq:Eq20180103771b}
			\\
			\widetilde{\boldsymbol{P}} \left[ i, j \right] \left( \alpha, \beta \right)
			=
			\widetilde{q}\left( i, j \right)_{\alpha, \beta} / \widetilde{\pi}_{i,\alpha}
			\label{eq:Eq20180103771c}
		\end{gather}
	\end{subequations}
	In (\ref{eq:Eq20180103771a}) and (\ref{eq:Eq20180103771b}), $\alpha \neq \beta = 1, ..., m_{i}$ and $\widetilde{\boldsymbol{P}}  \left[ i, i \right]$ for $i \in \mathcal{V}$ denotes the diagonal partitions.
	In (\ref{eq:Eq20180103771c}), $\widetilde{\boldsymbol{P}}  \left[ i, j \right]$ for $\{i, j\} \in \mathcal{E}$ denotes the off-diagonal partitions and $\alpha = 1, ..., m_{i}$,  and  $\beta = 1, ..., m_{j}$.
	Defining the projection operator $\Lambda$
	\begin{equation}
		\boldsymbol{\Lambda} \left[ i,j \right]
		=
		\begin{cases}
			\frac{\boldsymbol{J}_{1, m_{i}}}{\sqrt{m_{i}}} \;\; \text{for} \;\; i= j
			\\
			\boldsymbol{0} \;\; \;\; \text{for} \;\; i\neq j
		\end{cases}
	\end{equation}
	which satisfies $\Lambda \times \Lambda^T=I_{|\mathcal{V}|}$, for the transition probability matrix $\widetilde{\boldsymbol{P}}$ we have
	\begin{equation}
		\label{eq:Eq20180103877}
		\begin{gathered}
			\boldsymbol{P}^{'} = \boldsymbol{\Lambda} \times \widetilde{\boldsymbol{P}} \times \boldsymbol{\Lambda}^{T}
		\end{gathered}
	\end{equation}
	Using the similarity transformation $\boldsymbol{S} = \sum\nolimits_{i=1}^{|\mathcal{V}|} \boldsymbol{E}_{i,i} \frac{1}{\sqrt{m_i}}$, we have
	$\boldsymbol{P} = \boldsymbol{S} \times \boldsymbol{P}^{'} \times \boldsymbol{S}^{-1}$
	where $\boldsymbol{P}$ is as below
	\begin{equation}
		\label{eq:Eq20171230951a}
		\begin{gathered}
				\boldsymbol{P}
				=
				\sum\nolimits_{i=1}^{|\mathcal{V}|} \left( 1 - \sum\nolimits_{j\in \mathcal{N}(i)} \frac{q_{i,j}}{\pi_i} \right) \boldsymbol{E}_{i,i}
				+
				\sum\nolimits_{i=1}^{|\mathcal{V}|} \sum\nolimits_{j \in \mathcal{N}(i)} \left( \frac{q_{i,j}}{\pi_i} \right)\boldsymbol{E}_{i,j}
		\end{gathered}
	\end{equation}
	with
	(\ref{eq:Eq20180101663})
	and
	\begin{equation}
		\label{eq:Eq20180101678ww}
		\begin{gathered}
			( q_{i,j} / \pi_i )
			=
			\left( \sum\nolimits_{ \alpha = 1 }^{ m_{i} } \sum\nolimits_{ \beta = 1 }^{ m_{j} } ( \widetilde{q}\left( i, j \right)_{\alpha, \beta}  /  \widetilde{\pi}_{i, \alpha} ) \right)
			/ m_{i}
		\end{gathered}
	\end{equation}
	and $\boldsymbol{E}_{i,j}$ is a $|\mathcal{V}| \times |\mathcal{V}|$ matrix with the element on $i$-th row and $j$-th column equal to one and zero elsewhere.
	Based on (\ref{eq:Eq20180103877}) and Cauchy interlacing theorem (Appendix \ref{sec:CauchyInterlacingTheorem}), it can be concluded that
	the eigenvalues of $\boldsymbol{P}^{'}$ interlace the eigenvalues of $\widetilde{\boldsymbol{P}}$.
	From this it can be concluded that
	the smallest (second largest) eigenvalue of $\widetilde{\boldsymbol{P}}$ given in
	(\ref{eq:Eq20180103771})
	is smaller (larger) than
	the smallest (second largest) 
	eigenvalue of $\boldsymbol{P}^{'}$.
	Thus the SLEM of $\widetilde{\boldsymbol{P}}$ is greater than that of $\boldsymbol{P}^{'}$.
	On the otherhand, since $\boldsymbol{S}$ is a similarity transformation then $\boldsymbol{P}$ and $\boldsymbol{P}^{'}$ have the same spectrum and it can be concluded that
	the SLEM of $\widetilde{\boldsymbol{P}}$ is greater than that of $\boldsymbol{P}$.
	The main question here is if there is a $\widetilde{q}(i,j)_{\alpha, \beta}$ for given $q_{i,j}$ such that the spectrum of (\ref{eq:Eq20180103771}) and (\ref{eq:Eq20171230951a}) are equal, while (\ref{eq:Eq20180101678ww}) is satisfied.
	In the following we show that such $\widetilde{q}(i,j)_{\alpha, \beta}$ is achievable.
	By appropriate choice of $\widetilde{q}(i,j)_{\alpha_i ,\beta_j}$ for each $\{i,j\} \in \mathcal{E}$ and $\alpha_i = 1, ..., m_i$, $\beta_j = 1, ..., m_j$,
	we can reduce
	the matrix
	$\widetilde{\boldsymbol{P}} \left[ i, j \right]$
	into a rank one matrix.
	To this aim, it is necessary for the following to hold true
	\begin{equation}
		\label{eq:Eq20171230630}
		\begin{gathered}
			\frac{ \widetilde{q}(i,j)_{\alpha_i,1} }  { \widetilde{\pi}_{i,\alpha_i} \widetilde{\pi}_{j,1} }
			=
			\frac{ \widetilde{q}(i,j)_{\alpha_i,2} }  { \widetilde{\pi}_{i,\alpha_i} \widetilde{\pi}_{j,2} }
			=
			\cdots
			=
			\frac{ \widetilde{q}(i,j)_{\alpha_i,m_j} }  { \widetilde{\pi}_{i,\alpha_i} \widetilde{\pi}_{j,m_j} }
		\end{gathered}
	\end{equation}
	for all $\{i,j\} \in \mathcal{E}$ and $\alpha_i=1, \cdots, m_i$,
	where (\ref{eq:Eq201801061374a}) can be derived from (\ref{eq:Eq20171230630}).
	Choosing $\widetilde{q}(i,j)_{\alpha, \beta}$ according to
	(\ref{eq:Eq201801061374a}),
	guarantees that (\ref{eq:Eq20180101678ww}) is satisfied.

	Similarly by appropriate choice of
	$\widetilde{q}(i)_{\alpha_i ,\beta_i }$, for each $i \in \mathcal{V}$
	and
	$\alpha_i \neq \beta_i = 1, ..., m_i$, we can reduce
	the matrix
	$\widetilde{\boldsymbol{P}}\left[ i, i \right]$
	into a rank one matrix.
	To this aim, it is necessary for the following to hold true
	\begin{equation}
		\label{eq:Eq20171230688}
		\begin{gathered}
			\frac{ \widetilde{q}(i)_{\alpha, 1}}{ \widetilde{\pi}_{i, \alpha} \widetilde{\pi}_{i,1} }
			=
			\frac{ \widetilde{q}(i)_{\alpha, 2 }}{ \widetilde{\pi}_{i, \alpha} \widetilde{\pi}_{i,2} }
			=
			\cdots
			=
			\frac{ \widetilde{q}(i)_{\alpha, m_i}}{ \widetilde{\pi}_{i,\alpha} \widetilde{\pi}_{i,m_i} }
		\end{gathered}
	\end{equation}
	for all $i \in \mathcal{V}$ and $\alpha_i=1, ..., m_i$.
	From (\ref{eq:Eq20171230688}), we have
	\begin{equation}
		\label{eq:Eq20171230712}
		\begin{gathered}
			\widetilde{q}(i)_{\alpha_i, \beta_i}
			=
			Q_{i} \widetilde{\pi}_{i, \alpha_i} \widetilde{\pi}_{i, \beta_i}
		\end{gathered}
	\end{equation}
	for $i \in \mathcal{V}$ and $\alpha_i=1, ..., m_i$.
	In the following, the coefficients $Q_{i}$ will be determined.
	Defining
	the unitary DFT matrix of size $m_i \times m_i$, for $i=1, ...,|\mathcal{V}|$ in each fiber, i.,e.
	\begin{equation}
		\label{eq:Eq201808282217}
		\begin{gathered}
			\boldsymbol{F} \left[ i,j \right]
			=
			\begin{cases}
				\boldsymbol{F}_i \;\; \text{for} \;\; i= j
				\\
				\boldsymbol{0} \;\; \;\; \text{for} \;\; i\neq j
			\end{cases}
		\end{gathered}
	\end{equation}
	with $\boldsymbol{F}_{i}$
	as a square matrix with $(\alpha,\beta)$-th element equal to
	$\boldsymbol{F}_{i} \left( \alpha, \beta \right) = \frac{1}{\sqrt{m_{i}}} \omega_{i}^{ ( \alpha - 1 ) ( \beta - 1 ) }$,
	for $\alpha, \beta = 1, ..., m_{i}$,
	and
	$\omega_{i} = e^{j \frac{2\pi}{m_{i}}}$,
	where
	$\boldsymbol{F}_{i} \boldsymbol{F}_{i}^{\dag}  =  \boldsymbol{F}_{i}^{\dag} \boldsymbol{F}_{i} = \boldsymbol{I}_{m_{i}}$
	and
	$\boldsymbol{F}_{i}^{\dag}$ is the conjugate transpose of $\boldsymbol{F}_{i}$.
	Considering the following similarity transformation in each fiber,
	\begin{equation}
		\begin{gathered}
			\boldsymbol{\tilde{S}} \left[ i,j \right]
			=
			\begin{cases}
				\frac{1}{\sqrt{m_i}}\boldsymbol{I}_{m_i} \;\; \text{for} \;\; i= j
				\\
				\boldsymbol{0} \;\; \text{for} \;\; i\neq j
			\end{cases}
		\end{gathered}
	\end{equation}
	and
	by choosing $\widetilde{q}(i, j)_{\alpha, \beta}$ and $\widetilde{q}(i)_{\alpha_i, \beta_i}$ according to
	(\ref{eq:Eq201801061374a})
	and (\ref{eq:Eq20171230688}),
	the matrix $\widetilde{\boldsymbol{P}}\left[ i, j \right]$ reduces to a rank one matrix, as below
	\begin{equation}
		\label{eq:201801061233}
		\begin{aligned}
			&
			\left(\boldsymbol{\widetilde{S}} \times \boldsymbol{F} \times \widetilde{\boldsymbol{P}} \times \boldsymbol{F}^\dag \times \boldsymbol{\widetilde{S}^{-1}} \right) \left[ i, j \right] =
			( q_{i,j} / ( \pi_{i} \pi_{j} ) )
			\times
			\\&
			\qquad
				\left[ \begin{array}{ccccc}
					{\pi_{j}}
					&{\sum\limits_{\beta=1}^{m_j} \omega^{\beta-1}_{j} \widetilde{\pi}_{j,\beta}}
					&{\sum\limits_{\beta=1}^{m_j}\omega^{2(\beta-1)}_{j} \widetilde{\pi}_{j,\beta}}
					&{\cdots}
					&{\sum\limits_{\beta=1}^{m_j}\omega^{(m_j-1)(\beta-1)}_{j} \widetilde{\pi}_{j,\beta}}
					\\
					{0}
					&{0}
					&{0}
					&{\cdots}
					&{0}
					\\
					{\vdots}
					&{\vdots}
					&{\vdots}
					&{\ddots}
					&{\vdots}
					\\
					{0}
					&{0}
					&{0}
					&{\cdots}
					&{0}
				\end{array} \right],
		\end{aligned}
	\end{equation}
	\begin{equation}
		\label{eq:201801061288}
		\begin{aligned}
			&
			\left( \boldsymbol{\widetilde{S}} \times \boldsymbol{F} \times \widetilde{\boldsymbol{P}} \times \boldsymbol{F}^\dag \times \boldsymbol{\widetilde{S}^{-1}} \right) \left[ i, i \right] =
			\\
			&
				\left[ \begin{array}{ccccc}
					{1-\sum_{j\in \mathcal{N}(i)}\frac{ q_{i,j} }{ \pi_{i} }}
					&{Q_i(\sum_{\alpha=1}^{m_i}\omega^{\alpha-1}_{i} \widetilde{\pi}_{i,\alpha})}
					&{Q_i(\sum_{\alpha=1}^{m_i}\omega^{2(\alpha-1)}_{i} \widetilde{\pi}_{i,\alpha})}
					&{\cdots}
					&{Q_i(\sum_{\alpha=1}^{m_i}\omega^{(n_i-1)(\alpha-1)}_{i} \widetilde{\pi}_{i,\alpha})}
					\\
					{0}
					&{s_{i}}
					&{0}
					&{0}
					\\
					{0}
					&{0}
					&{s_{i}}
					&{\cdots}
					&{0}
					\\
					{\vdots}
					&{\vdots}
					&{\vdots}
					&{\cdots}
					&{\vdots}
					\\
					{0}
					&{0}
					&{0}
					&{\cdots}
					&{s_{i}}
				\end{array} \right]
		\end{aligned}
	\end{equation}
	where
	$s_{i} = 1-\sum_{j \in \mathcal{N}(i) } \frac{ q_{i,j} }{ \pi_{i} \pi_{j} } \left( \sum_{\alpha=1}^{m_j} \widetilde{\pi}_{j,\alpha} \right) - Q_i \left( \sum_{\alpha=1}^{m_i} \widetilde{\pi}_{i,\alpha} \right)$
	for $i \in \mathcal{V}$.
	$s_{i}$ for $i \in \mathcal{V}$ are the eigenvalues of $\widetilde{\boldsymbol{P}}$.
	Setting $s_{i} = 0$ reduces above matrices to rank one matrices.
	From $s_{i} = 0$,
	the
	following can be concluded
	\begin{equation}
		\label{eq:Eq201809072588}
		\begin{gathered}
				Q_{i}
				=
				\frac
				{ 1 - \sum_{j\in \mathcal{N}(i)}\frac{ q_{i,j} }{ \pi_{i} \pi_{j} } \left( \sum_{\alpha=1}^{m_j} \widetilde{\pi}_{j, \alpha} \right) }
				{ \sum_{\alpha=1}^{m_i} \widetilde{\pi}_{i,\alpha} }
				=
				\frac{\pi_i-\sum_{j\in \mathcal{N}(i)} q_{i,j} }{\pi_i^2}
		\end{gathered}
	\end{equation}
	for $i \in \mathcal{V}$,
	where considering (\ref{eq:Eq201809072588}) from (\ref{eq:Eq20171230712}), we can derive (\ref{eq:Eq201801061374b}).
	$\boldsymbol{\widetilde{S}} \times \boldsymbol{F} \times \widetilde{\boldsymbol{P}} \times \boldsymbol{F}^\dag \times \boldsymbol{\widetilde{S}}^{-1} $ is the similarity transform of $\widetilde{\boldsymbol{P}} $.
	Thus considering (\ref{eq:201801061233}) and (\ref{eq:201801061288}) (with $s_{i} = 0$ for $i \in \mathcal{V}$) based on the particular choice of $\widetilde{q}(i,j)_{\alpha, \beta}$  and $\widetilde{q}(i)_{\alpha, \beta}$ given in
	(\ref{eq:Eq201801061374}),
	we see that the nonzero eigenvalues of $\widetilde{\boldsymbol{P}}$ are the same as those of $\boldsymbol{P}$ given in (\ref{eq:Eq20171230951a}), hence, their corresponding SLEM are the same,
	and Theorem \ref{CauchyInterlacingTheorem} can be deduced.

\end{proof}

\begin{remark}
	\label{remark:TightInterlacing}
	Considering the base graph $\mathcal{G}$, and one of its clique lifted graphs $\widetilde{\mathcal{G}}$, based on Theorem \ref{CauchyInterlacingTheorem}, it can be stated that tight interlacing (as defined in \cite{HAEMERS1995593}) with $k$ at least equal to $1$  holds between
	the optimal eigenvalues of their corresponding transition probability matrices.
\end{remark}

\begin{remark}
	\label{remark:Interpretation-of-24}
	The interpretation of
	(\ref{eq:Eq201801061374a})
	is that
	in the clique lifted graph
	the	transition probabilities from vertices
	$\{(j,1), ..., (j,m_{j})\}$ to the vertex $(i,\alpha_{i})$
	for all $\{ i, j \} \in \mathcal{E}$ and $\alpha_{i} = 1, ..., m_{i}$
	are all equal, i.e.,
	$ \widetilde{\boldsymbol{P}} \left[ j, i \right] \left( 1, \alpha_{i} \right) 		= 		... 		= 		\widetilde{\boldsymbol{P}} \left[ j, i \right] \left( m_{j}, \alpha_{i} \right) $.
	Also, The interpretation of
	(\ref{eq:Eq201801061374b})
	is that
	inside every clique in the clique lifted graph
	the	transition probabilities from all other vertices of the clique to any particular vertex of the same clique
	are equal, i.e.,
	$ \widetilde{\boldsymbol{P}} \left[ i, i \right] \left( 1, \alpha_{i} \right) 		= 		... 		= 		\widetilde{\boldsymbol{P}} \left[ i, i \right] \left( m_{j}, \alpha_{i} \right) $,
	for $\alpha_{i} = 1, ..., m_{i}$.
\end{remark}

\begin{figure}
	\centering
	\includegraphics[width=120mm]{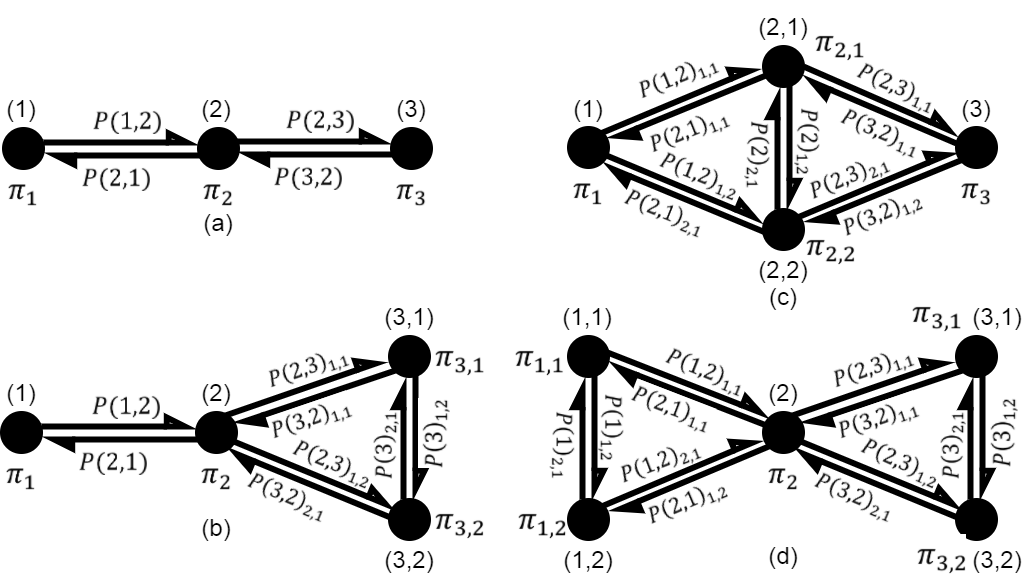}
	\caption{(a) Path graph with $3$ vertices as the base graph, (b) Lifted graph with $m_1 = 1$, $m_2 = 1$, $m_3 = 2$, (c) Lifted graph with $m_1 = 1$, $m_2 = 2$, $m_3 = 1$, (d) Lifted graph with $m_1 = 2$, $m_2 = 1$, $m_3 = 2$.}
	\label{fig:LiftPath3}
\end{figure}

\section{Semidefinite Programming Formulation}
\label{sec:FMRMCCliqueLiftGraphsSDP}

Considering the FMRMC problem over the base graph,
in this section, we formulate the optimization problem (\ref{eq:Eq201712231257}) in the form of semidefinite programming problem.

Following a procedure similar to that of \cite{BoydFastestmixing2003}, it can be shown that FMMC problem
(\ref{eq:Eq201712231257})
is a convex optimization problem and can be derived as the following semidefinite programming problem,
\begin{equation}
	\label{eq:Eq201711251293}
	\begin{aligned}
		\min
		\;\;
		&s,
		\\
		s.t.
		\quad
		&
		-s \boldsymbol{I}
		\preccurlyeq
		\boldsymbol{I} - \boldsymbol{D}^{-\frac{1}{2}} \boldsymbol{L} \left( q \right) \boldsymbol{D}^{-\frac{1}{2}} -
		\widetilde{\boldsymbol{J}}
		\preccurlyeq
		s \boldsymbol{I},
		\\
		&
		- \sum\nolimits_{k \neq i} q_{ik} + \pi_{i} \geq 0 \;\; \text{for} \;\; i=1, ..., N.
	\end{aligned}
\end{equation}
The symbol $\preccurlyeq$ denotes matrix inequality, i.e., $\boldsymbol{X} \preccurlyeq \boldsymbol{Y}$ means $\boldsymbol{Y} - \boldsymbol{X}$ is positive semidefinite.
In optimization problem (\ref{eq:Eq201711251293})
the optimization variables are $q_{ij}$.
Considering
\begin{equation}
	\label{eq:Eq20171112317}
	\begin{gathered}
			\boldsymbol{D}^{-\frac{1}{2}} \boldsymbol{L} \left( q \right) \boldsymbol{D}^{-\frac{1}{2}}
			=
			\sum_{\{i,j\} \in \mathcal{E}} q_{ij} \boldsymbol{D}^{-\frac{1}{2}} \left( \boldsymbol{e}_{i} -\boldsymbol{e}_{j} \right)  \left( \boldsymbol{e}_{i} -\boldsymbol{e}_{j} \right)^{T}  \boldsymbol{D}^{-\frac{1}{2}}
	\end{gathered}
\end{equation}
problem (\ref{eq:Eq201711251293}) can be written as below,
\begin{subequations}
	\label{eq:Eq201801262501}
	\begin{align}
		\min
		\;\;
		&s,
		\nonumber
		\\
		s.t.
		\quad
		&
			(s - 1) \boldsymbol{I}
			+
			\sum\nolimits_{\{i,j\} \in \mathcal{E}} q_{ij} \boldsymbol{D}^{-\frac{1}{2}} \left( \boldsymbol{e}_{i} -\boldsymbol{e}_{j} \right)  \left( \boldsymbol{e}_{i} -\boldsymbol{e}_{j} \right)^{T}  \boldsymbol{D}^{-\frac{1}{2}}
			+
			\widetilde{\boldsymbol{J}}
			\geq 0
		\label{eq:Eq201801262501a}
		\\
		&
			(s + 1) \boldsymbol{I}
			-
			\sum\nolimits_{\{i,j\} \in \mathcal{E}} q_{ij} \boldsymbol{D}^{-\frac{1}{2}} \left( \boldsymbol{e}_{i} -\boldsymbol{e}_{j} \right)  \left( \boldsymbol{e}_{i} -\boldsymbol{e}_{j} \right)^{T}  \boldsymbol{D}^{-\frac{1}{2}}
			-
			\widetilde{\boldsymbol{J}}
			\geq 0
		\label{eq:Eq201801262501b}
		\\
		&
			- \sum\nolimits_{k \neq i} q_{ik} + \pi_{i} \geq 0 \;\; \text{for} \;\; i=1, ..., N.
		\label{eq:Eq201801262501c}
	\end{align}
\end{subequations}
	Introducing $\boldsymbol{x} = [ ..., q_{ij}, ..., s]^{T}$, $\boldsymbol{c} = [ 0, ..., 0, 1]^{T}$, $\boldsymbol{F}_{0}$, $\boldsymbol{F}_{i}$ as below,
\begin{equation}
	\label{eq:Eq20171114782}
	\begin{gathered}
			\boldsymbol{F}_{0}  =
			\left[ \begin{array}{ccc}
				{ \widetilde{\boldsymbol{J}} - \boldsymbol{I} }
				&{\boldsymbol{0}}
				&{\boldsymbol{0}}
				\\
				{\boldsymbol{0}}
				&{ -\widetilde{\boldsymbol{J}} + \boldsymbol{I} }
				&{\boldsymbol{0}}
				\\
				{\boldsymbol{0}}
				&{\boldsymbol{0}}
				&{\boldsymbol{D}}
			\end{array} \right],
			\;\;
			\boldsymbol{F}_{s}  =
			\left[ \begin{array}{ccc}
				{ \boldsymbol{I} }
				&{\boldsymbol{0}}
				&{\boldsymbol{0}}
				\\
				{\boldsymbol{0}}
				&{ \boldsymbol{I} }
				&{\boldsymbol{0}}
				\\
				{\boldsymbol{0}}
				&{\boldsymbol{0}}
				&{\boldsymbol{0}}
			\end{array} \right],
	\end{gathered}
\end{equation}
\begin{equation}
	\label{eq:Eq20171114834}
	\begin{aligned}
		\boldsymbol{F}_{i}  =
			\left[ \begin{array}{ccc}
				{ \boldsymbol{D}^{-\frac{1}{2}} \left( \boldsymbol{e}_{k} - \boldsymbol{e}_{l} \right) \left( \boldsymbol{e}_{k} - \boldsymbol{e}_{l} \right)^{T} \boldsymbol{D}^{-\frac{1}{2}} }
				&{\boldsymbol{0}}
				&{\boldsymbol{0}}
				\\
				{\boldsymbol{0}}
				&{ -\boldsymbol{D}^{-\frac{1}{2}} \left( \boldsymbol{e}_{k} - \boldsymbol{e}_{l} \right) \left( \boldsymbol{e}_{k} - \boldsymbol{e}_{l} \right)^{T} \boldsymbol{D}^{-\frac{1}{2}} }
				&{\boldsymbol{0}}
				\\
				{\boldsymbol{0}}
				&{\boldsymbol{0}}
				&{ - \boldsymbol{e}_{k}\boldsymbol{e}_{k}^{T} - \boldsymbol{e}_{l}\boldsymbol{e}_{l}^{T} }
			\end{array} \right].
	\end{aligned}
\end{equation}
for $i = 1, ..., |\boldsymbol{x}|-1$,
problem (\ref{eq:Eq201801262501}) can be written in the standard form of the semidefinite programming \cite{BoydConvexBook,JafarizadehIEEESensors2011} as below,
\begin{equation}
	\label{eq:Eq201801272653}
	\begin{aligned}
		\min\limits_{\boldsymbol{x}}
		\;\;
		&\boldsymbol{c}^{T} \cdot \boldsymbol{x},
		\\
		s.t.
		\quad
		&
		\boldsymbol{F}(x) = \sum\nolimits_{i=1}^{|\boldsymbol{x}|} \boldsymbol{x}_{i} \boldsymbol{F}_{i} + \boldsymbol{F}_{0} \succeq 0
	\end{aligned}
\end{equation}
The dual problem is as following,
\begin{equation}
	\label{eq:Eq201801272673}
	\begin{aligned}
		\max\limits_{\boldsymbol{Z}}
		\;\;
		&-Tr \left[ \boldsymbol{F}_{0} \cdot \boldsymbol{Z} \cdot \boldsymbol{Z}^{T} \right],
		\\
		s.t.
		\quad
		&
		Tr \left[ \boldsymbol{F}_{s} \cdot \boldsymbol{Z} \cdot \boldsymbol{Z}^{T} \right]  = \boldsymbol{c}_{|\boldsymbol{x}|} = 1,
		\\
		&
			Tr \left[ \boldsymbol{F}_{i} \cdot \boldsymbol{Z} \cdot \boldsymbol{Z}^{T} \right]  = \boldsymbol{c}_{i} = 0 \;\; \text{for} \;\; i=1, ..., |\boldsymbol{x}|-1.
	\end{aligned}
\end{equation}
The dual variable $\boldsymbol{Z}$ can be written as
$\boldsymbol{Z}  =
\left[ \boldsymbol{Z}_{1}^{T}, \boldsymbol{Z}_{2}^{T}, \boldsymbol{Z}_{3}^{T} \right]^{T}$
where
\begin{equation}
	\label{eq:Eq201801262582}
	\begin{gathered}
		\boldsymbol{Z}_{1}
		=
		\sum\nolimits_{ \{i,j\} \in \mathcal{E} } a_{ij} \boldsymbol{D}^{-\frac{1}{2}} \left( \boldsymbol{e}_{i} - \boldsymbol{e}_{j} \right),
		\\
			\boldsymbol{Z}_{2}
			=
			\sum\nolimits_{ \{i,j\} \in \mathcal{E} } b_{ij} \boldsymbol{D}^{-\frac{1}{2}} \left( \boldsymbol{e}_{i} - \boldsymbol{e}_{j} \right),
			\;\;
			\boldsymbol{Z}_{3}
			=
			\sum\nolimits_{ i \in \mathcal{V} } c_{i} \boldsymbol{e}_{i}
	\end{gathered}
\end{equation}
Using
(\ref{eq:Eq201801262582}), the dual constraints in (\ref{eq:Eq201801272673}) (i.e. $Tr\left( \boldsymbol{Z} \boldsymbol{F}_{i} \boldsymbol{Z}^{T} \right)  =  0$) reduce to the following
\begin{equation}
	\label{eq:Eq20171114916}
	\begin{gathered}
			\left( \boldsymbol{Z}_{1}^{T} \boldsymbol{D}^{-\frac{1}{2}} \left( \boldsymbol{e}_{i} - \boldsymbol{e}_{j} \right) \right)^{2}
			-
			\left( \boldsymbol{Z}_{2}^{T} \boldsymbol{D}^{-\frac{1}{2}} \left( \boldsymbol{e}_{i} - \boldsymbol{e}_{j} \right) \right)^{2}
			-
			\left( c_{i}^{2} + c_{j}^{2} \right)
			=
			0
	\end{gathered}
\end{equation}
The complementary slackness condition \cite{BoydConvexBook,JafarizadehIEEESensors2011}, reduces to
\begin{subequations}
	\label{eq:Eq20171114938}
	\begin{gather}
		\left( \left( s - 1 \right) \boldsymbol{I} + \boldsymbol{D}^{-\frac{1}{2}} \boldsymbol{L} \left( q \right) \boldsymbol{D}^{-\frac{1}{2}}  \right)  \boldsymbol{Z}_{1}  =  0,
		\label{eq:Eq20171114938a}
		\\
		\left( \left( s + 1 \right) \boldsymbol{I} - \boldsymbol{D}^{-\frac{1}{2}} \boldsymbol{L} \left( q \right) \boldsymbol{D}^{-\frac{1}{2}}  \right)  \boldsymbol{Z}_{2}  =  0,
		\label{eq:Eq20171114938b}
		\\
		c_{i} \left( \pi_{i} - \sum\nolimits_{j \neq i} q_{ij} \right)  =  0.
		\label{eq:Eq20171114938c}
	\end{gather}
\end{subequations}
\begin{remark}
	From (\ref{eq:Eq20171114916}), it is obvious that $\boldsymbol{Z_{1}} = \boldsymbol{0}$ results $\boldsymbol{Z_{2}} = \boldsymbol{0}$ and $c_{i} = 0$, which is not acceptable.
	Thus, it can be concluded that $\boldsymbol{Z_{1}}$ is always non-zero.
\end{remark}
\begin{theorem}
	\label{theorem-2-3450}
	If $\boldsymbol{Z_{1}}$ and $\boldsymbol{Z_{2}}$ are non-zero,
	then
	for optimal value of $s$,
	$1 - s$ and $s + 1$ are both eigenvalues of
	$\boldsymbol{D}^{-\frac{1}{2}}  \boldsymbol{L}(q)  \boldsymbol{D}^{-\frac{1}{2}}$,
	with corresponding eigenvectors $\boldsymbol{Z_{1}}$ and $\boldsymbol{Z_{2}}$.
\end{theorem}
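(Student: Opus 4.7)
The plan is to read off the result directly from the complementary slackness conditions (\ref{eq:Eq20171114938a}) and (\ref{eq:Eq20171114938b}), which are already in the shape of shifted eigenvalue equations for $\boldsymbol{D}^{-\frac{1}{2}}\boldsymbol{L}(q)\boldsymbol{D}^{-\frac{1}{2}}$. So the bulk of the argument is bookkeeping: check that these conditions genuinely apply at the optimum, and check that the "constant" terms involving $\widetilde{\boldsymbol{J}}$ vanish on the subspace that $\boldsymbol{Z}_{1}$ and $\boldsymbol{Z}_{2}$ live in.

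First, I would invoke strong duality for the SDP pair (\ref{eq:Eq201801272653})--(\ref{eq:Eq201801272673}). Since the primal is strictly feasible (one can exhibit a strictly feasible $q$, as done in the reformulation leading to (\ref{eq:Eq201711251293})), Slater's condition holds and the KKT conditions are necessary and sufficient at the optimum; among these KKT conditions are exactly the complementary slackness relations (\ref{eq:Eq20171114938}). Next, I would rewrite (\ref{eq:Eq20171114938a}) as
\begin{equation*}
\boldsymbol{D}^{-\frac{1}{2}}\boldsymbol{L}(q)\boldsymbol{D}^{-\frac{1}{2}}\,\boldsymbol{Z}_{1} = (1-s)\,\boldsymbol{Z}_{1},
\end{equation*}
and (\ref{eq:Eq20171114938b}) as
\begin{equation*}
\boldsymbol{D}^{-\frac{1}{2}}\boldsymbol{L}(q)\boldsymbol{D}^{-\frac{1}{2}}\,\boldsymbol{Z}_{2} = (s+1)\,\boldsymbol{Z}_{2}.
\end{equation*}
Under the hypothesis $\boldsymbol{Z}_{1}\neq\boldsymbol{0}$ and $\boldsymbol{Z}_{2}\neq\boldsymbol{0}$, these are precisely the assertions of the theorem.

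The one point that needs a brief sanity check is the apparent absence of the $\widetilde{\boldsymbol{J}}$ term, which does appear in $\boldsymbol{F}_{0}$ and therefore in the full first/second diagonal blocks of $\boldsymbol{F}(x)$. The observation that removes it is that $\boldsymbol{Z}_{1}$ and $\boldsymbol{Z}_{2}$ are, by construction in (\ref{eq:Eq201801262582}), linear combinations of the vectors $\boldsymbol{D}^{-\frac{1}{2}}(\boldsymbol{e}_{i}-\boldsymbol{e}_{j})$; each such vector satisfies $(\boldsymbol{D}^{\frac{1}{2}}\boldsymbol{1})^{T}\boldsymbol{D}^{-\frac{1}{2}}(\boldsymbol{e}_{i}-\boldsymbol{e}_{j}) = \boldsymbol{1}^{T}(\boldsymbol{e}_{i}-\boldsymbol{e}_{j}) = 0$, so both $\boldsymbol{Z}_{1}$ and $\boldsymbol{Z}_{2}$ lie in the orthogonal complement of $\boldsymbol{D}^{\frac{1}{2}}\boldsymbol{1}$, which is exactly the kernel of the rank-one projector $\widetilde{\boldsymbol{J}}$. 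Hence $\widetilde{\boldsymbol{J}}\boldsymbol{Z}_{1}=\widetilde{\boldsymbol{J}}\boldsymbol{Z}_{2}=\boldsymbol{0}$, and the reduced conditions (\ref{eq:Eq20171114938a})--(\ref{eq:Eq20171114938b}) in the paper are in fact equivalent to the full block complementary slackness $\boldsymbol{F}(x)\,\boldsymbol{Z}\,\boldsymbol{Z}^{T}=\boldsymbol{0}$ applied to the first two blocks.

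The main (mild) obstacle is therefore not algebraic but conceptual: making sure that the right form of complementary slackness is being used (namely that $\boldsymbol{F}(x)\boldsymbol{Z}=\boldsymbol{0}$ for each column of $\boldsymbol{Z}$, not merely the trace condition) and that the orthogonality against $\boldsymbol{D}^{\frac{1}{2}}\boldsymbol{1}$ legitimately kills the $\widetilde{\boldsymbol{J}}$ term. Once these are in place, the theorem follows by direct rearrangement, with no further computation required.
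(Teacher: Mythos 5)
Your proposal is correct and follows essentially the same route as the paper: the paper's proof likewise reads the two eigenvalue equations directly off the complementary slackness conditions (\ref{eq:Eq20171114938a}) and (\ref{eq:Eq20171114938b}), which hold at the optimum. Your additional check that $\widetilde{\boldsymbol{J}}\boldsymbol{Z}_{1}=\widetilde{\boldsymbol{J}}\boldsymbol{Z}_{2}=\boldsymbol{0}$ (since both vectors are spanned by $\boldsymbol{D}^{-\frac{1}{2}}(\boldsymbol{e}_{i}-\boldsymbol{e}_{j})$ and hence orthogonal to $\boldsymbol{D}^{\frac{1}{2}}\boldsymbol{1}$) is a useful justification of a step the paper leaves implicit, but it does not change the argument.
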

\begin{proof}
	Since both $\boldsymbol{Z_{1}}$ and $\boldsymbol{Z_{2}}$ are non-zero, 
	then from
	(\ref{eq:Eq20171114938a}) 
	and
	(\ref{eq:Eq20171114938b})
	(which hold for optimal value of $s$, due to complementary slackness condition),
	it is obvious that 
	$1 - s$ and $s + 1$ are both eigenvalues of
	$\boldsymbol{D}^{-\frac{1}{2}}  \boldsymbol{L}(q)  \boldsymbol{D}^{-\frac{1}{2}}$,
	with corresponding eigenvectors $\boldsymbol{Z_{1}}$ and $\boldsymbol{Z_{2}}$.
\end{proof}
\begin{remark}
	\label{remark-sum-of-eigenvalues}
	Based on Theorem
	\ref{theorem-2-3450},
	for the optimal $SLEM$, it can be concluded that
	$s$ $=$ $\mu$ $(  \boldsymbol{D}^{-1/2}  \boldsymbol{L}(q)  \boldsymbol{D}^{-1/2}  )$ $=$ $1$ $-$ $\lambda_{2}$ $(  \boldsymbol{D}^{-1/2}  \boldsymbol{L}(q)  \boldsymbol{D}^{-1/2}  )$ $=$ $\lambda_{N}$ $(  \boldsymbol{D}^{-1/2}  \boldsymbol{L}(q)  \boldsymbol{D}^{-1/2}  )$ $-$ $1$.
	Thus we have 
	$\lambda_{2}(  \boldsymbol{D}^{-1/2}  \boldsymbol{L}(q)  \boldsymbol{D}^{-1/2}  )    +    \lambda_{N}(  \boldsymbol{D}^{-1/2}  \boldsymbol{L}(q)  \boldsymbol{D}^{-1/2}  ) = 2$.
\end{remark}
If the last constraints of (\ref{eq:Eq201801262501})
are strict (i.e. $- \sum\nolimits_{k \neq i} q_{ik} + \pi_{i} > 0$ for $i=1, ..., N$) then it can be concluded that $\pi_{i} - \sum_{j \neq i} q_{ij} \neq 0$.
Thus, we have
\begin{equation}
	\label{eq:Eq20171114955}
	\begin{gathered}
		c_{i}  =  0
		\;\;  \text{for}  \;\;  i=1, ..., N.
	\end{gathered}
\end{equation}
This will dictate additional constraints on the equilibrium distribution.
In general (\ref{eq:Eq20171114938}) hold for any underlying topology.
But in the case of tree topologies where vectors $\left( \boldsymbol{e}_{i} - \boldsymbol{e}_{j} \right)$ for $i \neq j$ are independent of each other,
based on Theorem
\ref{theorem-2-3450},
(\ref{eq:Eq20171114938a}) and (\ref{eq:Eq20171114938b}) can be interpreted as below,
\begin{subequations}
	\label{eq:Eq20171114964}
	\begin{gather}
		\left( s - 1 \right) a_{ij}  +  q_{ij} \left( \boldsymbol{e}_{i} - \boldsymbol{e}_{j} \right)^{T} \boldsymbol{D}^{-\frac{1}{2}} \boldsymbol{Z}_{1}  =  0,
		\label{eq:Eq20171114964a}
		\\
		\left( s + 1 \right) b_{ij}  -  q_{ij} \left( \boldsymbol{e}_{i} - \boldsymbol{e}_{j} \right)^{T} \boldsymbol{D}^{-\frac{1}{2}} \boldsymbol{Z}_{2}  =  0,
		\label{eq:Eq20171114964b}
	\end{gather}
\end{subequations}
Using the relations
(\ref{eq:Eq20171114916}) and
(\ref{eq:Eq20171114955}),
from (\ref{eq:Eq20171114964})
the following can be concluded
\begin{equation}
	\label{eq:Eq20171114977}
	\begin{gathered}
		\left( (s-1) a_{ij} \right)^{2}  =  \left( (s+1) b_{ij} \right)^{2}
	\end{gathered}
\end{equation}
for $\{i,j\} \in \mathcal{E}$.
In general topologies (with loops), (\ref{eq:Eq20171114964}) and (\ref{eq:Eq20171114977}) hold for edges which are not part of a loop.

\section{Optimal Transition Probabilities on Subgraphs of An Arbitrary Graph}
\label{sec:Branches}

In this section, we address the optimization of the FMMC problem (\ref{eq:Eq201712231257}) over different types of subgraphs where we determine the optimal transition probabilities over the edges of these subgraphs independent of the rest of topology.
The subgraphs considered in this section are connected to one node in the rest of topology.
Thus, $\boldsymbol{L}\left( q \right)$ for a subgraph with $N$ nodes can be written as below,
\begin{equation}
	\label{eq:Eq201808083130}
	\begin{gathered}
		\boldsymbol{L}(q)  =
		\left[ \begin{array}{ccc}
			{ \boldsymbol{L}_{B}\left( q \right) }
			&{ \boldsymbol{A}\left( q \right) }
			&{ \boldsymbol{0} }
			\\
			{ \boldsymbol{A}^{T}\left( q \right) }
			&{ \boldsymbol{L}_{N+1}\left( q \right) }
			&{ \boldsymbol{B}^{T}\left( q \right) }
			\\
			{ \boldsymbol{0} }
			&{ \boldsymbol{B}\left( q \right) }
			&{ \boldsymbol{L}^{'}\left( q \right) }
		\end{array} \right],
	\end{gathered}
\end{equation}
where $\boldsymbol{L}_{B}\left( q \right)$ is a $N \times N$ submatrix defined based on the topology of the subgraph and $\boldsymbol{A}\left( q \right)$ is a $N \times 1$ column vector in the form of $\boldsymbol{A}\left( q \right)^{T}  =  \left[ 0, ..., 0, q_{N,N+1} \right]$
which denotes
the connection between the subgraph and the $(N+1)$-th node in the whole topology.
$\boldsymbol{L}_{N+1}\left( q \right)$ is a $1 \times 1$ block of the matrix corresponding to the $(N+1)$-th node.
The square submatrix $\boldsymbol{L}^{'}\left( q \right)$ and the column vector $\boldsymbol{B} \left( q \right)$ correspond to the rest of graph indicating the unknown part of the topology.
Accordingly the operator $\boldsymbol{D}^{-\frac{1}{2}} \boldsymbol{L}\left( q \right) \boldsymbol{D}^{-\frac{1}{2}}$ for an arbitrary subgraph can be written as below
\begin{equation}
	\nonumber
	\begin{gathered}
			\left[ \begin{array}{ccc}
				{  \boldsymbol{D}_{B}^{-\frac{1}{2}} \boldsymbol{L}_{B} \left( q \right) \boldsymbol{D}_{B}^{-\frac{1}{2}}  }
				&{  \boldsymbol{D}_{B}^{-\frac{1}{2}} \boldsymbol{A}\left( q \right)  \pi_{N+1}^{-\frac{1}{2}}  }
				&{ \boldsymbol{0} }
				\\
				{ \pi_{N+1}^{-\frac{1}{2}} \boldsymbol{A}^{T}\left( q \right)  \boldsymbol{D}_{B}^{-\frac{1}{2}} }
				&{ \pi_{N+1}^{-\frac{1}{2}} \boldsymbol{L}_{N+1}\left( q \right)  \pi_{N+1}^{-\frac{1}{2}} }
				&{ \pi_{N+1}^{-\frac{1}{2}} \boldsymbol{B}^{T}\left( q \right)  {\boldsymbol{D}^{'}}^{-\frac{1}{2}} }
				\\
				{ \boldsymbol{0} }
				&{ {\boldsymbol{D}^{'}}^{-\frac{1}{2}} \boldsymbol{B}\left( q \right)  \pi_{N+1}^{-\frac{1}{2}} }
				&{ {\boldsymbol{D}^{'}}^{-\frac{1}{2}} \boldsymbol{L}^{'}\left( q \right)  {\boldsymbol{D}^{'}}^{-\frac{1}{2}} }
			\end{array} \right],
	\end{gathered}
\end{equation}
$\boldsymbol{D}_{B} = diag\left( \left[ \pi_{1}, \pi_{2}, ..., \pi_{N} \right] \right)$ is a $N \times N$ diagonal submatrix corresponding to the arbitrary subgraph part of the topology and
$\boldsymbol{D}^{'}$ corresponds to the rest of graph indicating the unknown part of the topology.
We define vectors $\boldsymbol{e}_{i}$ for $i \in \mathcal{V}$ as column vectors with $i$-th element equal to $1$ and zero elsewhere.
The size of vectors $\boldsymbol{e}_{i}$ for $i \in \mathcal{V}$ is same as the number of vertices in the whole graph.
These vectors can be divided into three parts i.e.
$\boldsymbol{e}_{i}^{T} = \left[ {\boldsymbol{e}_{B}}_{i}^{T}, {\boldsymbol{e}_{i}} ( N+1 ), {\boldsymbol{e}^{'}}_{i}^{T} \right]$,
where ${\boldsymbol{e}_{B}}_{i}^{T}$ is the first $N$ vertices (corresponding to vertices in the subgraph) and ${\boldsymbol{e}^{'}}_{i}^{T}$ corresponds to the other vertices which are in the unknown part of the topology.
For $i=1, ..., N$, the $i$-th element of ${\boldsymbol{e}_{B}}_{i}$ is one and the rest is zero,
${\boldsymbol{e}_{i}} ( N+1 )$
is zero and ${\boldsymbol{e}^{'}}_{i}$ is a vector of all zeros.
For $i=N+1$, ${\boldsymbol{e}_{B}}_{i}$ and ${\boldsymbol{e}^{'}}_{i}$ is a vector of all zeros and
${\boldsymbol{e}_{i}} ( N+1 )$
is one.
For $i=N+1, ..., \left| \mathcal{V} \right|$, the $(i-N)$-th element of ${\boldsymbol{e}^{'}}_{i}$ is one and the rest is zero,
${\boldsymbol{e}_{i}} ( N+1 )$
is zero and ${\boldsymbol{e}_{B}}_{i}$ is a vector of all zeros.
Similarly, we divide the dual variable into three parts as below,
$\boldsymbol{Z}_{1}^{T}$    $=$    $[$    ${\boldsymbol{Z}_{1}}_{B}^{T}$,    $- a_{N,N+1} \pi_{N+1}^{-1/2}$    $+$    $\sum\nolimits_{ j \in \mathcal{N}_{N+1}^{'} } a_{N+1,j} \pi_{N+1}^{-1/2}$,    $\boldsymbol{Z}_{1}^{'T}$    $]$,
$\boldsymbol{Z}_{2}^{T}$        $=$        $[$        ${\boldsymbol{Z}_{2}}_{B}^{T}$,        $- b_{N,N+1} \pi_{N+1}^{-\frac{1}{2}}$        $+$        $\sum\nolimits_{ j \in \mathcal{N}_{N+1}^{'} }        b_{N+1,j} \pi_{N+1}^{ -\frac{1}{2} }$,        $\boldsymbol{Z}_{2}^{'T}$        $]$,
$\boldsymbol{Z}_{3}$    $=$    ${\boldsymbol{Z}_{3}}_{B}$    $+$    $c_{N+1} \boldsymbol{e}_{N+1}$    $+$    $\sum\nolimits_{ i \in \mathcal{V}^{'} } c_{i} \boldsymbol{e}_{i}$,
where
$\mathcal{N}_{N+1}^{'}$ is the set of neighbors of $(N+1)$-th node within $\mathcal{E}^{'}$
and
$\boldsymbol{Z}_{1}^{'}$  $=$  $\sum_{ \{i,j\} \in \mathcal{E}^{'} }$ $a_{ij}$ ${\boldsymbol{D}^{'}}^{-\frac{1}{2}}$ $($ ${\boldsymbol{e}^{'}}_{i}$ $-$ ${\boldsymbol{e}^{'}}_{j}$ $)$  $-$ $\sum_{ j \in \mathcal{N}_{N+1}^{'} }$  $a_{N+1,j}$ $\pi_{j}^{-\frac{1}{2}}$ ${\boldsymbol{e}^{'}}_{j}$,
and
$\boldsymbol{Z}_{2}^{'}$  $=$  $\sum_{ \{i,j\} \in \mathcal{E}^{'} }$ $b_{ij}$ ${\boldsymbol{D}^{'}}^{-\frac{1}{2}}$ $($ ${\boldsymbol{e}^{'}}_{i}$ $-$ ${\boldsymbol{e}^{'}}_{j} )$  $-$ $\sum_{ j \in \mathcal{N}_{N+1}^{'} }$  $b_{N+1,j}$ $\pi_{j}^{-\frac{1}{2}}$ ${\boldsymbol{e}^{'}}_{j}$.
\begin{equation}
	\label{eq:Eq201712021281}
	\begin{gathered}
			{\boldsymbol{Z}_{1}}_{B}
			=
			\sum\nolimits_{ \{i,j\} \in \mathcal{E}_{B} } a_{ij} \boldsymbol{D}_{B}^{-\frac{1}{2}} \left( {\boldsymbol{e}_{B}}_{i} - {\boldsymbol{e}_{B}}_{j} \right)
			+ a_{N,N+1} \pi_{N}^{-\frac{1}{2}} {\boldsymbol{e}_{B}}_{N}
	\end{gathered}
\end{equation}
\begin{equation}
	\label{eq:Eq201712021292}
	\begin{gathered}
			{\boldsymbol{Z}_{2}}_{B}
			=
			\sum\nolimits_{ \{i,j\} \in \mathcal{E}_{B} } b_{ij} \boldsymbol{D}_{B}^{-\frac{1}{2}} \left( {\boldsymbol{e}_{B}}_{i} - {\boldsymbol{e}_{B}}_{j} \right)
			+ b_{N,N+1} \pi_{N}^{-\frac{1}{2}} {\boldsymbol{e}_{B}}_{N}
	\end{gathered}
\end{equation}
and
${\boldsymbol{Z}_{3}}_{B}    =    \sum\nolimits_{ i \in \mathcal{V}_{B} } c_{i} \boldsymbol{e}_{i}$
with
$\mathcal{V}_{B} = \{ 1, ..., N \}$
and
$\mathcal{V}^{'}$ as the set of nodes in the unknown part of the topology.
Based on the above derivation of the dual variable $\boldsymbol{Z}$, and vectors $\boldsymbol{e}_{i}$,
equations (\ref{eq:Eq20171114916}) and (\ref{eq:Eq20171114938}) for the arbitrary subgraph part of the topology can be writen as below,
\begin{equation}
	\label{eq:Eq201712011530}
	\begin{gathered}
			\left( {\boldsymbol{Z}_{1}}_{B}^{T} \boldsymbol{D}_{B}^{-\frac{1}{2}}
			{\overline{\boldsymbol{e}}_{B}}_{ij}
			\right)^{2}
			-
			\left( {\boldsymbol{Z}_{2}}_{B}^{T} \boldsymbol{D}_{B}^{-\frac{1}{2}}
			{\overline{\boldsymbol{e}}_{B}}_{ij}
			\right)^{2}
			-
			\left( c_{i}^{2} + c_{j}^{2} \right)
			=
			0
	\end{gathered}
\end{equation}
with
${\overline{\boldsymbol{e}}_{B}}_{ij}  =  {\boldsymbol{e}_{B}}_{i} - {\boldsymbol{e}_{B}}_{j}$
for
$i,j \in \mathcal{V}_{B}$ and
\begin{subequations}
	\label{eq:Eq201711251504}
	\begin{gather}
			\left( \left( s - 1 \right) \boldsymbol{I}_{B} + \boldsymbol{D}_{B}^{-\frac{1}{2}} \boldsymbol{L}_{B}(q) \boldsymbol{D}_{B}^{-\frac{1}{2}}  \right)
			{\boldsymbol{Z}_{1}}_{B}
			+
			\frac{ q_{N,N+1} }  { \pi_{N+1} \sqrt{\pi_{N}} }
			\left(
			-
			a_{N,N+1}
			+
			\sum\nolimits_{ j \in \mathcal{N}_{N+1}^{'} } a_{N+1,j}
			\right)
			=
			0,
		\label{eq:Eq201711251504a}
		\\
			\left( \left( s + 1 \right) \boldsymbol{I}_{B} - \boldsymbol{D}_{B}^{-\frac{1}{2}} \boldsymbol{L}_{B}\left( q \right) \boldsymbol{D}_{B}^{-\frac{1}{2}}  \right)
			{\boldsymbol{Z}_{2}}_{B}
			+
			\frac{ q_{N,N+1} }  { \pi_{N+1} \sqrt{\pi_{N}} }
			\left(
			b_{N,N+1}
			-
			\sum\nolimits_{ j \in \mathcal{N}_{N+1}^{'} } b_{N+1,j}
			\right)
			=
			0,
		\label{eq:Eq201711251504b}
		\\
			c_{i} \left( \pi_{i} - \sum_{j \neq i} q_{ij} \right)  =  0,
			\;\; \text{for} \;\; i \in \{1,...,N\}, j \in \{1,...,N+1\}
		\label{eq:Eq201711251504c}
	\end{gather}
\end{subequations}
In general, the optimal value of $q_{i,j}$ for edges of the subgraph are obtained from solving equations (\ref{eq:Eq201712011530}) and (\ref{eq:Eq201711251504}) (excluding the equations corresponding to $N$-th node).

\subsection{Palm Subgraph}
\label{sec:palmbranch}
This subgraph is formed by $n$ path subgraphs of length one connected to the free end of a Path subgraph of length $m$, where $m \geq 2$.
For this subgraph $\boldsymbol{L}_{B}\left( q \right)$ can be written as below,
\begin{equation}
	\label{eq:Eq201711181364-SB}
	\begin{aligned}
		\boldsymbol{L}_{B}\left( q \right)
		&=
		\sum\nolimits_{i=-1}^{-n} q_{i} \left( {\boldsymbol{e}_{B}}_{i} - {\boldsymbol{e}_{B}}_{0} \right)  \left( {\boldsymbol{e}_{B}}_{i} - {\boldsymbol{e}_{B}}_{0} \right)^{T}
			+
			\sum\nolimits_{i=0}^{m-1} q_{i} \left( {\boldsymbol{e}_{B}}_{i} - {\boldsymbol{e}_{B}}_{i+1} \right)  \left( {\boldsymbol{e}_{B}}_{i} - {\boldsymbol{e}_{B}}_{i+1} \right)^{T}
	\end{aligned}
\end{equation}
where ${\boldsymbol{e}_{B}}_{0}$ corresponds to the vertex at the free end of the path subgraph that the star part is formed.
Following a similar solution as the one
presented in Appendix \ref{sec:StarAbstractSolution} for Star topology it can be shown
that the optimal value of $q_{i}$ is as below,
\begin{equation}
	\label{eq:201808084378}
	\begin{gathered}
			q_{i} = ( \pi_{0} \pi_{i} ) / ( \pi_{0} + \sum\nolimits_{j=-1}^{-n} \pi_{j} ),
			\;\; \text{for} \;\; i=-1,...,-n,
	\end{gathered}
\end{equation}
\begin{equation}
	\label{eq:201808084389}
	\begin{gathered}
		q_{0} = ( \pi_{0} \pi_{1} ) / ( \pi_{0} + \pi_{1} ),
	\end{gathered}
\end{equation}
if
$\pi_{0}^{2}
\geq
\pi_{1}
\sum_{j=-1}^{-n} \pi_{j}$
and
$\pi_{1} \geq \sum_{j=-1}^{-n} \pi_{j} $,
where the first inequality originates from the constraint
$\pi_{0} > q_{0} + \sum_{j=-1}^{-n} q_{j}$,
and the second inequality originates from
the observation that
and arbitrary graph with a palm subgraph (with $m \geq 2$) should have a larger $SLEM$ than that of star topology with path subgraphs of length one, which in turn is equivalent to a Palm subgraph with $m=2$ which is connected to a single vertex.
The optimal value of $q$ over edges of the path part are
\begin{equation}
	\label{eq:201808084419}
	\begin{gathered}
		q_{i} = ( \pi_{i} \pi_{i+1} ) / ( \pi_{i} + \pi_{i+1} ),
	\end{gathered}
\end{equation}
if
$\pi_{i}^{2} \geq \pi_{i-1} \pi_{i+1}$,
for $i=1,...,m-2$.

\begin{remark}
	\label{Remark-StarBranch}
	An interesting point is that in the star part of the palm subgraph if
	(\ref{eq:Eq201801262501c}) is satisfied with strict inequality,
	the followings can be concluded.
	$i)$
	Considering $q_{i} = \pi_{i} \boldsymbol{P}_{i,0}$, from $\frac{q_{i}}{\pi_{i}} = \frac{q_{j}}{\pi_{j}}$, (which is derived from (\ref{eq:201808084378})), it can be concluded that $\boldsymbol{P}_{i,0} = \boldsymbol{P}_{j,0}$. i.e., the transition probabilities from every vertex in subgraphs to central vertex are all equal.
	$ii)$
	Considering $q_{i} = \pi_{0} \boldsymbol{P}_{0,i}$, from $\frac{q_{i}}{\pi_{i}} = \frac{q_{j}}{\pi_{j}}$, it can be concluded that $\boldsymbol{P}_{0,i} = \frac{\mu}{\pi_{0}} \pi_{i}$. i.e., the transition probabilities from the central vertex to $i$-th vertex in subgraphs is linearly proportional to value of equilibrium distribution in $i$-th vertex.
\end{remark}

\subsection{Path Subgraph}
\label{sec:pathbranch}
This subgraph is formed by a path graph (with $N$ nodes).
For this subgraph $\boldsymbol{L}_{B}\left( q \right)$ can be written as below,
\begin{equation}
	\label{eq:Eq201711181364}
	\begin{gathered}
		\boldsymbol{L}_{B}\left( q \right)
		=
		\sum\nolimits_{i = 1}^{N-1} q_{i} \left( {\boldsymbol{e}_{B}}_{i} - {\boldsymbol{e}_{B}}_{i+1} \right)  \left( {\boldsymbol{e}_{B}}_{i} - {\boldsymbol{e}_{B}}_{i+1} \right)^{T}
	\end{gathered}
\end{equation}
	In Appendix \ref{sec:pathbranch-Appendix}, it is shown that the optimal value of $q_i$ is as below,
\begin{equation}
	\label{eq:Eq201711171267}
	\begin{gathered}
		q_i
		=
		( \pi_i\pi_{i+1} ) / ( \pi_i+\pi_{i+1} ),
		\;\; \text{for} \;\; i=1,...,N-1
	\end{gathered}
\end{equation}
given that
\begin{equation}
	\label{eq:Eq201711171284}
	\begin{gathered}
		\pi_i^2\geq \pi_{i-1}\pi_{i+1}
		\;\; \text{for} \;\; i = 2, ..., N-2.
	\end{gathered}
\end{equation}
Both the Poisson distribution (i.e. $\pi_n=a\frac{\bar{n}^n}{n!}$) and the Binary distribution (i.e. $\pi_n=a\frac{M!}{n!(M-n)!}p^n(1-p)^{M-n}$, for $M>2N$, satisfy (\ref{eq:Eq201711171284}).

The optimal results presented in the rest of this section are for subgraphs which are clique lifted of Path subgraph.

\subsection{Semi-complete Subgraph}
\label{sec:Semi-complete-branch}

A semi-complete subgraph consists of a complete graph connected to two path graphs, such that the nodes at the end of path graphs are connected to all of the nodes in the complete graph part.
Semi-complete subgraph is identified by three parameters, namely $m, n_{1}$ and $n_{2}$.
$n_1$ and $n_2$ are the number of nodes in path graphs and $m$ is the number of nodes in the complete graph part.
A semi-complete subgraph with parameters $n_{1}=2, n_{2}=3$ and $m=3$ is depicted in Figure \ref{fig:SemiCompleteBranch}.
\begin{figure}
	\centering
	\includegraphics[width=160mm]{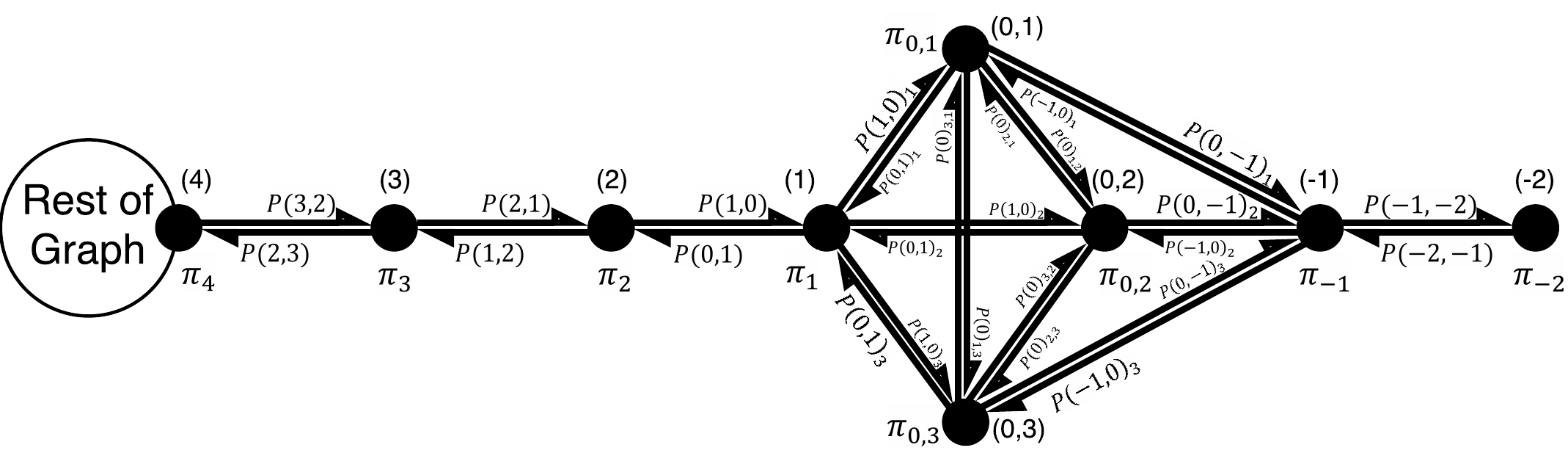}
	\caption{Semi-complete subgraph with $n_{1}=2, n_{2}=3$ and $m=3$.}
	\label{fig:SemiCompleteBranch}
\end{figure}
In semi-complete subgraph, the nodes in the first path graph (with length $n_1$) are denoted by $\{(-n_1), ..., (-1)\}$ where node $(-1)$ is connected to all nodes in the complete graph part.
The nodes in the complete graph part are denoted by $\{(0,1), ..., (0,m)\}$ and nodes in the second path graph (with length $n_2$) are denoted by $\{(1), ..., (n_2)\}$ where node $(1)$ is connected to all nodes in the complete graph part and node $(n_2)$ is connected to node $(N+1)$ in the unknown part of the topology.
Semi-complete subgraph is the clique lift of a path subgraph where the fiber of the vertex $(0)$ in the middle of the path subgraph is a clique with $m$ vertices and the rest of the vertices stay untouched.
Thus, for the weights on edges between nodes $(0,\alpha)$ and $(1)$, we use $q(0,1)_{\alpha}$ for $\alpha=1, ..., m$,
and similarly for the edges between nodes $(0,\alpha)$ and $(-1)$, we use $q(-1,0)_{\alpha}$ for $\alpha=1, ..., m$.
For the weights on edges between nodes $(i)$ and $(i+1)$, we use $q(i+1, i)$ for $i=-2, ..., -n_1$,
and for the weights on edges between nodes $(i-1)$ and $(i)$, we use $q(i-1, i)$ for $i=2, ..., n_2$.
For the edges between nodes $(0, \alpha)$ and $(0, \beta)$, we use $q(0)_{\alpha, \beta}$ for $\alpha \neq \beta =1, ..., m$.
By $\pi_{i}$, we denote the equilibrium distribution over vertex $(i)$ for $i \in \{-1, ..., -n_{1}\} \cup \{1, ..., n_{2}\}$, and by $\pi_{0,\alpha}$, we denote the equilibrium distribution over vertex $(0, \alpha)$ for $\alpha = 1, ..., m$.
The optimal value of $q(i+1, i)$ is equal to
$\frac{ \pi_{i} \pi_{i+1} }{ \pi_{i} + \pi_{i+1} }$ for $i = -2, ..., -n_{1}$
and
$q(i-1, i)$ is equal to
$\frac{ \pi_{i-1} \pi_{i} }{ \pi_{i-1} + \pi_{i} }$ for $i = 2, ..., n_{2}$.
The optimal values of $q(-1,0)_{\alpha}$ and $q(0,1)_{\alpha}$ are equal to $q(-1,0)_{\alpha} =  \frac{\pi_{-1} \pi_{0, \alpha}}{ \sum_{\gamma=1}^{m} \pi_{0, \gamma} + \pi_{-1} }   $ and $q(0,1)_{\alpha} = \frac{  \pi_{1} \pi_{0, \alpha}  }{   \sum_{\gamma=1}^{m} \pi_{0, \gamma} + \pi_{1} } $, for $\alpha = 1, ..., m$, repectively.
The optimal value of $q(0)_{\alpha, \beta}$ is
$\frac{ \pi_{0, \alpha} \pi_{0, \beta} (( \sum_{\gamma=1}^{m} \pi_{0, \gamma} )^{2} - \pi_{-1} \pi_{1}) }{   ( \sum_{\gamma=1}^{m} \pi_{0, \gamma} ) (\sum_{\gamma=1}^{m} \pi_{0, \gamma} + \pi_{1} ) ( \sum_{\gamma=1}^{m} \pi_{0, \gamma} + \pi_{-1} )} $
for $\alpha \neq \beta = 1, ..., m$.
These results are valid if the given equilibrium distribution $\pi_{i}$ satisfies (\ref{eq:Eq201711171284}) for $i = -n_{1}, ..., -2$ and $i=2, ..., n_{2}$
along with $\pi_{-1}^{2} \geq \pi_{-2} ( \sum_{\gamma=1}^{m} \pi_{0, \gamma} )$ and $\pi_{1}^{2} \geq \pi_{2} ( \sum_{\gamma=1}^{m} \pi_{0, \gamma} )$.
From $q_{0} \geq 0$, the following constraint is obtained $( \sum_{\gamma=1}^{m} \pi_{0, \gamma} )^{2} \geq \pi_{-1} \pi_{1}$.

\subsection{Extended Complete Ladder Subgraph}
\label{sec:ExtendedLadderBranch}

The extended complete ladder subgraph is a complete ladder graph with parameters $(m_{1}, m_{2}, ..., m_{n} )$, such that all nodes in the $n$-th layer are connected to the rest of graph.
An extended complete ladder graph ${ECL}_{n}$ is $ n$ complete graphs  with $ m_{1}, m_{2}, ..., m_{n}$ nodes each  connected to each other in the form of a $ n $-partite graph.
An extended complete ladder subgraph with parameters $m_{1} = 2$, $m_{2} = 3$, $m_{3} = 1$, $m_{4} = 2$ is depicted in Fig. \ref{fig:ExtendedCompleteLadderBranch}.
\begin{figure}
	\centering
	\includegraphics[width=110mm]{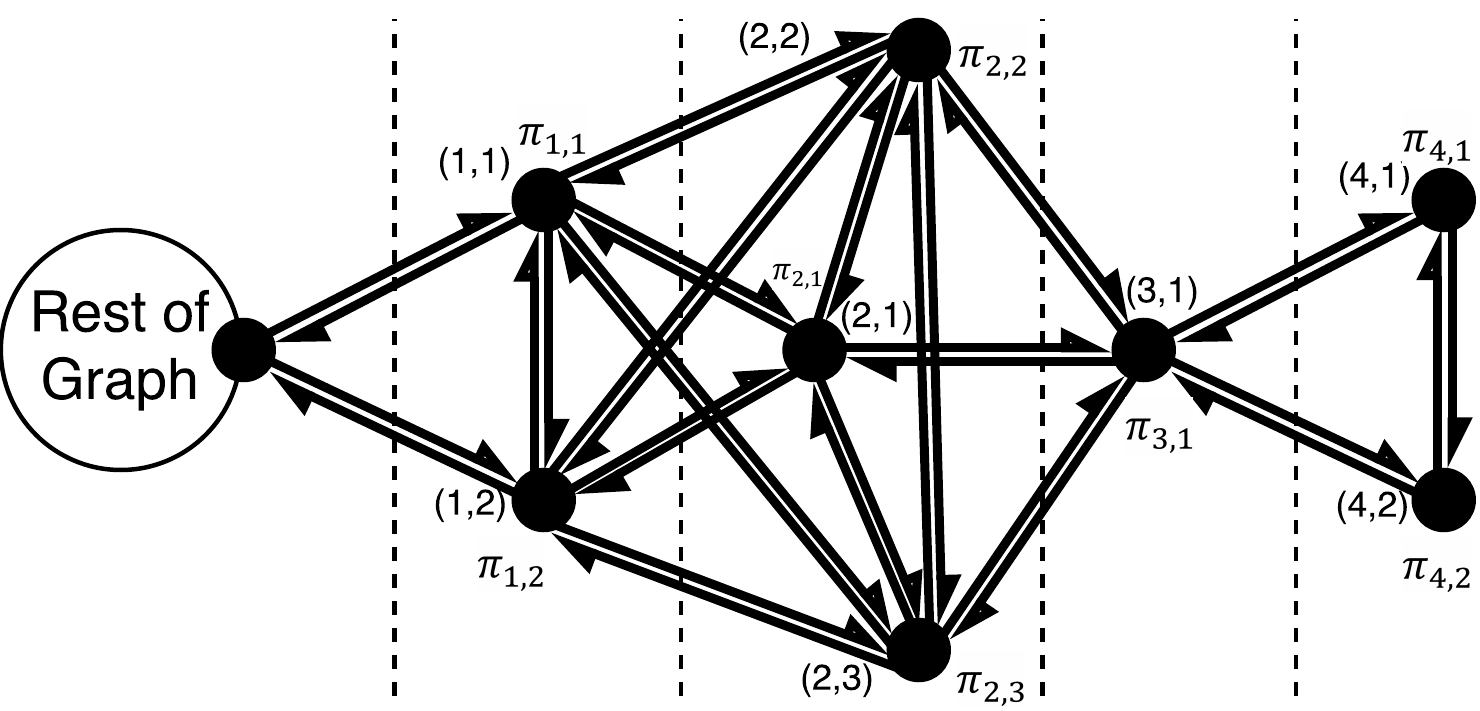}
	\caption{Extended complete ladder subgraph with parameters $m_{1} = 2$, $m_{2} = 3$, $m_{3} = 1$, $m_{4} = 2$.}
	\label{fig:ExtendedCompleteLadderBranch}
\end{figure}
In extended complete ladder subgraph, each vertex is denoted by $(i, \alpha_{i})$ where $i$ varies from $1$ to $n$ and
$\alpha_{i} = 1, ..., m_{i}$ for $i = 1, ..., n$.
Vertices $(n,\alpha)$ for $\alpha = 1, ..., m_{n}$ are connected to node $(N+1)$ in the unknown part of the topology, where $N = \sum_{i=1}^{n} m_{i}$ is the number of vertices in the subgraph.
Extended complete ladder subgraph is the clique lift of a path subgraph where the fiber of the vertices $(i)$ for $i=1, ..., n$ is a clique with $m_{i}$ vertices.
We use
$q(i, i+1)_{\alpha_{i}, \alpha_{i+1}}$ for denoting the weights on the edge $\{(i, \alpha_{i}), (i+1, \alpha_{i+1})\}$ for $i=1, ..., n-1$, $\alpha_{i} = 1, ..., m_{i}$ and $\alpha_{i+1} = 1, ..., m_{i+1}$,
and
$q(i)_{\alpha_{i}, \beta_{i}}$ for denoting the weights on the edge $\{(i, \alpha_{i}), (i, \beta_{i})\}$ for $i=1, ..., n$, $\alpha_{i} \neq \beta_{i} = 1, ..., m_{i}$.
By $\pi_{i, \alpha_{i}}$, we denote the equilibrium distribution over vertex $(i, \alpha_{i})$ for $i = 1, ...,  n$ and $\alpha_{i} = 1, ..., m_{i}$.
The optimal value of
$q(1)_{\alpha_{1}, \beta_{1}}$
is equal to
$\frac{\pi_{1,\alpha_1}\pi_{1,\beta_1} }{\sum_{\gamma=1}^{m_1} \pi_{1, \gamma}+\sum_{\gamma=1}^{m_2} \pi_{2, \gamma}}$
for $\alpha_{1} \neq \beta_{1} = 1, ..., m_{1}$.
The optimal value of
$q(i)_{\alpha_{i}, \beta_{i}}$
is equal to
$\frac{\pi_{i,\alpha_i}\pi_{i,\beta_i}((\sum_{\gamma=1}^{m_i} \pi_{i, \gamma})^2-(\sum_{\gamma=1}^{m_{i-1}} \pi_{i-1, \gamma})(\sum_{\gamma=1}^{m_{i+1}} \pi_{i+1, \gamma})) }{(\sum_{\gamma=1}^{m_i} \pi_{i, \gamma})(\sum_{\gamma=1}^{m_i} \pi_{i, \gamma}+\sum_{\gamma=1}^{m_{i-1}} \pi_{i-1, \gamma})(\sum_{\gamma=1}^{m_i} \pi_{i, \gamma}+\sum_{\gamma=1}^{m_{i+1}} \pi_{i+1, \gamma})}$
given that
$(\sum_{\gamma=1}^{m_i} \pi_{i, \gamma})^2  \geq  (\sum_{\gamma=1}^{m_{i-1}} \pi_{i-1, \gamma})(\sum_{\gamma=1}^{m_{i+1}} \pi_{i+1, \gamma})$
for $i=2, ..., n-1$ and $\alpha_{i} \neq \beta_{i} = 1, ..., m_{i}$.
The optimal value of
$q(i,i+1)_{\alpha_{i}, \alpha_{i+1}}$
is equal to
$\frac{\pi_{i,\alpha_{i}}\pi_{i+1,\alpha_{i+1}} }{\sum_{\gamma=1}^{m_i} \pi_{i, \gamma}+\sum_{\gamma=1}^{m_{i+1} } \pi_{i+1, \gamma}}$
for $i=1, ..., n-1$, $\alpha_{i} = 1, ..., m_{i}$ and $\alpha_{i+1} = 1, ..., m_{i+1}$.

\subsection{Lollipop Subgraph}
\label{sec:LollipopBranch}
The $(m,n)$-Lollipop graph $L_{(m,n)}$ is the graph obtained by joining a complete graph $K_m$ to a path graph $P_n$.
The Lollipop subgraph $L_{(m,n)}$ is defined as a Lollipop graph which is connected to an arbitrary graph via an edge connected to the end node of the path part as shown in Fig. \ref{fig:LollipopBranch} for $n=2$, $m=5$.
\begin{figure}
	\centering
	\includegraphics[width=120mm]{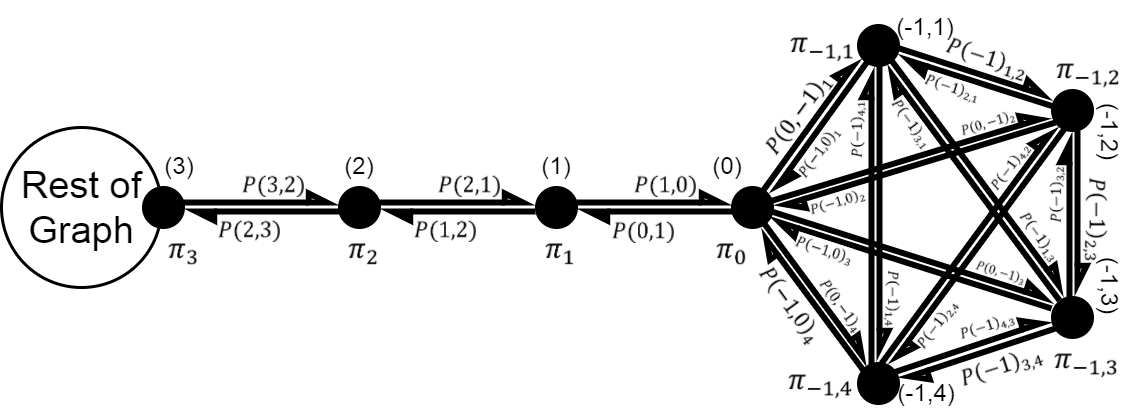}
	\caption{Lollipop subgraph with parameters $n=2$ and $m=5$.}
	\label{fig:LollipopBranch}
\end{figure}
The Lollipop subgraph $L_{(m,n)}$ is a special case of extended complete ladder subgraph with parameters $m_{1} = m_{2} = \cdots = m_{n-1} = 1$ and $m_{n} = m$.

\subsection{Example}

Here, we provide an example with all five subgraphs mentioned above.
In this example, the optimal value of $q$ are obtained numerically, where we compare them to the analytical closed-form formulas provided above.
The topology considered for this example is depicted in Fig.
\ref{fig:Example-Branches}.
The equilibrium distribution considered for this example is
$\pi_{i} = i$ for $i=1,...,7, 12, 13, 18, ..., 23, 27, 31$,
$\pi_{8} = 0.8$, $\pi_{9} = 0.9$, $\pi_{10} = 1$, $\pi_{11} = 1.1$,
$\pi_{14} = 1.4$, $\pi_{15} = 1.5$, $\pi_{16} = 1.6$, $\pi_{17} = 1.7$,
$\pi_{24} = 2.4$, $\pi_{25} = 2.5$,
$\pi_{26} = 2.6$,
$\pi_{28} = 11.2$, $\pi_{29} = 11.6$, $\pi_{30} = 12$,
$\pi_{32} = 3.2$.
The optimal value of $q$ for the Path subgraph are 
$q_{1,2} = 0.058191$, 
$q_{2,3} = 1.2 $,
$q_{3,4} = 1.714286$,
$q_{4,5} = 2.222$.
The optimal value of $q$ for the Palm subgraph are 
$q_{1,6} = 0.068084$, 
$q_{6,7} = 3.230788$, $q_{7,8} = 0.518532$, $q_{7,9} = 0.583355$, $q_{7,10} = 0.740753$, $q_{7,11} = 0.750025$.
The optimal value of $q$ for the Lollipop subgraph are 
$q_{1,12} = 0.126473$, 
$q_{12,13} = 6.240006$, $q_{13,14} = 0.947944$, $q_{13,15} = 1.015646$, $q_{13,16} = 1.083346$, $q_{13,17} = 1.151045$, $q_{14,15} = 0.109411$, $q_{14,16} = 0.116697$, $q_{14,17} = 0.123986$, $q_{15,16} = 0.125025$, $q_{15,17} = 0.132831$, $q_{16,17} = 0.141676$.
The optimal value of $q$ for the extended complete ladder subgraph are 
$q_{1,18} = 0.251854$, $q_{1,19} = 0.266908$, 
$q_{18,19} = 5.565850$, $q_{18,20} = 3.599990$, $q_{18,21} = 3.779993$, $q_{18,22} = 3.959996$, $q_{19,20} = 3.799989$, $q_{19,21} = 3.989992$, $q_{19,22} = 4.179995$, $q_{20,21} = 2.417034$, $q_{20,22} = 2.532134$, $q_{20,23} = 5.348812$, $q_{21,22} = 2.658744$, $q_{21,23} = 5.616256$, $q_{22,23} = 5.883701$, $q_{23,24} = 1.978508$, $q_{23,25} = 2.060948$, $q_{24,25} = 0.178161$.
The optimal value of $q$ for the Semi-complete subgraph are 
$q_{1,26} = 0.228464$, 
$q_{26,27} = 2.371536$, $q_{27,28} = 4.893188$, $q_{27,29} = 5.067973$, $q_{27,30} = 5.242719$, $q_{28,29} = 0.343425$, $q_{28,30} = 0.355233$, $q_{28,31} = 5.276578$, $q_{29,30} = 0.367962$, $q_{29,31} = 5.404536$, $q_{30,31} = 5.653494$, $q_{31,32} = 2.900599$.
Note that the numerical results obtained for all edges (except those connected to the central vertex $1$) are in agreement with the analytical results provided in
subsections
\ref{sec:pathbranch},
\ref{sec:palmbranch},
\ref{sec:LollipopBranch},
\ref{sec:ExtendedLadderBranch}
and
\ref{sec:Semi-complete-branch}.
The optimal $SLEM$ obtained for the topology depicted in Fig.
\ref{fig:Example-Branches} 
is equal to $0.99748869$.

\begin{figure}
	\centering
	\includegraphics[width=0.6\hsize]{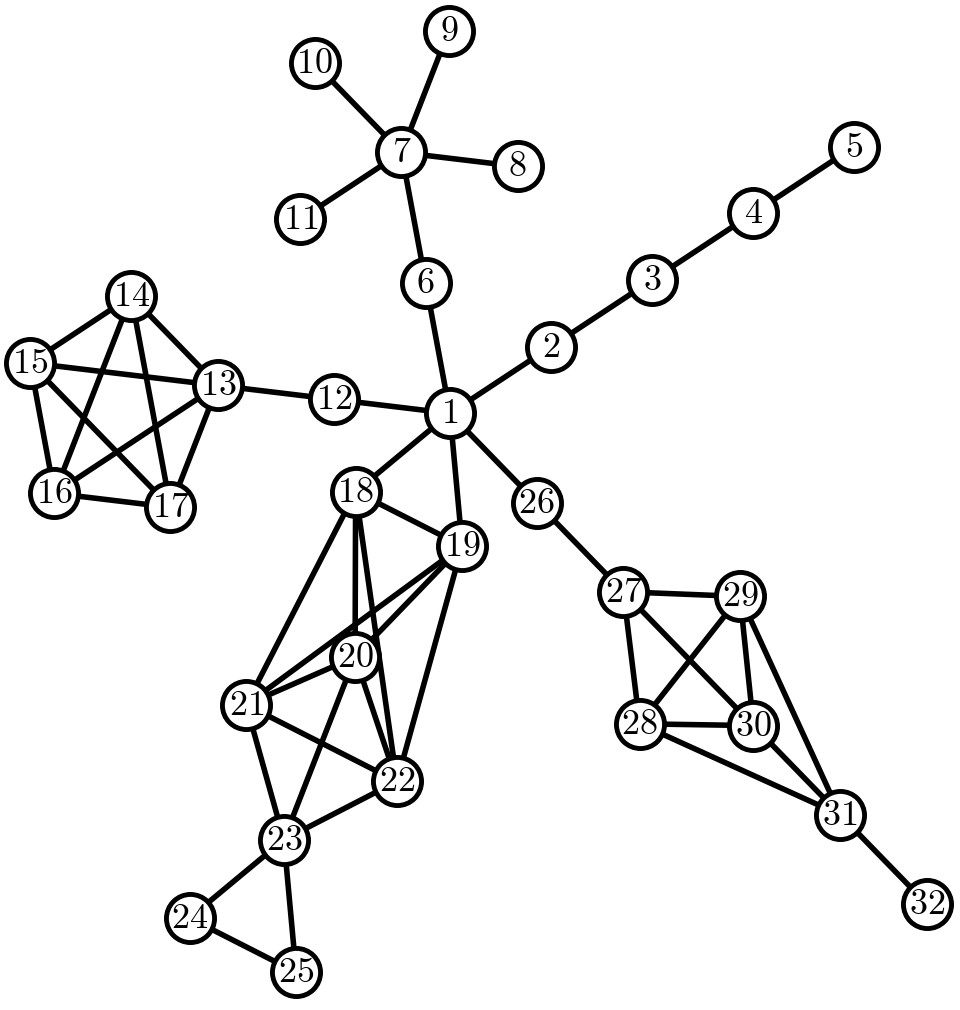}
	\caption{Example topology with five different branches.}
	\label{fig:Example-Branches}
\end{figure}

\section{FMMC Problem over Topologies with Arbitrary Equilibrium Distribution}
\label{sec:FMMCPathBasedResults}

In this section, we address the FMMC problem over topologies with arbitrary (not necessarily symmetric) equilibrium distribution.
Some of the topologies covered in this section are based on the types of subgraphs considered in Section \ref{sec:Branches},
and the
optimal transition probabilities are same as those reported in Section \ref{sec:Branches}.

\subsection{Path Topology}
\label{sec:pathtopology}

The optimal value of $q$ for the path topology are provided in (\ref{eq:Eq201711171267}).
The optimal value of $SLEM$ (i.e. $s$) can be obtained from the recursive solution of  the following set of equations
\begin{subequations}
	\label{eq:Eq201711171300}
	\begin{gather}
		s a_1
		=
		q_1 ( a_{2} / \pi_{2} )
		\label{eq:Eq201711171300a}
		\\
		s a_i
		=
		q_i \left( ( a_{i-1} / \pi_i ) + ( a_{i+1} / \pi_{i+1} ) \right),
		\label{eq:Eq201711171300b}
		\\
		s a_{N-1}
		=
		q_{N-1} ( a_{N-2} / \pi_{N-1} )
		\label{eq:Eq201711171300c}
	\end{gather}
\end{subequations}
where (\ref{eq:Eq201711171300b}) holds for $i=2,...,N-2$.
In the recursive solution of equations (\ref{eq:Eq201711171300}), after deriving $a_{i+1}$ in terms of $a_{i}$ for $i=1, ..., N-1$, a polynomial is obtained in terms of $s$.
The roots of this polynomial are the eigenvalues $\lambda_i$, where $s$ (i.e. $SLEM$) is the largest of these roots.
The optimal value of $q$ reported in (\ref{eq:Eq201711171267}) are in agreement with the results obtained in \cite{Fill2013} and \cite{BoydFastestmixing2003,BoydFastestMixing2009,Boyd2006FMMCPath}, where \cite{Fill2013} addresses the birth and death problem, and their solution is based on stochastically monotone Markov kernels and the monotonicity of mixing times,
and
\cite{BoydFastestmixing2003,BoydFastestMixing2009,Boyd2006FMMCPath} considers the case of uniform equilibrium distribution.
For example,
for $N=2$, the optimal value of $s$ is zero,
for $N=3$, the optimal value of $s$ is
$\sqrt{ \frac{ \pi_{1} \pi_{3} }{ \left( \pi_{1} + \pi_{2} \right) \left( \pi_{2} + \pi_{3} \right) } }$,
for $N=4$, the optimal value of $s$ is
$\sqrt{ \left( ( \pi_1 \pi_3  / ( \pi_1 + \pi_2 ) ) + ( \pi_2 \pi_4 / (\pi_3 + \pi_4 )) \right) / (\pi_2 + \pi_3) }$.

\subsubsection{Path Topology with $N=3$ Vertices}
\label{sec:PathTopologyN3}
If the equilibrium distribution satisfies $\pi_{2}^{2} \geq \pi_{1} \pi_{3}$, the optimal results are as reported above.
If $\pi_{2}^{2} < \pi_{1} \pi_{3}$, the optimal results are as below,
\begin{subequations}
	\label{eq:Eq201808129359-SS}
	\begin{gather}
		q_{1}
		=
		\left( \pi_1 \pi_2 \left( \pi_2 + 2 \pi_3 \right)  \right) / \left(  \pi_2 \left( \pi_1 + \pi_3 \right)  +  4 \pi_1 \pi_3  \right)
		\label{eq:Eq201808129359a-SS}
		\\
		q_{2}
		=
		\left(  \pi_3 \pi_2 \left( \pi_2 + 2 \pi_1 \right)  \right) / \left(  \pi_2 \left( \pi_1 + \pi_3 \right)  +  4 \pi_1 \pi_3  \right)
		\label{eq:Eq201808129359b-SS}
	\end{gather}
\end{subequations}
\begin{equation}
	\label{eq:Eq201808129374-SS}
	\begin{gathered}
		s  =
		\left(  4 \pi_1 \pi_3 - \pi_2^{2}  \right) / \left(  \pi_2 \left( \pi_1 + \pi_3 \right)  +  4 \pi_1 \pi_3  \right).
	\end{gathered}
\end{equation}

\begin{example}
	As an example, we consider a path topology with $N=5$ vertices, with equilibrium distribution $\boldsymbol{\pi} = [ 1.9, 2.9, 3.1, 2.8, 1.7 ]$.
	Using the optimal transition probabilities, the optimal $SLEM$ is equal to $0.748251$, while using the transition probabilities according to Metropolis transition probabilities
	(\ref{eq:Eq20191015-metropolis-transition-probability}),
	the value of $SLEM$ is equal to $0.861111$.
\end{example}

\subsection{Extended Complete Ladder Topology}
\label{sec:ExtendedCompleteLadderTopology}

Here, we consider an extended complete ladder topology where $n$ complete graphs with $m_1, m_2, ..., m_n$ nodes, each connected to each other in the form of a $n$-partite graph.
Regarding the notations, in this subsection, we consider the same notation used in Subsection \ref{sec:ExtendedLadderBranch} for extended complete ladder subgraph.
The optimal transition probabilities for extended complete ladder topology are same as those reported in Subsection \ref{sec:ExtendedLadderBranch} for extended complete ladder subgraph,
and for the transition probabilities between vertices in the $n$-th fiber, we have
$$\boldsymbol{P}(n)_{\alpha_{n}, \beta_{n}} = \frac{  \pi_{n,\beta_{n}}  }{  \sum_{\gamma=1}^{m_{n}} \pi_{n, \gamma}  +  \sum_{\gamma=1}^{m_{n-1}} \pi_{n-1, \gamma}  }$$
for $\alpha_{n} \neq \beta_{n} = 1, ..., m_{n}$.

The base graph of the extended complete ladder topology is a path topology with $n$ vertices.
The corresponding equilibrium distribution of the $i$-th vertex in the base graph is equal to $\sum_{\gamma=1}^{m_{i}} \pi_{i, \gamma}$.
The optimal value of $SLEM$ is equal to the $SLEM$ of the corresponding base graph (i.e. a path graph) which has been addressed in Subsection \ref{sec:pathtopology}.

\begin{example}
	As an example, we consider paw topology with $N=4$ vertices, with equilibrium distribution $\boldsymbol{\pi} = [ 1.9, 3.1, 2.8, 1.7 ]$.
	Paw topology is an extended complete ladder topology with parameters $m_{1}=1$, $m_{2}=2$ and $m_{3}=1$.
	Using the optimal transition probabilities, the optimal $SLEM$ is equal to $0.233425$, while using the transition probabilities according to Metropolis transition probabilities
	(\ref{eq:Eq20191015-metropolis-transition-probability}),
	the value of $SLEM$ is equal to $0.608497$.
\end{example}

\subsection{Extended Barbell Topology}
\label{sec:ExtendedBarbellTopology}

Extended Barbell topology is obtained by connecting two complete graphs $K_{m_1}$, $K_{m_2}$ by a path bridge $P_{n}$.
Extended barbell topology is the clique lift of a path subgraph (with $n+4$ vertices) where the fiber of the vertices $(1)$ and $(n+4)$ is a clique with $m_1-1$ and $m_2-1$ vertices, respectively.
Extended barbell topology is a special case of the extended complete ladder topology (introduced in Subsection \ref{sec:ExtendedCompleteLadderTopology}).
\\
Extended Barbell Topology comprises two Lollipop Subgraphs which are connected to each other from their two ends in the path parts.
Thus the optimal value of $q$ over edges of the Extended Barbell topology is same as those of the Lollipop topology provided in subsection \ref{sec:LollipopBranch}.
\\
For $n=0$ (the case where two complete graphs share a vertex), the optimal value of $SLEM$ is
$$\sqrt{ \frac{(\sum_{\gamma=1}^{m_{1}-1} \pi_{1, \gamma} )(\sum_{\gamma=1}^{m_{2}-1} \pi_{3, \gamma}) }{ \left( \sum_{\gamma=1}^{m_{1}-1} \pi_{1, \gamma} + \pi_{2} \right) \left( \pi_{2} + \sum_{\gamma=1}^{m_{2}-1} \pi_{3, \gamma} \right) } },$$
and for $n=1$ (the case with one edge between two complete graphs), the optimal value of $SLEM$ is
$$\sqrt{ \frac{1}{\pi_2 + \pi_3} \left( \frac{(\sum_{\gamma=1}^{m_{1}-1} \pi_{1, \gamma} ) \pi_3}{\sum_{\gamma=1}^{m_{1}-1} \pi_{1, \gamma}  + \pi_2} + \frac{\pi_2 (\sum_{\gamma=1}^{m_{2}-1} \pi_{4, \gamma} )}{\pi_3 + \sum_{\gamma=1}^{m_{2}-1} \pi_{4, \gamma} } \right) }.$$

\begin{example}
	As an example, we consider an extended barbell topology with parameters $m_{1}=3$, $m_{2}=4$, $n=1$ and equilibrium distribution $\boldsymbol{\pi} = [ 1.9, 1.8, 6.4, 8.1, 2.9, 3.2, 2.1 ]$.
	Using the optimal transition probabilities, the optimal $SLEM$ is equal to $0.653212$, while using the transition probabilities according to Metropolis transition probabilities
	(\ref{eq:Eq20191015-metropolis-transition-probability}),
	the value of $SLEM$ is equal to $0.780862$.
\end{example}

\subsection{Lollipop Topology}
\label{sec:LollipopTopology}

Lollipop topology is a complete graph connected to a path via an edge, each with $m$ and $n$ vertices, respectively.
This
topology is a special case of extended barbel topology (introduced in Subsection \ref{sec:ExtendedBarbellTopology}) where either one of $m_{1}$ or $m_{2}$ is equal to $2$, i.e. one of the fibers at one end of the extended barbell topology is reduced to a single vertex.
The optimal transition probabilities and $SLEM$ can be obtained accordingly from the results provided in Subsection \ref{sec:ExtendedBarbellTopology} by setting either one of $m_{1}$ or $m_{2}$ is equal to $2$.
The results provided for Lollipop topology in this subsection are in agreement with those of \cite{BoydFastestmixing2003}, which has considered the case of Lollipop topology with $m=n=2$ with uniform distribution.

\begin{example}
	As an example, we consider a Lollipop topology with parameters $m=3$, $n=2$ and equilibrium distribution $\boldsymbol{\pi} = [ 0.9, 3.2, 6.5, 3.1, 2.9 ]$.
	Using the optimal transition probabilities, the optimal $SLEM$ is equal to $0.552672$, while using the transition probabilities according to Metropolis transition probabilities
	(\ref{eq:Eq20191015-metropolis-transition-probability}),
	the value of $SLEM$ is equal to $0.610267$.
\end{example}

\subsection{Star Topology}
\label{sec:Star_n1_Topology}
Consider the star topology with $m$ subgraphs of length one.
As explained in Appendix \ref{sec:StarAbstractSolution}, the optimal results for Star topology are as below,
\begin{equation}
	\label{eq:Eq201807237981-SS}
	\begin{gathered}
		q_{i} =
		\left. \left( 2 \pi_{0} \pi_{i} \right) \right/
		\left( 2 \pi_{0} + \Pi \right),
		\;\; \text{for} \;\; i=1,...,m
	\end{gathered}
\end{equation}
\begin{equation}
	\label{eq:Eq201807237991-SS}
	\begin{gathered}
		SLEM
		=
		\Pi /
		\left(  2 \pi_{0} + \Pi  \right),
	\end{gathered}
\end{equation}
if
$\Pi  \leq  2 \pi_{0}$,
where $\pi_{0}$ denotes the equilibrium distribution on the central vertex in the star topology,
and
$\Pi = \sum_{i=1}^{m} \pi_{i}$.
If $\sum_{i=1}^{m} \pi_{i}  >  2 \pi_{0}$, then the optimal results are as below,
\begin{equation}
	\label{eq:Eq201807278235-SS}
	\begin{gathered}
		q_{i}
		=
		\pi_{i} \pi_{0}  \left/  \Pi  \right.
		\;\; \text{for} \;\; i=1,...,m,
	\end{gathered}
\end{equation}
\begin{equation}
	\label{eq:Eq201807278283-SS}
	\begin{gathered}
		SLEM
		=
		\left(  \Pi - \pi_{0}  \right) /  \Pi,
	\end{gathered}
\end{equation}
Note that the results in (\ref{eq:Eq201807237981-SS}), (\ref{eq:Eq201807237991-SS}), (\ref{eq:Eq201807278235-SS}) and (\ref{eq:Eq201807278283-SS}) hold for $m \geq 3$.
In the case of $m=2$, the star topology is reduced to the Path topology with three vertices, where the optimal results are provided in Subsection \ref{sec:PathTopologyN3}.

Windmill and friendship graphs
\cite{Gallian2007FriendshipGraph}
are two well-known graphs which are clique lifted of star graph.
Hence using the optimal results presented in this subsection, the FMRMC problem for all equilibrium distributions can be addressed over both windmill and friendship graphs.

\begin{example}
	As an example, we consider a star topology with $m=4$ branches and equilibrium distribution $\boldsymbol{\pi} = [ 4.9, 2.2, 2.5, 2.1, 1.9 ]$, where the first vertex is the central vertex.
	Using the optimal transition probabilities, the optimal $SLEM$ is equal to $0.47027$, while using the transition probabilities according to Metropolis transition probabilities
	(\ref{eq:Eq20191015-metropolis-transition-probability}),
	the value of $SLEM$ is equal to $0.571792$.
\end{example}

\subsection{Symmetric Star Topology with Semi-Symmetric Equilibrium Distribution}
\label{sec:General-Star-Topology}
Consider the star topology with $m$ subgraphs of length $n$.
Let $\pi_{i, \alpha}$, for $i=1,...,n$, $\alpha = 1,...,m$ be the equilibrium distribution of the vertex on $\alpha$-th subgraph with $i$ hop distance from the central vertex and $\pi_{0}$ be the equilibrium distribution of the central vertex.
We assume that the equilibrium distribution has the property that  $\left( \pi_{i, \alpha} / \pi_{i-1, \alpha} \right) = \chi_{i}$ for  $i=2,...,n$.
If $\chi_{i+1} \geq \chi_{i}$ for $i=2,...,n-1$,
and
$ 2 \pi_{0} \geq \Pi_{1} $,
and
$ 2 \pi_{0} \chi_{2} \leq \Pi_{1} $,
then the optimal value of $q_{i,\alpha}$ is
$\pi_{i,\alpha} \pi_{i-1,\alpha} / \left(  \pi_{i,\alpha} + \pi_{i-1,\alpha} \right)$ for $i=2,...,n$ and $\alpha = 1,...,m$
and
the optimal value of $q_{1,\alpha}$ is
$( 2 \pi_{0} \pi_{1,\alpha} ) / ( 2 \pi_{0} + \Pi_{1} )$,
where
$ \Pi_{1}  =  \sum_{ \alpha = 1 }^{m} \pi_{1,\alpha} $.
For the special case of $n=2$, the optimal value of $SLEM$ is
$(
\beta 	+  \sqrt{
	\beta^2
	+
	8 \chi_{2} /
	(   ( 1 + \chi_{2} )
	( 2 \pi_0 + \Pi_{1} )
	)
}
) / 2 $,
where
$\beta  =  \Pi_{1} / ( 2 \pi_0 + \Pi_{1} )$.
\begin{remark}
	\label{Remark-GenralStarSymmetry}
	An interesting point about the results above is that the optimal transition probabilities between the orbit is equal, where
	the vertices with the same distance from the central vertex form an orbit.
	In other words,
	$\boldsymbol{P}_{ (i,\alpha), (i-1,\alpha) } = 1 / ( 1 + \chi_{i} )$
	and
	$\boldsymbol{P}_{ (i-1,\alpha),(i,\alpha) } = \chi_{i} / ( 1 + \chi_{i} )$ for $i=2,...,n$.
	The transition probabilities of the edges connected to the central vertex follow the same property
	stated in Remark \ref{Remark-StarBranch}.
\end{remark}

\begin{example}
	As an example, we consider a symmetric star topology with $m=3$ branches of length $n=2$ and equilibrium distribution $\boldsymbol{\pi} = [ 4.9, 3.3, 3.6, 2.7, 2.2, 2.4, 1.8 ]$.
	First vertex is the central vertex, and vertices $2,3$ and $4$ are in one hop distance from the central vertex and the rest of vertices are in two hop distance from the central vertex.
	Using the optimal transition probabilities, the optimal $SLEM$ is equal to $0.76053$, while using the transition probabilities according to Metropolis transition probabilities
	(\ref{eq:Eq20191015-metropolis-transition-probability}),
	the value of $SLEM$ is equal to $0.836403$.
\end{example}

\subsection{Bistar Topology}
\label{sec:bistartopology}
Bistar topology comprises two star topology each with $m_{1}$ and $m_{2}$ subgraphs of length one, where their central nodes are connected to each other via one edge.
This topology is depicted in Fig. \ref{fig:DoubleStar}.
\begin{figure}
	\centering
	\includegraphics[width=0.35\hsize]{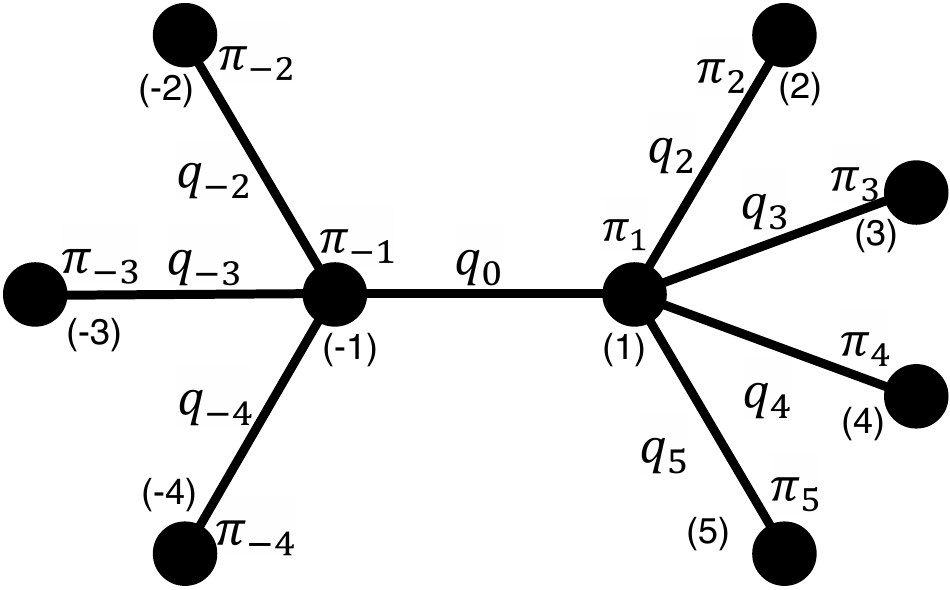}
	\caption{Bistar topology with $m_{1} = 3$, $m_{2} = 4$.}
	\label{fig:DoubleStar}
\end{figure}
The optimal results are
\begin{subequations}
	\label{eq:Eq201807309560-SS}
	\begin{gather}
		q_{i} = \pi_{-1} \pi_{i} / ( \pi_{-1} + \Pi^{-} ),
		\;\; \text{for} \;\; i=-2,...,-m_{1}-1
		\label{eq:Eq201807309560a-SS}
		\\
		q_{i} = \pi_{1} \pi_{i} / ( \pi_{1} + \Pi^{+} ),
		\;\; \text{for} \;\; i=2,...,m_{2}+1
		\label{eq:Eq201807309560b-SS}
	\end{gather}
\end{subequations}
where
$\pi_{-1}$ and $\pi_{1}$ are the equilibrium distributions on the central vertices of the star topologies,
and
$\Pi^{-} = \sum\nolimits_{i=-2}^{-m_{1}-1} \pi_{i}$
and
$\Pi^{+} = \sum\nolimits_{i=2}^{m_{2}+1} \pi_{i}$.
For the optimal value of $q_{0}$ and $s$, we have
\begin{equation}
	\label{eq:Eq201807309579-SS}
	\begin{gathered}
		q_{0} = ( \pi_{-1} \pi_{1} ) / ( \pi_{-1} + \pi_{1} ),
	\end{gathered}
\end{equation}
\begin{equation}
	\label{eq:Eq201807309589-SS}
	\begin{gathered}
		s
		=
		\sqrt{
			\frac{ 1 }{ \pi_{-1} + \pi_{1} }
			\left(
			\frac{ \pi_{1} \Pi^{-} }{ \pi_{-1} + \Pi^{-} }
			+
			\frac{ \pi_{-1} \Pi^{+} }{ \pi_{1} + \Pi^{+} }
			\right)
		}
	\end{gathered}
\end{equation}
if
the equilibrium distribution satisfies the following constraints
\begin{subequations}
	\label{eq:Eq201808029822-SS}
	\begin{gather}
			\Pi^{-}
			\left( \pi_{1} - \Pi^{-} \right)
			\left( \pi_{1} + \Pi^{+} \right)
			+
			\Pi^{+}
			\left( \Pi^{-} + \pi_{-1} \right)^{2}
			> 0,
		\label{eq:Eq201808029822a-SS}
		\\
			\Pi^{+}
			\left( \pi_{-1} - \Pi^{+} \right)
			\left( \pi_{-1} + \Pi^{-} \right)
			+
			\Pi^{-}
			\left( \Pi^{+} + \pi_{1} \right)^{2}
			> 0,
		\label{eq:Eq201808029822b-SS}
	\end{gather}
\end{subequations}
\begin{equation}
	\label{eq:Eq201807309721-SS}
	\begin{gathered}
		\pi_{1} \Pi^{-}    \leq    \pi_{-1}^{2},
		\qquad \text{and} \qquad
		\pi_{-1} \Pi^{+}    \leq    \pi_{1}^{2}.
	\end{gathered}
\end{equation}
Comparing the optimal value of $SLEM$ for double star topology (\ref{eq:Eq201807309589-SS}) and the conditions (\ref{eq:Eq201807309721-SS})
with those of the Path topology with four vertices provided in Subsection (\ref{sec:pathtopology}), it is apparent that the optimal results for the double star topology is similar to that of a Path topology with four vertices where the equilibrium distribution of the two vertices at two ends of the path topology are equal to $\sum_{i=-2}^{-m_{1}-1} \pi_{i}$ and $\sum_{i=2}^{m_{2}+1} \pi_{i}$.
Also, the optimal value of $q$ for the two edges at two ends of the path topology are equal to $\sum_{i=-2}^{-m_{1}-1} q_{i}$ and $\sum_{i=2}^{m_{2}+1} q_{i}$.

\begin{remark}
	\label{remark:bistar}
	The sorted optimal eigenvalues of the transition probability matrix $\boldsymbol{I} - \boldsymbol{D}^{-1} \boldsymbol{L}\left( q \right)$ for the Bistar topology
	are
	$\{  -s, 0, 1-\mu_{2}, 1-\mu_{1}, s, 1  \}$,
	where the eigenvalues $1-\mu_{1}$ and $1-\mu_{2}$ are with degeneracy $m_{1} - 1$ and $m_{2} - 1$.
	On the other hand, consider a Path topology with four vertices
	and
	$[  \sum_{i=-2}^{-m_{1}-1}  \pi_{i}, \pi_{-1}, \pi_{1}, \sum_{i=2}^{m_{2}+1}  \pi_{i}  ]$
	as the equilibrium distribution over its vertices.
	For this Path topology, the sorted optimal eigenvalues of the transition probability matrix
	are
	$ \{  -s, 0, s, 1  \} $
	Comparing these two sequences of the eigenvalues, it is obvious that tight interlacing (as defined in \cite{HAEMERS1995593}) with $k=2$ holds between them.
	Also, the optimal transition probabilities and $SLEM$ provided above are an example conforming Theorem
	\ref{theorem-2-3450}
	and Remark
	\ref{remark-sum-of-eigenvalues},
	i.e. for optimal transition probabilities, both $s$ and $-s$ are eigenvalues of the transition probabilities matrix.
\end{remark}

\begin{example}
	As an example, we consider a Bistar topology with parameters $m_{1}=3$, $m_{2}=4$ and equilibrium distribution $\boldsymbol{\pi} = [ 1.8, 2.3, 1.9, 7.8, 8.3, 2.1, 1.8, 1.7, 2.6 ]$.
	The $4$-th and $5$-th vertices are the central vertices of the star topologies that form the Bistar topology.
	Using the optimal transition probabilities, the optimal $SLEM$ is equal to $0.681843$, while using the transition probabilities according to Metropolis transition probabilities
	(\ref{eq:Eq20191015-metropolis-transition-probability}),
	the value of $SLEM$ is equal to $0.877593$.
\end{example}

\subsection{Complete Graph Topology}
Complete graph is the clique lift of a single vertex.
The optimal $SLEM$ of complete graph is zero and the optimal transition probability matrix for this topology can be written as
$\boldsymbol{P} = \boldsymbol{I} - \boldsymbol{D}^{-1}\boldsymbol{L}(q) = \boldsymbol{I} - \boldsymbol{D}^{-1} \boldsymbol{L}(q) = ( \boldsymbol{J} \boldsymbol{D} ) / ( \sum_{k=1}^{N}\pi_k ) )$.

\section{FMRMC Problem over Topologies with Symmetric Equilibrium Distribution}
\label{sec:FMMCSymmetricResults}
In this section, we address the FMRMC problem over topologies with symmetric equilibrium distribution.

\subsection{Exploiting Symmetry of Graph}
An automorphism of the graph $\mathcal{G} = (\mathcal{V}, \mathcal{E})$ is a permutation $\sigma$ of $\mathcal{V}$ such that $\{i,j\} \in \mathcal{E}$ if and only if $\{\sigma(i),\sigma(j)\}\in \mathcal{E}$. The set of all such permutations, with composition as the group operation, is called the automorphism group of the graph and denoted by $Aut(\mathcal{G})$.
For a vertex $i \in \mathcal{V}$, the set of all images $\sigma(i)$, as $\sigma$ varies through a subgroup $G \subseteq Aut(\mathcal{G})$, is called the orbit of $i$ under the action of $G$.
The vertex set $\mathcal{V}$ can be written as disjoint union of distinct vertex orbits,
where all vertices in each vertex orbit have the same equilibrium probability $(\pi_{i})$.
Similarly the edge set $\mathcal{E}$ can be written as disjoint union of distinct edge orbits.
\begin{proposition}
	\label{AutomorphismProposition}
	The optimal solution of
	the FMRMC problem (\ref{eq:Eq201712231257})
	has the property that
	the value of all weights $q$ over
	edges within an edge orbit are the same
	\cite{BoydFastestMixing2009,SaberThesis2015}.
\end{proposition}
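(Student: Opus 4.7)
The plan is to use the standard convexity plus group-invariance argument. The FMRMC problem as formulated in (\ref{eq:Eq201712231257}) is a convex optimization problem (as noted in Section \ref{sec:FMRMCCliqueLiftGraphsSDP}, where the SDP reformulation is derived), so its feasible set is convex and the objective $\mu(\boldsymbol{I} - \boldsymbol{D}^{-1}\boldsymbol{L}(q))$ is a convex function of $q$. Hence the set of optimal solutions is itself convex. The argument will be to show that the problem is invariant under the action of $G \subseteq Aut(\mathcal{G})$, and then symmetrize any optimal $q^{*}$ by averaging over $G$.

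First I would fix $\sigma \in G$ and define the permuted weight vector $q^{\sigma}$ by $q^{\sigma}_{ij} = q_{\sigma(i),\sigma(j)}$. I would then verify that $q^{\sigma}$ satisfies every constraint of (\ref{eq:Eq201712231257}): symmetry $q^{\sigma}_{ij}=q^{\sigma}_{ji}$ and non-negativity are immediate; the edge-support constraint $q^{\sigma}_{ij}=0$ for $\{i,j\}\notin\mathcal{E}$ follows because $\sigma$ is an automorphism, i.e.\ $\{i,j\}\in\mathcal{E}\Leftrightarrow\{\sigma(i),\sigma(j)\}\in\mathcal{E}$; and the inequality $-\sum_{k\neq i} q^{\sigma}_{ik}+\pi_i \geq 0$ reduces to $-\sum_{k\neq \sigma(i)}q_{\sigma(i),k}+\pi_{\sigma(i)} \geq 0$ after a change of dummy index, which holds because $\pi_i=\pi_{\sigma(i)}$ (vertices in the same orbit share the same equilibrium value).

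Next I would show that the objective is invariant, i.e.\ $\mu(\boldsymbol{I}-\boldsymbol{D}^{-1}\boldsymbol{L}(q^{\sigma})) = \mu(\boldsymbol{I}-\boldsymbol{D}^{-1}\boldsymbol{L}(q))$. Letting $\boldsymbol{\Pi}_{\sigma}$ denote the permutation matrix associated with $\sigma$, one checks directly from the definition of $\boldsymbol{L}(q)$ that $\boldsymbol{L}(q^{\sigma}) = \boldsymbol{\Pi}_{\sigma}^{T}\boldsymbol{L}(q)\boldsymbol{\Pi}_{\sigma}$, and since $\pi$ is constant on orbits we have $\boldsymbol{\Pi}_{\sigma}^{T}\boldsymbol{D}\boldsymbol{\Pi}_{\sigma}=\boldsymbol{D}$, hence $\boldsymbol{I}-\boldsymbol{D}^{-1}\boldsymbol{L}(q^{\sigma})$ is similar to $\boldsymbol{I}-\boldsymbol{D}^{-1}\boldsymbol{L}(q)$ and the two have identical spectra, and in particular the same $SLEM$.

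Finally, given any optimal $q^{*}$ with optimal value $\mu^{*}$, form $\bar{q} = \frac{1}{|G|}\sum_{\sigma\in G} (q^{*})^{\sigma}$. Each summand is feasible with objective $\mu^{*}$, so by convexity of the feasible set $\bar{q}$ is feasible, and by convexity of the objective $\mu(\boldsymbol{I}-\boldsymbol{D}^{-1}\boldsymbol{L}(\bar{q}))\leq \mu^{*}$, hence $\bar{q}$ is also optimal. By construction $\bar{q}$ is invariant under the $G$-action on edges, so it is constant on every edge orbit. The only real subtlety, and thus the main place to be careful, is the verification that the permutation similarity above preserves the $SLEM$ (and not just the spectrum of $\boldsymbol{L}(q)$); this is where the hypothesis that $\pi$ is constant on vertex orbits is essential, since otherwise $\boldsymbol{\Pi}_{\sigma}$ would not commute with $\boldsymbol{D}$ and the similarity would fail.
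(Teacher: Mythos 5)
Your argument is correct and is essentially the standard symmetrization proof from the references the paper cites for this proposition (the paper itself gives no proof, only the citation to \cite{BoydFastestMixing2009,SaberThesis2015}): convexity of the feasible set and of the objective, invariance of both under the automorphism action (which, as you rightly stress, requires $\pi$ to be constant on vertex orbits so that $\boldsymbol{\Pi}_{\sigma}$ commutes with $\boldsymbol{D}$), and averaging an optimizer over $G$. The only reading caveat is that the proposition's phrase ``the optimal solution has the property'' should be understood as ``there exists an optimal solution constant on each edge orbit,'' which is precisely what your averaging construction delivers.
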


Based on proposition \ref{AutomorphismProposition},
it can be concluded that the optimal weights over all edges of an edge transitive graph are equal.
An edge (vertex) transitive graph is a graph that has only one edge (vertex) orbit.
Note that the edge transitivity and vertex transitivity of a graph are two different properties.

\subsection{Symmetric Tree}
\label{sec:SymmetricTree}
Symmetric tree topology is a tree graph that all vertices on the same depth have the same number of children.
Depth of the tree is denoted by $n$ and the number of children of vertices at $i$-th depth is denoted by $m_{i}$ for $i=0, ..., n-1$, where $i=0$ corresponds to the root vertex and it is assumed that $m_{0} \geq 2$.
The total number of vertices in the symmetric tree topology (denoted by $N$) is equal to $1 + \sum_{i=1}^{n} \prod_{j=0}^{i-1} m_{j}$.
We denote a vertex at $i$-th depth by $(1, \alpha_{1}, ..., \alpha_{i})$, where $\alpha_{j}$ varies from $1$ to $m_{j}$ for $j=1, ..., i$.
The root of the tree is denoted by $(1)$.
A symmetric tree topology with parameters $m_{0} = 2$, $m_{1} = 1$, $m_{2} = 3$ is depicted in Fig. \ref{fig:SymmetricTreeTopology}.
\begin{figure}
	\centering
	\includegraphics[width=120mm]{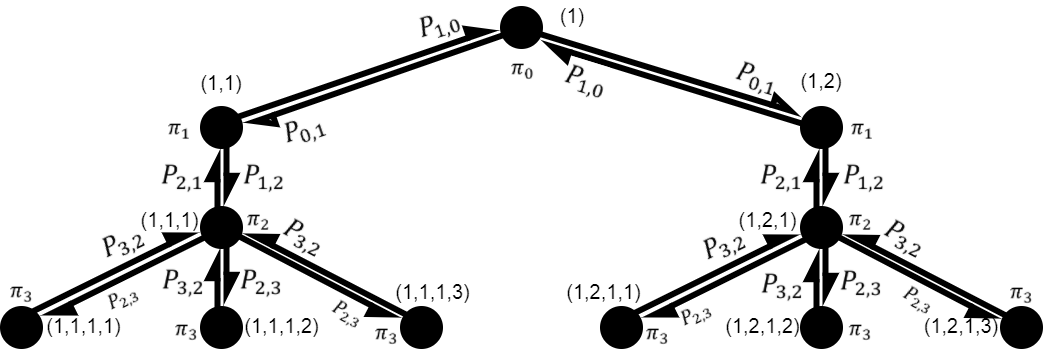}
	\caption{Symmetric tree topology ($m_{0} = 2$, $m_{1} = 1$, $m_{2} = 3$).}
	\label{fig:SymmetricTreeTopology}
\end{figure}

We assume that the equilibrium distribution is symmetric regarding the depth of the tree.
Thus, the automorphism group of the symmetric tree topology is $\mathcal{S}_{m_{0}} \times \mathcal{S}_{m_{1}} \times ... \times \mathcal{S}_{m_{n-1}}$ (permutation group of branches).
In other words, vertices with the same parent and in the same depth are in the same vertex orbit.
As a result, they have the same probability in the equilibrium distribution, i.e. the equilibrium distribution of vertices $( 1, \alpha_{1}, ..., \alpha_{k} )$ is equal to $\pi_{k}$ for $k=1, ..., n$.
Similarly, edges in the same depth are in the same edge orbit, and from Proposition \ref{AutomorphismProposition}, it is obvious that the optimal weight over edges in the same depth is the same.
Thus, for the weights on the edges between vertices $( 1, \alpha_{1}, ..., \alpha_{k} )$ and $( 1, \alpha_{1}, ..., \alpha_{k}, \alpha_{k+1} )$ we use $q_{k}$ for $k = 1, ..., n-2$
and for the weights on the edges between vertices $(1)$ and $( 1, \alpha_{1} )$, we use $q_{0}$.
As explained in Appendix \ref{sec:SymmetricTree-Appendix},
For the equilibrium distributions that satisfy
$ 2 \pi_{0}\geq m_{0} \pi_{1} $,
$m_{0} \pi_{1}^{2} \geq 2 m_{1} \pi_{0} \pi_{2}$
and
$m_{k} \pi_{k+1}^{2} \geq m_{k+1} \pi_{k} \pi_{k+2}$
for $k = 1, ..., n-2$,
the optimal value of $q$ is as below,
\begin{equation}
	\label{eq:Eq201801186054}
	\begin{gathered}
		q_{0}  =  2 \pi_{0} \pi_{1} / ( 2 \pi_{0}  +  m_{0} \pi_{1} ),
	\end{gathered}
\end{equation}
\begin{equation}
	\label{eq:Eq201801186064}
	\begin{gathered}
		q_{k}  =  \pi_{k} \pi_{k+1}  /  ( \pi_{k} + m_{k} \pi_{k+1} )
		\;\; \text{for} \;\; k = 1,...,n-1,
	\end{gathered}
\end{equation}
and
the optimal value of $SLEM$ (i.e. $s$) can be obtained from the recursive solution of  the following set of equations
\begin{subequations}
	\label{eq:Eq201801186075}
	\begin{gather}
			s  a_{0}	=   ( \sqrt{m_{1}} /  \pi_{1} )  a_{1},
			\;\;
			s  a_{n-1}    =    ( \sqrt{m_{n-1}} / \pi_{n-1} )  a_{n-2}
		\label{eq:Eq201801186075a}
		\\
		s  a_{k}	=   ( \sqrt{m_{k}} / \pi_{k} )  a_{k-1}    +    ( \sqrt{m_{k+1}} / \pi_{k+1} )  a_{k+1}
		\label{eq:Eq201801186075b}
	\end{gather}
\end{subequations}
where (\ref{eq:Eq201801186075b}) holds for $k=1, ..., n-2$.
In the recursive solution of equations (\ref{eq:Eq201801186075}), after deriving $a_{k+1}$ in terms of $a_{k}$ for $i=1, ..., n-2$, a polynomial is obtained in terms of $s$.
The roots of this polynomial are the eigenvalues $\lambda_i$, where $s$ (i.e. $SLEM$) is the largest of these roots.

\subsubsection{Symmetric Tree of Depth Two}
\label{sec:SymmetricTreeDepth2}
In the following, as an example we provide the optimal $SLEM$ for a symmetric tree topology of depth two with all possible equilibrium distributions.
More detailed solution is provided in Appendix \ref{sec:SymmetricTreeDepth2-Appendix}.
\subsubsection*{Case $1$}
For equilibrium distributions that satisfy
$ 2 \pi_{0}\geq m_{0} \pi_{1} $
and
$  m_{0} \pi_{1}^{2} \geq 2 m_{1} \pi_{0} \pi_{2}$,
the optimal value of $q$ is same as those provided in (\ref{eq:Eq201801186054}) and (\ref{eq:Eq201801186064})
and
the optimal value of $SLEM$ is
$\frac{    \sqrt{  \frac{  A  }{  m_{1}\pi_{2} + \pi_{1}  }  }  +  m_0\pi_1  }{  2( m_{0} \pi_{1} + 2 \pi_{0})   }$,
where
$A = m_{0}^{2} \pi_{1}^{2} \left( m_{1} \pi_{2} + \pi_{1} \right) + 8 m_{1}\pi_0\pi_2 (m_{0}\pi_{1} + 2\pi_{0})$.
\subsubsection*{Case $2$}
For equilibrium distributions that satisfy
$2 \pi_{0}< m_{0} \pi_{1}$
and
$m_{0} \pi_{1}^{2} \geq 2 m_{1} \pi_{0} \pi_{2}$,
the optimal values of $q_0$ and $q_1$ are $q_0=\frac{\pi_0}{m_0}$ and $q_1=\frac{(3m_0\pi_1-2\pi_0)\pi_2}{2m_0(m_1\pi_2+\pi_1)}$, respectively,
and
the optimal value of $SLEM$ is
$\left. \sqrt{  \frac{  A  }{  m_{1} \pi_{2} + \pi_{1}  }  }  \right/
(  4 \pi_{1} m_{0}  )
+
\frac{  7  }{  4  } $
where
$A = \pi_{1} ( 9 m_{0}^{2} \pi_{1} ( m_{1} \pi_{2} + \pi_{1} ) - 24 m_{0} \pi_{0} \pi_{1} + 16 \pi_{0}^{2} )$.
\subsubsection*{Case $3$}
For equilibrium distributions that satisfy
$ 2 \pi_{0}\geqslant m_{0} \pi_{1}$
and
$m_{0} \pi_{1}^{2} < 2 m_{1} \pi_{0} \pi_{2}$,
the optimal values of $q_0$ and $q_1$ are $q_0= \frac{2 \pi_0 (m_1 \pi_2 - \pi_1)}{ m_0 m_1 \pi_2 - 2 \pi_0}$ and $q_1=\frac{ \pi_2 (m_0 \pi_1 - 2 \pi_0)
}{ m_0·m_1 \pi_2 - 2 \pi_0}$, respectively,
and
the optimal $SLEM$ is
$\frac{\sqrt{A}+\pi_1(m_0(3m_1\pi_2 + \pi_1) - 8\pi_0)}{2\pi_1(m_0m_1\pi_2 - 2\pi_0)} $ where
$A$ $=$
$\pi_1$ $($ $m_0^2 \pi_1$ $($ $m_1^2 \pi_2^2$  $+$ $2 m_1 \pi_1 \pi_2$ $+$ $\pi_1$ $)$ $-$ $16 m_0 m_1 \pi_0 \pi_1 \pi_2$ $+$ $16 m_1 \pi_0^2 \pi_2$ $)$ $)$.
\subsubsection*{Case $4$}
For equilibrium distributions that satisfy
$m_{0} \pi_{1} > 2 \pi_{0}$
and
$2 m_{1} \pi_{0} \pi_{2}> m_{0} \pi_{1}^{2}  $,
the optimal value of $q_0$ and $q_1$ are
$q_0=\frac{\pi_0}{m_0}$ and $q_1=\frac{m_0\pi_1-\pi_0}{m_0m_1}$, respectively,
and
the optimal value of $SLEM$ is
$\frac{  \sqrt{  A  }  }{  2 \pi_{1} \pi_{2} m_{0} m_{1}  }   +   \frac{  m_0 (m_1·\pi_2 - \pi_1) + \pi_0  }{  2 m_{0} m_{1} \pi_{2}  }$
where
$A$ $=$ $\pi_{1}$ $($ $m_{0}^{2} \pi_{1}$ $($ $m_{1}^{2} \pi_{2}^{2}$ $+$ $2 m_{1} \pi_{1} \pi_{2}$ $+$ $\pi_{1}^{2}$ $)$ $-$ $2 m_{0} \pi_{0} \pi_{1}$ $($ $3 m_{1} \pi_{2}$ $+$ $\pi_{1}$ $)$ $+$ $\pi_{0}^{2}$ $($ $4 m_{1} \pi_{2}$ $+$ $\pi_{1}$ $)$ $)$ .

\subsubsection{Special Case of Symmetric Tree Topology}
\label{sec:Special-Case-of-Symmetric-Tree-Topology}
Considering a symmetric tree of depth $n$,
if the equilibrium distribution satisfies
$m_{0} \pi_{1} > 2 \pi_{0}$
and
$2 m_{i} \pi_{i-1} \pi_{i+1}> m_{i-1} \pi_{i}^{2}$ for
$i=1,...,k$,
it can be shown that the
optimal $q_{i}$ for
$i=0,...,k$
is
$q_{i}$ $=$ $($  $m_{0}$ $($ $m_{1} ...$ $($ $m_{i-2}$ $($ $m_{i-1} \pi_{i}$ $-$ $\pi_{i-1}$ $)$ $+$ $\pi_{i-1}$ $)$ $...$ $+$ $(-1)^{i-1}\pi_{1}$ $)$ $+$ $(-1)^{i}\pi_{0}$  $) / ($   $\prod_{j=0}^{i} \pi_{j}$  $)$,
where $k \leq n-1$.

\begin{example}
	As an example, we consider a symmetric tree topology with parameters $n=3$, $m_{0}=2$, $m_{1}=1$, $m_{2}=3$ and equilibrium distribution $\boldsymbol{\pi}_{0} = 6.1$, $\boldsymbol{\pi}_{1} = 5.5$, $\boldsymbol{\pi}_{2} = 3.4$, $\boldsymbol{\pi}_{3} = 0.3$.
	Using the optimal transition probabilities, the optimal $SLEM$ is equal to $0.793041$, while using the transition probabilities according to Metropolis transition probabilities
	(\ref{eq:Eq20191015-metropolis-transition-probability}),
	the value of $SLEM$ is equal to $0.865453$.
\end{example}

\subsection{Symmetric Star}
\label{sec:SymmetricStar}

Symmetric star topology of order $(m,n)$ consists of $m$ path subgraphs of length $n$, connected to one central node.
We assume that the equilibrium distribution is symmetric regarding the distance from the central node, i.e.,
the equilibrium distribution for vertices $(i,j)$ is equal to $\pi_{j}$ for $i=1, ..., m$, $j=1, ..., n$ and the equilibrium distribution for the central node is denoted by $\pi_{0}$.

Symmetric star topology is a special case of symmetric tree topology with $m_{0} = m$ and $m_{i}=1$ for $i = 1, ..., n$.
Based on the results presented in Subsection \ref{sec:SymmetricTree},
we can conclude that the optimal $q_{j}$ is equal to $\frac{\pi_{j} \pi_{j+1} } {\pi_{j} + \pi_{j+1}}$ for $j=1, ..., n-1$
and
the optimal value of $q_{0}$ is $( 2 \pi_{1} \pi_{0} ) / ( m \pi_{1} + 2 \pi_{0} )$, given that
$m \pi_{1} \leq 2 \pi_{0}$,
$2 \pi_{0} \pi_{2} \leq m \pi_{1}^{2}$,
and
$\pi_{i} \pi_{i+2} \leq \pi_{i+1}^{2}$, for $i = 1, 2, ..., n-2$.
Also, the symmetric star topology discussed here is a special case of the star topology addressed in Subsection \ref{sec:General-Star-Topology}, where $\pi_{i,\alpha} = \pi_{i}$ for $i=1,...,n$, and $\alpha = 1, ..., m$.

For $n=1$, the optimal value of $SLEM$ is equal to
$\frac{ m \pi_{1} }{ m \pi_{1} + 2 \pi_{0} }$.
In \cite{Cihan2015}, the special case of a star topology with $n=1$ and the equilibrium distribution $\pi_{0} = m / ( m+1 )$, $\pi_{1} = 1 / ( m+1 )$ has been studied where the optimal $SLEM$ is equal to $1/3$ and the optimal transition probabilities are $\boldsymbol{P}_{0,1} = 2 / ( 3 m )$ and $\boldsymbol{P}_{1,0} = 2 / 3$.
The results in \cite{Cihan2015}
are in agreement with the results obtained in this subsection.
For $n=2$, the optimal value of $SLEM$ is
$\frac{ m \pi_{1} }{ 2 \left( m \pi_{1} + 2 \pi_{0} \right) }   +   \frac{1}{2} \sqrt{  \frac{ m^2 \pi_{1}^{2} }{ \left( 2\pi_{0} + m \pi_{1} \right)^{2} }  +  \frac{ 8 \pi_{0} \pi_{2} }{ \left( \pi_{1} + \pi_{2} \right) \left( 2\pi_{0} + m \pi_{1} \right) }  }$.

\begin{example}
	As an example, we consider a symmetric star topology with parameters $n=2$, $m=3$ and equilibrium distribution $\boldsymbol{\pi}_{0} = 5.3$, $\boldsymbol{\pi}_{1} = 3.1$,  $\boldsymbol{\pi}_{2} = 1.9$.
	Using the optimal transition probabilities, the optimal $SLEM$ is equal to $0.740632$, while using the transition probabilities according to Metropolis transition probabilities
	(\ref{eq:Eq20191015-metropolis-transition-probability}),
	the value of $SLEM$ is equal to $0.84752$.
\end{example}

\subsection{Complete Cored Symmetric (CCS) Star}
\label{sec:CCSStar}

CCS star topology of order $(m,n)$ consists of $m$ path subgraphs of length $n$, connected to each other at one end such that the connected part of path subgraphs form a complete graph, called core.
Each one of path subgraphs contains $n$ edges.
A CCS star graph has $|\mathcal{V}| = m(n+1)$ nodes and $|\mathcal{E}| = nm + m(m+1)/2$ edges.
We denote the nodes of the graph by $\mathcal{V}  =  \{ (i,j) | i=1, ..., m, \;\; j=0, ..., n \}$.

We assume that the equilibrium distribution is symmetric regarding the distance from the central core.
Thus, the automorphism group of CCS Star topology is $\mathcal{S}_{m}$ (permutation group of subgraphs).
In other words, nodes with the same distance from the central core are in the same vertex orbit.
As a result, they have the same probability in the equilibrium distribution, i.e. the equilibrium distribution for nodes $(i,j)$ is equal to $\pi_{j}$ for $i=1, ..., m$, $j=0, ..., n$.
Similarly, edges with the same distance from the central core are in the same edge orbit, and from Proposition \ref{AutomorphismProposition}, it is obvious that the optimal weight over edges with the same
distance from the central core is the same.
Thus, for the weights on the edge between nodes $(i,j-1)$ and $(i,j)$ (for $i=1, ..., m$, $j=1, ..., n$), we use $q_{j}$ for $j=1, ..., n$ and for the weights on the edges in the central core, we use $q_{0}$.

The optimal
results
are
$q_{j} = \frac{ \pi_{j} \pi_{j-1} }  { \pi_{j} + \pi_{j-1} }$ for $j=1, ..., n$
and $q_{0} = \pi_{0} / m$
given that $(m-1) \pi_{1} \leq \pi_{0}$ and the constraint (\ref{eq:Eq201711171284}) are satisfied.
For $n=1$, the optimal value of $SLEM$ is equal to
$\sqrt{  \frac{ \pi_{1} }{ \pi_{0} + \pi_{1} }  }$,
and
for $n=2$, the optimal value of $SLEM$ is equal to
$\sqrt{  \frac{ \pi_{1} }{ \pi_{0} + \pi_{1} }  +  \frac{ \pi_{0} \pi_{1} }{ \left( \pi_{0} + \pi_{1} \right) \left( \pi_{1} + \pi_{2} \right) }  }$.
In the case of $n=2$, if $(m-1) \pi_{1} > \pi_{0}$, then the optimal results are
$q_{0} = (  \pi_{0} ( 2 \pi_{1} + \pi_{0} )  ) / (  (m-1) \pi_{0} + ( 4 m - 2 )\pi_{1}  )$,
$q_{1} = (  2 m \pi_{0} \pi_{1}  ) / (  (m-1)\pi_{0} + ( 4 m - 2 )\pi_{1}  )$,
and
the optimal $SLEM$ is
$(  2 ( 2 m - 1 ) \pi_{1} - \pi_{0}  ) / (  ( m - 1 ) \pi_{0} + ( 4 m - 2 )\pi_{1}  )$.
The detailed solution for this topology is provided in Appendix \ref{sec:CCSStar-Appendix}.

Considering the definition of lifted graph (Definition \ref{LiftDefinition}), it is obvious that the CCS star topology is the path lift of a complete graph,
and interestingly, the FMRMC problem over the lifted graph is reduced to the FMRMC problem over the fiber graph (a path graph) with a self loop in the vertex connected to the base.

\begin{example}
	As an example, we consider a CCS star topology with parameters $n=2$, $m=3$ and equilibrium distribution $\boldsymbol{\pi}_{0} = 12.8$, $\boldsymbol{\pi}_{1} = 4.7$,  $\boldsymbol{\pi}_{2} = 1.1$.
	Using the optimal transition probabilities, the optimal $SLEM$ is equal to $0.638193$, while using the transition probabilities according to Metropolis transition probabilities
	(\ref{eq:Eq20191015-metropolis-transition-probability}),
	the value of $SLEM$ is equal to $0.702632$.
\end{example}

\section{Conclusions}
\label{sec:Conclusions}

This paper addresses optimization of the reversible Markov chains in terms of their mixing rate towards the given equilibrium distribution, by providing the optimal transition probabilities.
Two general scenarios for the underlying topologies of the FMRMC problem have been considered.
First scenario is a subgraph connected to an arbitrary graph where we have shown that the optimal transition probabilities over the edges of the subgraph can be determined independent of the rest of the underlying topology.
Second scenario is the clique lifted graphs where we have shown that the FMRMC problem over given clique lifted graph is reducible to the FMRMC problem over its base graph, and the optimal convergence rate of the clique lifted graph is equal to that of its base graph.
Based on these results, we  have addressed the FMRMC problem over a
variety
of stand-alone topologies and subgraphs with common base graph.
Among the topologies addressed in this paper, star topology is of particular interest.
Since,
the optimal results of the FMRMC problem have been provided for all possible equilibrium distributions over the star topology,
and also,
it serves as the base topology for a number of well-known topologies including friendship and windmill topologies.
We have shown the gain in terms of convergence rate, by comparing the optimal transition probabilities with Metropolis transition probabilities.
This comparison has been performed for every topology and subgraph that the optimal results have been provided.
In the case of CCS star topology we have shown that the FMRMC problem over the lifted graph is reduced to the FMRMC problem over the fiber graph rather than the base graph.
This is an interesting feature of the lifted graphs which is worthy of further investigation.
Also, another interesting topic within the context of lifted graphs is other types of lift, i.e. lifts other than clique lift.

\appendix

\section{Cauchy Interlacing Theorem \cite{Golub1989}}
\label{sec:CauchyInterlacingTheorem}

Let $\boldsymbol{A}$ and $\boldsymbol{B}$ be $n \times n$ and $m \times m$ matrices, where $m \leq n$, $\boldsymbol{B}$ is called a compression of $\boldsymbol{A}$ if there exists an
projection $\boldsymbol{R}$ onto a subspace of dimension $m$ such that $\boldsymbol{R} \times \boldsymbol{A} \times \boldsymbol{R}^{T} = \boldsymbol{B}$ and $\boldsymbol{R} \times \boldsymbol{R}^{T} = \boldsymbol{I}_{m}$.
The Cauchy interlacing theorem states that the eigenvalues of $\boldsymbol{B}$ interlace the eigenvalues of $\boldsymbol{A}$,
i.e. if the eigenvalues of
$\boldsymbol{A}$ are
$\lambda_{n}\left( \boldsymbol{A} \right) \leq ... \leq \lambda_{1}\left( \boldsymbol{A} \right)$,
and those of $\boldsymbol{B}$ are
$\lambda_{m}\left( \boldsymbol{B} \right) \leq ... \leq \lambda_{1}\left( \boldsymbol{B} \right)$,
then for all $j=1,...,m$, we have
\begin{equation}
	\nonumber
	\begin{gathered}
		\lambda_{n-j+1}\left( \boldsymbol{A} \right)
		\leq
		\lambda_{m-j+1}\left( \boldsymbol{B} \right)
		\leq
		\lambda_{m-j+1}\left( \boldsymbol{A} \right).
	\end{gathered}
\end{equation}

\section{Path Subgraph}
\label{sec:pathbranch-Appendix}
Substituting $\boldsymbol{L}_{B}\left( q \right)$ (given in (\ref{eq:Eq201711181364})) in (\ref{eq:Eq201711251504}),
equations (\ref{eq:Eq201712021281}) and (\ref{eq:Eq201712021292}) for the path subgraph can be written as
${\boldsymbol{Z}_1}_{B}    =    \sum\nolimits_{i=1}^{N-1} a_{i} \boldsymbol{D}_{B}^{-\frac{1}{2}} \left( {\boldsymbol{e}_{B}}_{i} - {\boldsymbol{e}_{B}}_{i+1} \right)$
and
${\boldsymbol{Z}_2}_{B}$    $=$    $\sum\nolimits_{i=1}^{N-1} b_i \boldsymbol{D}_{B}^{-\frac{1}{2}} \left( {\boldsymbol{e}_{B}}_{i} - {\boldsymbol{e}_{B}}_{i+1} \right)$,
where by substituting them in (\ref{eq:Eq201711251504}),
we have
\begin{subequations}
	\label{eq:Eq201711171119}
	\begin{gather}
			\left( s - 1 \right) a_{1}  +  q_{1} \left(a_1
			\widehat{\pi}_1
			-
			( a_{2} / \pi_{2} )
			\right)  =  0
		\label{eq:Eq201711171119a}
		\\
			\left( s - 1 \right) a_{i}  +  q_{i} \left(a_i
			\widehat{\pi}_i
			-
			( a_{i-1} / \pi_i )
			-
			( a_{i+1} / \pi_{i+1} )
			\right)  =  0,
		\label{eq:Eq201711171119b}
		\\
			\left( s - 1 \right) a_{N-1}  +  q_{N-1} \left(a_{N-1}
			\widehat{\pi}_{N-1}
			-
			\frac{ a_{N-2} }{ \pi_{N-1} }
			-
			\frac{a_{N}}{\pi_{N}}
			\right)  =  0
		\label{eq:Eq201711171119c}
		\\
			\left( s + 1 \right) b_{1}  -  q_{1} \left(b_1
			\widehat{\pi}_{1}
			-
			( b_{2} / \pi_{2} )
			\right)  =  0
		\label{eq:Eq201711171119d}
		\\
			\left( s + 1 \right) b_{i}  -  q_{i} \left(b_i
			\widehat{\pi}_{i}
			-
			( b_{i-1} / \pi_i )
			-
			( b_{i+1} / \pi_{i+1} )
			\right)  =  0,
		\label{eq:Eq201711171119e}
		\\
			\left( s + 1 \right) b_{N-1}  -  q_{N-1} \left(b_{N-1}
			\widehat{\pi}_{N-1}
			-
			\frac{b_{N-2}}{\pi_{N-1}}
			-
			\frac{b_{N}}{\pi_{N}}
			\right)  =  0
		\label{eq:Eq201711171119f}
	\end{gather}
\end{subequations}
where
$\widehat{\pi}_i = ( 1 / \pi_i ) + ( 1 / \pi_{i+1} )$
and
(\ref{eq:Eq201711171119b}) and (\ref{eq:Eq201711171119e}) hold for $i=2,...,N-2$.
Since the edges in path subgraph are not part of any loop, then from (\ref{eq:Eq20171114977}), we have,
\begin{equation}
	\label{eq:Eq201711171166}
	\begin{gathered}
		\left( (s-1) a_{i} \right)^{2}  =  \left( (s+1) b_{i} \right)^{2},
	\end{gathered}
\end{equation}
for $i=1,...,N-1$.
Relations in (\ref{eq:Eq201711171166}) can also be obtained from (\ref{eq:Eq201712011530}).
From (\ref{eq:Eq201711171166}),
we can conclude the following
\begin{equation}
	\label{eq:Eq201711171181}
	\begin{gathered}
		\left(  a_{i} / a_{i+1} \right)^{2}  =   \left(  b_{i} / b_{i+1} \right)^{2},
		\;\; \text{for} \;\; i=1,...,N-2.
	\end{gathered}
\end{equation}
On the other hand from (\ref{eq:Eq201711171119a}) and (\ref{eq:Eq201711171119d}), we have
$( s -1+   q_{1} ( ( 1 / \pi_1 ) + ( 1 / \pi_{2} ) ) )^2 a_1^2 = ( q_1 ( a_{2} / \pi_{2} ) )^2$
and
$( s +1-   q_{1} ( ( 1 / \pi_1 ) + ( 1 / \pi_{2} ) ) )^2 b_1^2 = ( q_1 ( b_{2} / \pi_{2} ) )^2$,
where considering (\ref{eq:Eq201711171181}) for $i=1$,
it can be concluded that
		$( s -1+   q_{1} ( ( 1 / \pi_1 ) + ( 1 / \pi_{2} ) ) )^2
		=
		( s +1-   q_{1} ( ( 1 / \pi_1 ) + ( 1 / \pi_{2} ) ) )^2$
For this equation to be satisfied,
either $s$ should be zero (which is not acceptable) or the following should hold
$q_1 = ( \pi_1\pi_2 ) / ( \pi_1+\pi_2 )$.
Substituting $q_1$
in (\ref{eq:Eq201711171119a}) and (\ref{eq:Eq201711171119d}), $a_1$ ($b_1$) can be written in terms of $a_2$ ($b_2$).
Using these results in (\ref{eq:Eq201711171119b}) and (\ref{eq:Eq201711171119e}) for $i=2$,
and considering (\ref{eq:Eq201711171181}) for $i=2$,
we have
		$( s -\frac{q_1q_2}{(s\pi_2)^2}-1  +  q_{2} (\frac{1}{\pi_2} +\frac{1}{\pi_{3}}) )^2
		=
		( s -\frac{q_1q_2}{(s\pi_2)^2}+1  -  q_{2} (\frac{1}{\pi_2} +\frac{1}{\pi_{3}}) )^2$,
where the following can be concluded
$q_2 = ( \pi_2 \pi_3 ) / ( \pi_2+\pi_3 )$.
Continuing this procedure recursively,
(\ref{eq:Eq201711171267}) is obtained for the optimal value of $q_i$.
In order to satisfy the last constraints of (\ref{eq:Eq201801262501}),
the equilibrium distribution
has to satisfy
(\ref{eq:Eq201711171284}).

\section{Star Topology}
\label{sec:StarAbstractSolution}
Consider the star topology with $m$ subgraphs of length one.
Let
$\mathcal{V} = \{(i)| i=1,...,m\}\cup\{(0)\}$
be the set of nodes (with $(0)$ denoting the central node), and
$\mathcal{E} = \{(0,i)| i=1,...,m\}$ be the set of edges.
Considering
$\boldsymbol{e}_{i}$ for $i\in\mathcal{V}$ as the column vector with $i$-th element equal to one and zero otherwise,
matrix $\widetilde{\boldsymbol{L}}\left( q \right)$ can be written as
$\sum_{i=1}^{m} q_{i} \boldsymbol{e}_{0i} \boldsymbol{e}_{0i}^{T}$,
where
$\boldsymbol{e}_{0i} = \frac{\boldsymbol{e}_{0}}{\sqrt{\pi_0}} - \frac{\boldsymbol{e}_{i}}{\sqrt{\pi_i}}$
for
$i=1,...,m$.
Since the star topology is a tree graph the vectors
$\{ \boldsymbol{e}_{0i} | i=1,...,m \}$
are independent of each other and thus, they can form a basis.
Using
these basis,
(\ref{eq:Eq201801262582}) can be written as
$\boldsymbol{Z}_{1} = \sum\nolimits_{i=1}^{m} a_{i} \boldsymbol{e}_{0i}$,
$\boldsymbol{Z}_{2} = \sum\nolimits_{i=1}^{m} b_{i} \boldsymbol{e}_{0i}$,
$\boldsymbol{Z}_{3} = \sum\nolimits_{ i \in \mathcal{V} } c_{i} e_{i}$
and (\ref{eq:Eq20171114977})
as below,
\begin{equation}
	\label{eq:Eq201807237745-SS}
	\begin{gathered}
		\left( (s-1) a_{i} \right)^{2}  =  \left( (s+1) b_{i} \right)^{2},
		\;\;\quad \text{for} \;\; i=1,...,m.
	\end{gathered}
\end{equation}
Considering the Gram matrix
$\boldsymbol{G}_{ii} = \boldsymbol{e}_{0i}^{T}  \boldsymbol{e}_{0i} = \frac{1}{\pi_{0}} + \frac{1}{\pi_{i}}$, for $i=1,...,m$,
and
$\boldsymbol{G}_{ij} = \boldsymbol{e}_{0i}^{T}  \boldsymbol{e}_{0j} = \frac{1}{\pi_{0}}$ for $i \neq j=1,...,m$,
(\ref{eq:Eq20171114964}) can be written as
\begin{subequations}
	\label{eq:Eq201807237825-SS}
	\begin{gather}
		( a_{i} q_{i} / \pi_{i} ) + ( q_{i} / \pi_{0} ) \sum\nolimits_{j=1}^{m}  a_{j}
		=
		( 1 - s ) a_{i},
		\label{eq:Eq201807237825a-SS}
		\\
		( b_{i} q_{i} / \pi_{i} ) + ( q_{i} / \pi_{0} ) \sum\nolimits_{j=1}^{m}  b_{j}
		=
		( 1 + s ) b_{i},
		\label{eq:Eq201807237825b-SS}
	\end{gather}
\end{subequations}
for $i=1,...,m$,
Since
$\frac{1}{\pi_{0}} (\sum\nolimits_{j=1}^{m}  a_{j}) $ and $\frac{1}{\pi_{0}} (\sum\nolimits_{j=1}^{m} b_{j}) $
are independent of $i$, thus
following can be concluded from (\ref{eq:Eq201807237825-SS})
\begin{subequations}
	\label{eq:Eq201807237841-SS}
	\begin{gather}
			\left( \left( q_{i} / \pi_{i} \right) + s - 1 \right)  ( a_{i} / \pi_i )
			=
			\left( \left( q_{j} / \pi_{j} \right) + s - 1 \right)  ( a_{j} / \pi_j ),
		\label{eq:Eq201807237841a-SS}
		\\
			\left( \left( q_{i} / \pi_{i} \right) - s - 1 \right)  ( b_{i} / \pi_i )
			=
			\left( \left( q_{j} / \pi_{j} \right) - s - 1 \right)  ( b_{j} / \pi_j ),
		\label{eq:Eq201807237841b-SS}
	\end{gather}
\end{subequations}
for $i=1,...,m$,
Considering (\ref{eq:Eq201807237745-SS}), from (\ref{eq:Eq201807237841-SS}) we have
\begin{equation}
	\label{eq:Eq201807237861-SS}
	\begin{gathered}
		\left(
		\frac
		{ \left( q_{i} / \pi_{i} \right) + s - 1 }
		{ \left( q_{i} / \pi_{i} \right) - s - 1 }
		\right)^{2}
		=
		\left(
		\frac
		{ \left( q_{j} / \pi_{j} \right) + s - 1 }
		{ \left( q_{j} / \pi_{j} \right) - s - 1 }
		\right)^{2},
	\end{gathered}
\end{equation}
for $i,j=1,...,m$.
where the only acceptable conclusion is
$q_{i} / \pi_{i}    =    q_{j} / \pi_{j}$
for $i,j=1,...,m$.
Defining $\mu = \frac{ q_{i} }{ \pi_{i} }$, equations (\ref{eq:Eq201807237825-SS}) can be written as below,
\begin{subequations}
	\label{eq:Eq201807237897-SS}
	\begin{gather}
		a_{i} \mu + ( \mu \pi_{i}  /  \pi_{0} ) \sum\nolimits_{j=1}^{m}  a_{j}  =  ( 1 - s ) a_{i},
		\label{eq:Eq201807237897a-SS}
		\\
		b_{i} \mu + ( \mu \pi_{i} / \pi_{0} ) \sum\nolimits_{j=1}^{m}  b_{j}  =  ( 1 + s ) b_{i},
		\label{eq:Eq201807237897b-SS}
	\end{gather}
\end{subequations}
for $i=1,...,m$.
summing over $i=1,...,m$, we have
\begin{subequations}
	\label{eq:Eq201807237929-SS}
	\begin{gather}
		\left(
		\mu \left( 1 + ( 1 / \pi_{0} ) \sum\nolimits_{i=1}^{m} \pi_{i}  \right)  -  \left( 1 - s \right)
		\right)
		\sum\nolimits_{i=1}^{m} a_{i}
		=
		0,
		\label{eq:Eq201807237929a-SS}
		\\
		\left(
		\mu \left( 1 + ( 1 / \pi_{0} )  \sum\nolimits_{i=1}^{m} \pi_{i}  \right)  -  \left( 1 + s \right)
		\right)
		\sum\nolimits_{i=1}^{m} b_{i}
		=
		0,
		\label{eq:Eq201807237929b-SS}
	\end{gather}
\end{subequations}
where the acceptable answer is obtained for $\sum_{i=1}^{m} a_{i} = 0$ and $\sum_{i=1}^{m} b_{i}  \neq 0$, as
reported in
(\ref{eq:Eq201807237981-SS})
and
(\ref{eq:Eq201807237991-SS}).
From (\ref{eq:Eq201807237991-SS}), it is obvious that for $\sum_{i=1}^{m} \pi_{i} \leq 2\pi_{0}$, we have $SLEM \leq \frac{1}{2}$.

The results in
(\ref{eq:Eq201807237981-SS}) and (\ref{eq:Eq201807237991-SS}) hold true if
(\ref{eq:Eq20171114955}) holds true, i.e., $\pi_{0} \geq \sum_{i=1}^{m} q_{i}$.
Considering (\ref{eq:Eq201807237981-SS}), this constraint can be written as $\sum_{i=1}^{m} \pi_{i} \leq 2 \pi_{0}$.
Therefore, for $\sum_{i=1}^{m} \pi_{i} > 2 \pi_{0}$,
and assuming $b_{i} = 0$ for $i=1,...,m$,
(\ref{eq:Eq20171114916}) can be written as
$\left( \left( s - 1 \right) \frac{a_{i}}{q_i} \right)^{2} = c_{0}^{2} + c_{i}^{2}$, for $i=1,...,m$,
and from the fact that $c_{i}=0$ for $i=1,...,m$, it can be concluded that $\left( \left( s - 1 \right) \frac{a_{i}}{q_i} \right)^{2} = c_{0}^{2}$, and thus $(\frac{|a_{i}|}{q_i})^{2} = (\frac{|a_{j}|}{q_j})^{2}$ for $i,j=1,...,m$.
Hence, from (\ref{eq:Eq201807237841a-SS}), we have
$(\left( q_{i} / \pi_{i} \right) + s - 1 )\frac{q_i}{\pi_i}   =    \pm (\left(  \left( q_{j} / \pi_{j} \right) + s - 1  \right))\frac{q_j}{\pi_j} $,
where it can be concluded that
$q_{i} / \pi_{i}    =    q_{j} / \pi_{j}$
for $i, j=1,...,m$.
Defining $\mu = \frac{ q_{i} }{ \pi_{i} }$ for $i=1,...,m$, from $\sum_{i=1}^{m} q_{i} = \pi_{0}$, we have
$\mu =   \pi_{0} /  \left( \sum_{i=1}^{m} \pi_{i} \right)  $
and thus,
the optimal value of $q_{i}$ is as provided in (\ref{eq:Eq201807278235-SS}).
The acceptable answer
for $s$
is obtained for
$\sum_{i=1}^{m} a_{i}  = 0$, where from (\ref{eq:Eq201807237897a-SS}), we have $ s = 1 - \mu$, and
thus (\ref{eq:Eq201807278283-SS}) is concluded.
For the results provided here for the star topology, the statements in Remark \ref{Remark-StarBranch} are applicable.
Also, if the optimal point is strict, i.e., (\ref{eq:Eq201801262501c}) is satisfied with strict inequality, the convergence rate $(s)$ is equal to
$s = 1 - \mu = 1 - \boldsymbol{P}_{i,0}$.
Note that the results in (\ref{eq:Eq201807237981-SS}), (\ref{eq:Eq201807237991-SS}), (\ref{eq:Eq201807278235-SS}) and (\ref{eq:Eq201807278283-SS}) hold for $m \geq 3$.

\subsubsection*{Case of $m=2$}

In the case of $m=2$, the Star topology reduces to the Path topology and for the equilibrium distributions that satisfy
$\pi_{2}^{2} \geq \pi_{1} \pi_{3}$, the optimal results are same as those provided in Subsection \ref{sec:pathtopology}.
If $\pi_{2}^{2} < \pi_{1} \pi_{3}$, considering
the definition of $\boldsymbol{Z}_{1}$, $\boldsymbol{Z}_{2}$ and $\boldsymbol{Z}_{3}$ and
equations (\ref{eq:Eq20171114938a}) and (\ref{eq:Eq20171114938b})
and defining
$Q_1  =  1 - q_1 \left(  ( 1 / \pi_1 )  +  ( 1 / \pi_2 )  \right)$
and
$Q_2  =  1 - q_2 \left(  ( 1 / \pi_2 )  +  ( 1 / \pi_3 )  \right)$,
we have,
\begin{subequations}
	\label{eq:Eq201801165905-43-SS}
	\begin{gather}
		\left( \left( s - Q_{1} \right) \left( s - Q_{2} \right) - ( q_{1} q_{2} / \pi_{2}^{2} ) \right) a_{1} a_{2}        =  0
		\label{eq:Eq201801165905-43a-SS}
		\\
		\left( \left( s + Q_{1} \right) \left( s + Q_{2} \right) - ( q_{1} q_{2} / \pi_{2}^{2} ) \right) b_{1} b_{2}        =  0
		\label{eq:Eq201801165905-43b-SS}
	\end{gather}
\end{subequations}
From
(\ref{eq:Eq20171114916})
we have
\begin{subequations}
	\label{eq:Eq201801165928-86-SS}
	\begin{align}
		&\left( a_{1} \left( ( 1 / \pi_{1} ) + ( 1 / \pi_{2} ) \right) - ( a_{2} / \pi_{2} ) \right)^{2}
		\label{eq:Eq201801165928-86a-SS}
		\\&
		\quad\quad
		=
		\left( b_{1} \left( ( 1 / \pi_{1} ) + ( 1 / \pi_{2} ) \right) - ( b_{2} / \pi_{2} ) \right)^{2}
		+
		c_{0}^{2}
		\nonumber
		\\&
		\left( a_{2} \left( ( 1 / \pi_{3} ) + ( 1 / \pi_{2} ) \right) - ( a_{1} / \pi_{2} ) \right)^{2}
		\label{eq:Eq201801165928-86b-SS}
		\\&
		\quad\quad
		=
		\left( b_{2} \left( ( 1 / \pi_{3} ) + ( 1 / \pi_{2} ) \right) - ( b_{1} / \pi_{2} ) \right)^{2}
		+
		c_{0}^{2}
		\nonumber
	\end{align}
\end{subequations}
where
the optimal answer is obtained for $b_{1}, b_{2} = 0$, as
reported in (\ref{eq:Eq201808129359-SS}) and (\ref{eq:Eq201808129374-SS}).

The results reported above hold true for clique lifted graphs that their base graph is a path graph with three vertices,
where
the equilibrium distribution $(\pi_i)$ is replaced with sum of the equilibrium distributions $(\pi_{i,j})$ for all vertices of the fiber corresponding to the $i$-th vertex.

\section{Symmetric Tree}
\label{sec:SymmetricTree-Appendix}
Here, the FMRMC problem over symmetric tree topology with symmetric equilibrium distribution is addressed.
This topology is explained in Subsection \ref{sec:SymmetricTree}.

We associate vertex $( 1, \alpha_{1}, ..., \alpha_{k} )$ with vector $\boldsymbol{e}( 1, \alpha_{1}, ..., \alpha_{k} )$ for $k=1, ..., n$, and
root vertex with vector $\boldsymbol{e}(1)$,
where each of these vectors is a column vector of length $N$ with all elements equal to zero, except the element corresponding to their associated vertex.
Based on these vectors, for matrix $\boldsymbol{D}$ we have
$\boldsymbol{D}  =  \pi_{0} \boldsymbol{e}(1) \boldsymbol{e}(1)^{T}  +  \sum_{k=1}^{n} \pi_{k} \sum_{\alpha_{1}, ..., \alpha_{k}} \boldsymbol{e}( 1, \alpha_{1}, ..., \alpha_{k} )  \boldsymbol{e}( 1, \alpha_{1}, ..., \alpha_{k} )^{T}$.
Also, the symmetric Laplacian can be written as
$\boldsymbol{L}\left( q \right)  =  q_{0} \sum_{\alpha_{1}} ( \boldsymbol{e}(1) - \boldsymbol{e}(1, \alpha_{1}) ) ( \boldsymbol{e}(1) - \boldsymbol{e}(1, \alpha_{1}) )^{T}  +  \sum_{k=1}^{n-2}  q_{k} \sum_{\alpha_{1}, ..., \alpha_{k+1}}  ( \boldsymbol{e}( 1, \alpha_{1}, ..., \alpha_{k} ) - \boldsymbol{e}( 1, \alpha_{1}, ..., \alpha_{k+1} ) )   ( \boldsymbol{e}( 1, \alpha_{1}, ..., \alpha_{k} ) - \boldsymbol{e}( 1, \alpha_{1}, ..., \alpha_{k+1} ) )^{T}$.
We define new basis as
$\widetilde{\boldsymbol{e}}( 1, \alpha_{1}, ..., \alpha_{k} )  =  \frac{ 1 }{ \sqrt{m_{0} .... m_{k-1}} }  \sum_{\beta_{1}, ..., \beta_{k-1} = 1}^{m_{0}, ..., m_{k-1}} \omega_{ 1 }^{ \alpha_{1} \beta_{1} }  \omega_{ 2 }^{ \alpha_{2} \beta_{2} } ... \omega_{ k }^{ \alpha_{k} \beta_{k} }  \boldsymbol{e}( 1, \beta_{1}, ..., \beta_{k} )$,
with $\omega_{k} = e^{ \frac{ 2 \pi i }{ m_{k-1} } }$ for $\alpha_{k} = 0, ..., m_{k-1}-1$ and  $k = 1, ..., n-1$.
In the new basis, the transition probability matrix $\boldsymbol{P} = \boldsymbol{I}_{N} - \boldsymbol{D}^{-\frac{1}{2}} \boldsymbol{L}(q) \boldsymbol{D}^{-\frac{1}{2}}$ becomes block diagonal with
blocks $\boldsymbol{P}_{0}$ and $\boldsymbol{P}_{1}$,
where 
$
\boldsymbol{P}_{0}
=
\boldsymbol{I}_{n+1} - \sum_{i=0}^{n-1} q_{i} (  \frac{ \sqrt{ m_{i} } }{ \sqrt{ \pi_{i} } } \boldsymbol{e}^{'}_{i}  -  \frac{1}{ \sqrt{ \pi_{i+1} } } \boldsymbol{e}^{'}_{i+1} )    (  \frac{ \sqrt{ m_{i} } }{ \sqrt{ \pi_{i} } } \boldsymbol{e}^{'}_{i}  -  \frac{1}{ \sqrt{ \pi_{i+1} } } \boldsymbol{e}^{'}_{i+1} )^{T}
$
and
$\boldsymbol{P}_{1}$ is obtained by removing the first row and column of $\boldsymbol{P}_{0}$.
$\boldsymbol{e}^{'}_{i}$ is a $(n+1) \times 1$ vector with $1$ in the $i$-th position and zero elsewhere.
From Cauchy interlacing theorem (Appendix \ref{sec:CauchyInterlacingTheorem}), it can be concluded that the second largest eigenvalue of $\boldsymbol{P}$ is the largest eigenvalue of $\boldsymbol{P}_{1}$, and the smallest eigenvalue of $\boldsymbol{P}$ is
the smallest eigenvalue of $\boldsymbol{P}_{0}$.
Defining vector
$\boldsymbol{\nu} = \gamma^{-1} [ \sqrt{\pi_{0}}, \sqrt{\pi_{1} m_{0}}, ..., \sqrt{\pi_{n} m_{0} ... m_{n-1}} ]^{T}$
with
$\gamma = \pi_{0} + \pi_{1} m_{0} + \pi_{2} m_{0} m_{1} + ... + \pi_{n} m_{0} ... m_{n-1}$,
the FMRMC problem over symmetric tree topology can be written as
\begin{equation}
	\label{eq:Eq201801185995}
	\begin{aligned}
		\min\limits_{q}
		\;\;
		&s,
		\\
		s.t.
		\quad
		&
		\boldsymbol{P}_{1} \preceq s \boldsymbol{I},
		\;\;
		s \boldsymbol{I} \preceq \boldsymbol{P}_{0} -  \boldsymbol{\nu} \boldsymbol{\nu}^{T}.
	\end{aligned}
\end{equation}
Adopting the solution presented in Subsection \ref{sec:pathtopology} and Appendix \ref{sec:pathbranch-Appendix} for Path topology, the optimal results are obtained as in (\ref{eq:Eq201801186054}), (\ref{eq:Eq201801186064}) and (\ref{eq:Eq201801186075}).

\subsection{Symmetric Tree of Depth Two}
\label{sec:SymmetricTreeDepth2-Appendix}
In the following, as an example we provide the optimal $SLEM$ for a symmetric tree topology of depth two with all possible equilibrium distributions.
Defining
$\boldsymbol{Z}_{1}  =  \sum_{k=0}^{1}  a_{k} ( \sqrt{ \frac{ m_{k} }{ \pi_{k} } } \boldsymbol{e}_{k} - \frac{ \boldsymbol{e}_{k+1} }{ \sqrt{k+1} } )$
and
$\boldsymbol{Z}_{2}  =  b_{0} \frac{ \boldsymbol{e}_{1} }{ \sqrt{\pi_{1}} }  +  \sum_{k=1}^{1}  b_{k} ( \sqrt{ \frac{ m_{k} }{ \pi_{k} } } \boldsymbol{e}_{k} - \frac{ \boldsymbol{e}_{k+1} }{ \sqrt{k+1} } )$,
from (\ref{eq:Eq201801185995}), 
we have
\begin{subequations}
	\label{eq:Eq201808146755}
	\begin{gather}
		\left( s - 1 + \frac{ q_{0} }{ \pi_{1} } \right) a_{0} -
		q_{0}
		\psi_{1}
		a_{1}   =   0,
		\label{eq:Eq201808146755a}
		\\
		\left( s - 1 + q_{1}
		\widehat{\pi}_{1}
		\right) a_{1}
		-
		q_{1} \psi_{1}
		a_{0}  =  0,
		\label{eq:Eq201808146755b}
		\\
		\left( s + 1 +
		\psi_{1} q_{0}
		\right) b_{0}  -  q_{0}
		\widehat{\pi}_{0}
		b_1  =  0,
		\label{eq:Eq201808146755c}
		\\
		\left( s + 1 - q_{1}
		\widehat{\pi}_{1}
		\right) b_{1}  +
		q_{1} \psi_{1}
		b_{0}   =  0.
		\label{eq:Eq201808146755d}
	\end{gather}
\end{subequations}
where
$\widehat{\pi}_{i} = ( m_{i} / \pi_{i} ) + ( 1 / \pi_{i+1} )$.
and
$\psi_{1} = \sqrt{m_{1}} / \pi_{1} $.
To obtain non-trivial answers from (\ref{eq:Eq201808146755}), following equations should be satisfied,
\begin{subequations}
	\label{eq:Eq201808146788}
	\begin{gather}
		s^{2}
		- \left(  -2  +
		\frac{ q_{0} }{ \pi_{1} }
		+
		\widehat{\pi}_{1}
		q_{1}  \right) s
		-
		\frac{ m_{1} q_{0} q_{1}  }{ \pi_{1}^{2} }
		=  0,
		\label{eq:Eq201808146788a}
		\\
		s^{2}
		+ \left(  -2 +
		\widehat{\pi}_{0}
		q_{0}
		+
		\widehat{\pi}_{1}
		q_{1} \right) s
		-
		\frac{  m_{1} q_{0} q_{1}  }{  \pi_{1}^{2}  }
		=  0.
		\label{eq:Eq201808146788b}
	\end{gather}
\end{subequations}
In (\ref{eq:Eq201808146755}), if both $a_{i}$ and $b_{i}$ for $i=0,1$ are nonzero, then from (\ref{eq:Eq201808146788a}) from (\ref{eq:Eq201808146788a}), it can be concluded that $q_{0}$ and $q_{1}$ should satisfy
the following,
\begin{equation}
	\label{eq:Eq201808156813}
	\begin{gathered}
		-4 + 2
		\widehat{\pi}_{1}
		q_1 + \left( \frac{m_0}{\pi_0} +  \frac{2}{\pi_1} \right) q_0 = 0.
	\end{gathered}
\end{equation}
\subsubsection*{Case $1$}
For equilibrium distributions that satisfy
$ 2 \pi_{0}\geq m_{0} \pi_{1} $
and
$  m_{0} \pi_{1}^{2} \geq 2 m_{1} \pi_{0} \pi_{2}$,
the optimal value of $q$ is same as those provided in (\ref{eq:Eq201801186054}) and (\ref{eq:Eq201801186064})
and
the optimal value of $SLEM$ is
reported in Subsection \ref{sec:SymmetricTreeDepth2}, for Case $1$.
\subsubsection*{Case $2$}
For equilibrium distributions that satisfy
$2 \pi_{0}< m_{0} \pi_{1}$
and
$m_{0} \pi_{1}^{2} \geq 2 m_{1} \pi_{0} \pi_{2}$,
equation $c_0\neq 0$ leads to $m_0q_0=\pi_0$.
From this equation and (\ref{eq:Eq201808156813}),
the optimal value of $q_{0}$, $q_{1}$ and $SLEM$ are obtained as
reported in Subsection \ref{sec:SymmetricTreeDepth2}, for Case $2$.
\subsubsection*{Case $3$}
For equilibrium distributions that satisfy
$ 2 \pi_{0}\geqslant m_{0} \pi_{1}$
and
$m_{0} \pi_{1}^{2} < 2 m_{1} \pi_{0} \pi_{2}$,
equation $ C_1\neq 0$ leads to $ q_0 +m_1q_1=\pi_1$.
From this equation and (\ref{eq:Eq201808156813}),
the optimal value of $q_{0}$, $q_{1}$ and $SLEM$ are obtained as
reported in Subsection \ref{sec:SymmetricTreeDepth2}, for Case $3$.
\subsubsection*{Case $4$}
For equilibrium distributions that satisfy
$m_{0} \pi_{1} > 2 \pi_{0}$
and
$2 m_{1} \pi_{0} \pi_{2}> m_{0} \pi_{1}^{2}  $,
equations $c_0\neq 0$ and $c_1\neq 0$ lead to $m_0q_0=\pi_0$ and $q_0+m_1q_1=\pi_1$.
Hence optimal value of $q_0$ and $q_1$ are
$q_0=\frac{\pi_0}{m_0}$ and $q_1=\frac{m_0\pi_1-\pi_0}{m_0m_1}$, respectively.
If the resultant optimal $q_0$ and $q_1$ does not satisfy (\ref{eq:Eq201808156813}),
then either $a_{0}, a_{1} = 0$ or $b_{0}, b_{1} = 0$.
From (\ref{eq:Eq20171114916}), it is obvious that $a_{0}, a_{1} = 0$ results in $b_{0}, b_{1} = 0$, which is not acceptable.
Thus, $b_{0}, b_{1} = 0$ and $a_{0}, a_{1} \neq 0$ is the only acceptable case,
and
from (\ref{eq:Eq201808146788a}),
the optimal value of $SLEM$ is obtained as
reported in Subsection \ref{sec:SymmetricTreeDepth2}.
Note that if the resultant optimal $q_0$ and $q_1$ satisfy (\ref{eq:Eq201808156813}), then both equations in (\ref{eq:Eq201808146788}) hold true and thus the above result for optimal $SLEM$ is still true.

\section{Complete Cored Symmetric (CCS) Star}
\label{sec:CCSStar-Appendix}
Here, the FMRMC problem over CCS Star topology with symmetric equilibrium distribution is addressed.
This topology is explained in Subsection \ref{sec:CCSStar}.

We associate vertex $( i, j )$ with vector $\boldsymbol{e}( i, j )$ for $i=1, ..., m$, $j=0, ..., n$,
where each of these vectors is a column vector of length $|\mathcal{V}|$ with all elements equal to zero, except the element corresponding to their associated vertex.
Based on these vectors, for matrix $\boldsymbol{D}$ we have
$\boldsymbol{D}  = \sum_{j=0}^{n} \pi_{j} \sum_{i=1}^{m} \boldsymbol{e}( i, j )  \boldsymbol{e}( i, j )^{T}$.
Also, the symmetric Laplacian can be written as
$\boldsymbol{L}\left( q \right)  =  q_{0} \sum_{ i \neq i^{'}=1 }^{m} ( \boldsymbol{e}(i,0) - \boldsymbol{e}(i^{'},0) ) ( \boldsymbol{e}(i,0) - \boldsymbol{e}(i^{'},0) )^{T}  +  \sum_{j=1}^{n}  q_{j} \sum_{i=1}^{1}  ( \boldsymbol{e}( i, j - 1 ) - \boldsymbol{e}( i, j ) )   ( \boldsymbol{e}( i, j - 1 ) - \boldsymbol{e}( i, j ) )^{T}$.
We define new basis as
$\widetilde{\boldsymbol{e}}( \alpha, j )  =  \frac{ 1 }{ \sqrt{m} }  \sum_{\beta = 1}^{m} \omega^{ i \beta }  \boldsymbol{e}( \beta, j )$,
with $\omega = e^{ i \frac{ 2 \pi }{ m } }$ for $\alpha = 0, ..., m-1$ and $j = 0, ..., n$.
In the new basis,
the operator $\boldsymbol{I} - \boldsymbol{D}^{-\frac{1}{2}} \boldsymbol{L}(q) \boldsymbol{D}^{-\frac{1}{2}}$ is transformed into the block diagonal form,
with $i$-th block equal to 
$\boldsymbol{I}  -  \boldsymbol{D}_{r}^{-\frac{1}{2}} \boldsymbol{L}_{i}(q)  \boldsymbol{D}_{r}^{-\frac{1}{2}}$,
where
$\boldsymbol{D}_{r}  =  diag( \pi_{0}, \pi_{1}, ..., \pi_{n} )$,
and
$\boldsymbol{L}_{0}(q)  =  \sum_{i=1}^{n} q_{i} ( \boldsymbol{e}_{i} - \boldsymbol{e}_{i+1} ) ( \boldsymbol{e}_{i} - \boldsymbol{e}_{i+1} )^{T}$
and $\boldsymbol{L}_{i}(q) = \boldsymbol{L}_{0}(q) + m q_0 \boldsymbol{e}_{1} \boldsymbol{e}_{1}^{T}$,
for $i=1, ..., m-1$.
$\boldsymbol{e}_{i}$ is $(n+1) \times 1$ column vector with one in the $i$-th position and zero elsewhere.
Considering above relation between $\boldsymbol{L}_{0}(q)$ and $\boldsymbol{L}_{i}(q)$ for $i=1, ..., m-1$ and using the Courant-Weyl inequalities theorem \cite{CourantWeylBook1980},
the following can be stated between the eigenvalues of $\boldsymbol{L}_{0}(q)$ and $\boldsymbol{L}_{i}(q)$,
\begin{equation}
	\label{eq:Eq201711182008}
	\begin{gathered}
			\lambda_{n} \left(   \boldsymbol{L}_{i}(q)   \right)
			\leq
			\lambda_{n} \left(   \boldsymbol{L}_{0}(q)   \right)
			\leq
			\cdots
			\leq
			\lambda_{1} \left(   \boldsymbol{L}_{i}(q)   \right)
			\leq
			\lambda_{1} \left(   \boldsymbol{L}_{0}(q)   \right)
			=
			1
	\end{gathered}
\end{equation}
From (\ref{eq:Eq201711182008}), it can be concluded that the second largest and the smallest eigenvalues of
$\boldsymbol{I}  -  \boldsymbol{D}^{-\frac{1}{2}} \boldsymbol{L}(q)  \boldsymbol{D}^{-\frac{1}{2}}$
are the largest and smallest eigenvalues of
$\boldsymbol{I}  -  \boldsymbol{D}_{r}^{-\frac{1}{2}} \boldsymbol{L}_{i}(q)  \boldsymbol{D}_{r}^{-\frac{1}{2}}$.
Hence, the problem reduces to the FMRMC problem over a Path topology with a self loop
on one end
with
$m q_0$.
Thus, the recursive solution presented in Section \ref{sec:pathtopology} for Path topology can be adopted,
which results in the optimal
results
$q_{j} = \frac{ \pi_{j} \pi_{j-1} }  { \pi_{j} + \pi_{j-1} }$ for $j=1, ..., n$
and $q_{0} = \pi_{0} / m$, given that $(m-1) \pi_{1} \leq \pi_{0}$ and the constraint (\ref{eq:Eq201711171284}) are satisfied.

\bibliography{ArXiv_Format}

\begin{thebibliography}{10}

\bibitem{AldousBook2}
David Aldous and James~Allen Fill.
\newblock {\em Reversible {M}arkov Chains and Random Walks on Graphs}.
\newblock 2002.

\bibitem{Allisona2011}
M.~Allisona and B.~L. Shader.
\newblock Fastest mixing {M}arkov chain problem for the union of two cliques.
\newblock {\em Linear and Multilinear Algebra}, 59:801--823, Jan 2011.

\bibitem{BoydConvexBook}
S.~Boyd and L.~Vandenberghe.
\newblock {\em Convex Optimization}.
\newblock Cambridge University Press, New York, NY, USA., 2004.

\bibitem{Boyd2005SymmetryAnalysisFMMC}
Stephen Boyd, Persi Diaconis, Pablo Parrilo, and Lin Xiao.
\newblock Symmetry analysis of reversible {M}arkov chains.
\newblock 2(1):31--71, July 2005.

\bibitem{BoydFastestMixing2009}
Stephen Boyd, Persi Diaconis, Pablo~A. Parrilo, and Lin Xiao.
\newblock Fastest mixing {M}arkov chain on graphs with symmetries.
\newblock {\em SIAM J. Optim}, 20:792--819, 2009.

\bibitem{Boyd2006FMMCPath}
Stephen Boyd, Persi Diaconis, Jun Sun, and Lin Xiao.
\newblock Fastest mixing {M}arkov chain on a path.
\newblock 113(1):70--74, Jan 2006.

\bibitem{BoydFastestmixing2003}
Stephen Boyd, Persi Diaconis, and Lin Xiao.
\newblock Fastest mixing {M}arkov chain on a graph.
\newblock {\em SIAM REVIEW}, 46:667--689, 2004.

\bibitem{Cihan2015}
Onur Cihan and Mehmet Akar.
\newblock Fastest mixing reversible {M}arkov chains on graphs with degree
  proportional stationary distributions.
\newblock {\em IEEE Transactions on Automatic Control}, 60(1):227--232, 2015.

\bibitem{CourantWeylBook1980}
D.~M. Cvetkovic, M.~Doob, and H.~Sachs.
\newblock {\em Spectra of Graphs: Theory and Application}.
\newblock Academic Press: New York, 1980.

\bibitem{Diaconis1991}
P.~Diaconis and D.~Stroock.
\newblock Geometric bounds for eigenvalues of {M}arkov chains.
\newblock {\em The Annals of Applied Probability}, pages 36--61, 1991.

\bibitem{7469382}
K.~Doty, S.~Roy, and T.~R. Fischer.
\newblock Explicit state-estimation error calculations for flag hidden {M}arkov
  models.
\newblock {\em IEEE Transactions on Signal Processing}, 64(17):4444--4454, Sept
  2016.

\bibitem{Fill2013}
James~Allen Fill and Jonas Kahn.
\newblock Comparison inequalities and fastest-mixing {M}arkov chains.
\newblock {\em Ann. Appl. Probab.}, 23(5):1778--1816, 2013.

\bibitem{Gallian2007FriendshipGraph}
J.~A. Gallian.
\newblock Dynamic survey ds6: Graph labeling.
\newblock {\em Electronic J. Combinatorics, DS6}, pages 1--58, Jan 2007.

\bibitem{Gjoka2011}
M.~Gjoka, M.~Kurant, C.~T. Butts, and A.~Markopoulou.
\newblock Practical recommendations on crawling online social networks.
\newblock {\em IEEE J. Sel. Areas Commun.}, 29(9):1872--1892, October 2011.

\bibitem{Godsil2004}
Chris Godsil and Gordon Royle.
\newblock {\em Algebraic Graph Theory}.
\newblock Springer, 2004.

\bibitem{Golub1989}
G.~H. Golub and C.~F.~Van Loan.
\newblock {\em Matrix Computations, 2nd ed.}
\newblock Baltimore, MD: Johns Hopkins Univ. Press, 1989.

\bibitem{HAEMERS1995593}
Willem~H. Haemers.
\newblock Interlacing eigenvalues and graphs.
\newblock {\em Linear Algebra and its Applications}, 226-228:593 -- 616, 1995.
\newblock Honoring J.J.Seidel.

\bibitem{JafarizadehIEEESensors2011}
S.~Jafarizadeh and A.~Jamalipour.
\newblock Fastest distributed consensus problem on fusion of two star sensor
  networks.
\newblock {\em IEEE Sensors Journal}, 11(10):2494--2506, Oct 2011.

\bibitem{Jafarizadeh2011}
S.~Jafarizadeh and A.~Jamalipour.
\newblock Fastest mixing {M}arkov chain on symmetric k-partite sensor networks.
\newblock In {\em IEEE IWCMC}, pages 1883--1888, July 2011.

\bibitem{SaberThesis2015}
Saber Jafarizadeh.
\newblock {\em Distributed Coding and Algorithm Optimization for Large-Scale
  Networked Systems}.
\newblock PhD thesis, University of Sydney, 2015.

\bibitem{JAFARIZADEH201913}
Saber Jafarizadeh.
\newblock Fastest mixing reversible {M}arkov chain on friendship graph:
  Trade-off between transition probabilities among friends and convergence
  rate.
\newblock {\em Systems \& Control Letters}, 130:13 -- 22, 2019.

\bibitem{Kirkpatrick1983}
S.~Kirkpatrick, C.~D. Gelatt, and M.~P. Vecchi.
\newblock Optimization by simulated annealing.
\newblock {\em science}, 220(4598):671--680, 1983.

\bibitem{Lawler1988}
G.~F. Lawler and A.~D. Sokal.
\newblock Bounds on the $l^2$ spectrum for {M}arkov chains and {M}arkov
  processes: a generalization of cheegers inequality.
\newblock {\em Transactions of the AMS}, 309(2):557--580, 1988.

\bibitem{Lin2007}
Y.~Lin, B.~Liang, and B.~Li.
\newblock Data persistence in large-scale sensor networks with decentralized
  fountain codes.
\newblock In {\em IEEE INFOCOM}, pages 1658--1666, May 2007.

\bibitem{Metropolis1953}
N.~Metropolis, A.~W. Rosenbluth, M.~N. Rosenbluth, A.~H. Teller, and E.~Teller.
\newblock Equation of state calculations by fast computing machines.
\newblock {\em The journal of chemical physics}, 21(6):1087--1092, 1953.

\bibitem{Stutzbach2009}
Daniel Stutzbach, Reza Rejaie, Nick Duffield, Subhabrata Sen, and Walter
  Willinger.
\newblock On unbiased sampling for unstructured peer-to-peer networks.
\newblock {\em IEEE/ACM Trans. Netw.}, 17(2):377--390, April 2009.

\bibitem{Vukobratovic2010}
D.~Vukobratovic, C.~Stefanovic, V.~Crnojevic, F.~Chiti, and R.~Fantacci.
\newblock Rateless packet approach for data gathering in wireless sensor
  networks.
\newblock {\em IEEE J. Sel. Areas Commun.}, 28(7):1169--1179, Sep 2010.

\bibitem{7492191}
X.~Wang, W.~Ni, K.~Zheng, R.~P. Liu, and X.~Niu.
\newblock Virus propagation modeling and convergence analysis in large-scale
  networks.
\newblock {\em IEEE Trans. Inf. Forensics Security}, 11(10):2241--2254, Oct
  2016.

\bibitem{Seismic2015}
L.~Zhao, W.~Z. Song, L.~Shi, and X.~Ye.
\newblock Decentralised seismic tomography computing in cyber-physical sensor
  systems.
\newblock {\em Cyber-Physical Systems}, 1(2-4):91--112, July 2015.

\bibitem{Zhong2006}
M.~Zhong and K.~Shen.
\newblock Popularity-biased random walks for peer-to-peer search under the
  square-root principle.
\newblock In {\em IPTPS}, Feb 2006.

\bibitem{Zhong2008}
M.~Zhong, K.~Shen, and J.~Seiferas.
\newblock The convergence-guaranteed random walk and its applications in
  peer-to-peer networks.
\newblock {\em IEEE Transactions on Computers}, 57(5):619--633, May 2008.

\end{thebibliography}

\end{document}